\definecolor{cobalt}{rgb}{0.0, 0.25, 0.57}
\definecolor{crimson}{rgb}{0.86, 0.08, 0.24}
\definecolor{darkgreen}{rgb}{0.0, 0.4, 0.0}
\newtheorem{theorem}{Theorem}
\newtheorem{lemma}[theorem]{Lemma}
\newtheorem{assumption}{Assumption}
\crefname{assumption}{Assumption}{Assumptions}
\newtheorem{proposition}[theorem]{Proposition}	
\newtheorem{corollary}[theorem]{Corollary}
\renewcommand{\cal}[1]{\mathcal{#1}}
\newcommand{\del}{\partial}
\newcommand{\eps}{\varepsilon}
\def\R{\mathbb{R}} 
\newcommand{\diam}{\mathop{\rm diam}}
\newcommand\numberthis{\addtocounter{equation}{1}\tag{\theequation}}
\def\E{\mathbb{E}} 
\def\P{\mathbb{P}} 
\newcommand{\var}{{\rm Var}} 
\newcommand{\normal}{\mathcal{N}} 
\newcommand{\dto}{\stackrel{d}{\longrightarrow}}
\newcommand{\Pto}{\rightarrow_p}
\def\o{o_p}
\def\O{\mathcal{O}_p}
\providecommand{\argmin}{\mathop{\rm argmin}}
\newcommand{\wh}[1]{\widehat{#1}} 
\newcommand{\wt}[1]{\widetilde{#1}} 
\newcommand{\mc}[1]{\mathcal{#1}} 
\DeclareMathAlphabet{\mathmybb}{U}{bbold}{m}{n}
\newcommand{\1}{\text{$\mathbf{1}$}} 
\newcommand{\ind}[1]{\1{\left\{#1\right\}}} 
\newcommand{\inds}[1]{\1_{#1}} 
\providecommand{\mmz}{\mathop{\rm{minimize}}} 
\providecommand{\mxz}{\mathop{\rm{maximize}}} 
\def\t{\tau_c}
\def\g{\gamma} 
\def\mtd{1} 
\newcommand{\rhos}{\mc{R}}
\newcommand\blfootnote[1]{%
  \begingroup
  \renewcommand\thefootnote{}\footnote{#1}%
  \addtocounter{footnote}{-1}%
  \endgroup
}
\newcommand{\ag}[1]{{#1}}
\newcommand{\good}[1]{\textcolor{darkgreen}{\textbf{#1}}}
\newcommand{\better}[1]{\textcolor{blue}{\textbf{#1}}}
\begin{document}

\allowdisplaybreaks
\sloppy
\setlength{\tabcolsep}{5pt}

\title{PLRD: Partially Linear Regression Discontinuity Inference}

\author{Aditya Ghosh\\
  \texttt{ghoshadi@stanford.edu}
  \and 
  Guido Imbens\\
  \texttt{imbens@stanford.edu}
  \and 
  Stefan Wager\\
  \texttt{swager@stanford.edu}
}


\date{Stanford University\\[1em] \today}
\maketitle

\begin{abstract}
Regression discontinuity designs have become one of the most popular research designs in empirical economics.
We argue, however, that widely used approaches to building confidence intervals in regression discontinuity designs exhibit
suboptimal behavior in practice: In a simulation study calibrated to high-profile applications of
regression discontinuity designs, existing methods either have systematic under-coverage or
have wider-than-necessary intervals. We propose a new approach, partially linear regression
discontinuity inference (PLRD), and find it to address shortcomings of existing methods: Throughout our
experiments, confidence intervals built using PLRD are both valid and short. We also provide large-sample
guarantees for PLRD under smoothness assumptions.
\end{abstract}

\section{Introduction}

The fraction\blfootnote{\hspace{-6mm}We are grateful to Evan Munro for sharing results on the behavior of
existing methods for RDDs under generative adversarial simulation following \citet{athey2021using}.
We are also grateful to Tim Armstrong, P.~Aronow, Max Farrell, Paul Goldsmith-Pinkham, Michal Koles\'ar,
Christoph Rothe and Roc\'io Titiunik for insightful conversations and for comments on a previous version of this paper.
This work was supported by the National Science Foundation under grant SES-2242876,
the Office of Naval Research under grants N00014-17-1-2131 and N00014-19-1-2468,
and a gift from Amazon. Software and replication files are available at \url{https://github.com/ghoshadi/plrd}.}
of empirical NBER working papers using regression discontinuity designs (RDDs) increased from nearly zero in
1995 to between  15\% and 20\% by 2015 \citep[Figure 4B]{currie2020technology}.\footnote{\citet{goldsmith2024tracking} maintains up-to-date numbers at \url{https://paulgp.com/econlit-pipeline}.}
Despite the ubiquity of RDDs in
applied work, however, there are still open questions as to econometric best practices for inference in RDDs.
In particular,  we find that widely used methods systematically
fail to achieve nominal coverage.

\begin{table}[t]
\small
\centering\renewcommand{\arraystretch}{1}
\begin{tabular}{@{}lcccccccccc@{}}
\toprule
\multirow{2}{*}{}  &\multicolumn{2}{c}{conventional}  & \multicolumn{2}{c}{\texttt{rdrobust}} & \multicolumn{2}{c}{\texttt{RDHonest}} & \multicolumn{2}{c}{\texttt{plrd}}\\ 
\cmidrule(lr){2-3} \cmidrule(lr){4-5} \cmidrule(lr){6-7} \cmidrule(lr){8-9} 
\textbf{Dataset} & \textbf{EC} & \textbf{width} & \textbf{EC} & \textbf{width} & \textbf{EC} & \textbf{width} & \textbf{EC} & \textbf{width} \\ \midrule
\citet{lee2008}  & 94.4\% & 0.037 & \good{94.8\%} & \better{0.044} & \good{97.3\%} & 0.056 & \good{98.1\%} & \textbf{0.045} \\
\citet{cattaneo2015senate} & {94.6\%} & 5.003 & \good{95.0\%}  & 5.884 & \good{97.5\%} &  6.796 & \good{96.1\%} & \better{3.870} \\
\citet{LudwigMiller} & {94.4\%} & 5.954 & \good{94.9\%} & 7.053 & \good{97.8\%} & 7.808 & \good{96.8\%} & \better{4.272}\\
\citet{meyersson2014islamic} & 93.8\% & 6.301  & 94.5\%   & 7.431 & \good{96.8\%} &  7.869 & \good{96.3\%}   &  \better{4.618}\\
\citet{matsudaira} (read.)& 86.4\%  &  0.050   & 92.2\%   & 0.059  & \good{97.9\%}  &  0.103 & \good{97.7\%} & \better{0.059} \\ 
\citet{matsudaira} (math)& 92.0\%  &   0.046  & 93.4\% &  0.054 &  \good{98.3\%} &   0.094 &  \good{97.6\%} & \better{0.052}\\
\citet{jacob2004} (read.)& 94.2\% &  0.068 & 94.3\% &     0.082  & \good{97.9\%}  &  0.114 & \good{97.5\%} & \better{0.076}  \\
\citet{jacob2004} (math) & 94.4\% & 0.059   & \good{95.3\%} & 0.071 & \good{98.2\%}  &  0.124 & \good{97.9\%} & \better{0.061}\\
\citet{Oreopoulos}  & 93.5\% & 0.189 & {94.6\%} & 0.255  & \good{98.8\%} &   0.303   & \good{96.2\%} & \better{0.124}\\
\bottomrule
\end{tabular}
\caption{Empirical coverage (EC) and (average) width of 95\% confidence intervals from different methods,
on simulated RDDs calibrated to nine applications. Following \citet{athey2021using} we generate our
simulation distributions by training a Wasserstein Generative Adversarial Neural Network (WGAN) on the
original data. We report results for the conventional local linear regression with bandwidth chosen as recommended in \citet{imbens2012optimal}; robust bias-corrected (\texttt{rdrobust}) confidence intervals of
\citet{CCT2014robustCI}; and bias-aware (\texttt{RDHonest}) confidence intervals as proposed in \citet{ArmstrongKolesar2018}. We also report results for our proposed \texttt{plrd} method. We highlight (in \good{green}) the entries whose coverage is not statistically
significantly (i.e., at the 5\% level using a one-sided binomial test) below nominal 95\% coverage and (in \better{blue})
the entries with the shortest widths among those delivering valid coverage. All results are aggregated across
 10,000 simulation replications.}
\label{tab:wgan}
\end{table}

We assess the performance of three commonly used  methods by evaluating them on simulations calibrated to nine
high-profile applications of RDDs.
First, we consider conventional local linear regression \citep*{Hahn-et-al-2001} with
bandwidth chosen as in \citet{imbens2012optimal}. 
We also consider two more recent proposals  by \citet*{CCT2014robustCI} and
\citet{ArmstrongKolesar2018}, as implemented in the \texttt{R} functions \texttt{rdrobust}
and \texttt{RDHonest} respectively. These methods are representative of the current state of the art for inference in RDDs
\citep*{stommes2023reliability}, and both offer automated end-to-end confidence interval constructions that can
directly be applied to raw data. The method \texttt{rdrobust} by \citet{CCT2014robustCI} is
a dominant approach to regression-discontinuity inference in current econometric practice.\footnote{This
is the only general purpose regression discontinuity package listed in the ``Econometrics
Task View'' for highly used \texttt{R} packages on CRAN (\url{https://cran.r-project.org/web/views/Econometrics.html}); and
both the \texttt{R} \citep{calonico2015rdrobust} and Stata \citep{calonico2017rdrobust} versions of the function
are widely used and the original paper \citep{CCT2014robustCI} cited is over 4500 times as of now.}
The method starts by running a local linear regression as recommended by \citet*{Hahn-et-al-2001}, and then
applies a 2nd-order bias correction to mitigate curvature bias.\footnote{The
method \texttt{rdrobust} offers two options for building confidence intervals: A basic
bias-corrected method which inherits the standard error estimator from the original local linear regression (and
is justified by asymptotic arguments), and a robust method that incorporates a finite-sample variance
inflation to account for the bias correction. The latter is the recommended one in the paper and this is the one we consider here.}
Meanwhile, the method \texttt{RDHonest} by \citet{ArmstrongKolesar2018} explicitly inflates the
width of local linear regression confidence intervals to accommodate worst-case plausible curvature
bias (rather than trying to correct for this bias).\footnote{In principle, the
method \texttt{RDHonest} asks the user to specify a worst-case bound on the curvature of the
underlying conditional response functions up front. However, the method also has a default
fall-back option where curvature bounds are derived using the rule-of-thumb of
\citet{ArmstrongKolesar2020}; in our experiments we use this default.}

To ensure that the evaluations of these methods are credible, we follow \citet*{athey2021using} and use generative adversarial
neural networks \citep{goodfellow2014generative} to generate simulation instances that are representative of a set of seven seminal regression discontinuity
studies.\footnote{Two of the studies have two outcomes, so we have a  total of nine applications.}
The generative adversarial learning approach yields simulators whose
data-generating distribution closely mimics the moments and idiosyncrasies of the original datasets;
see \cref{fig:gan-lee} for an illustration. The generated simulators do not use
explicit parametric models and do not enforce any built-in smoothness assumptions, thus bringing
us towards a neutral comparison ground for evaluating methods that were motivated under different assumptions.

\begin{figure}[t]
\begin{center}
    \begin{tabular}{cc}
    \includegraphics[width=0.9\textwidth]{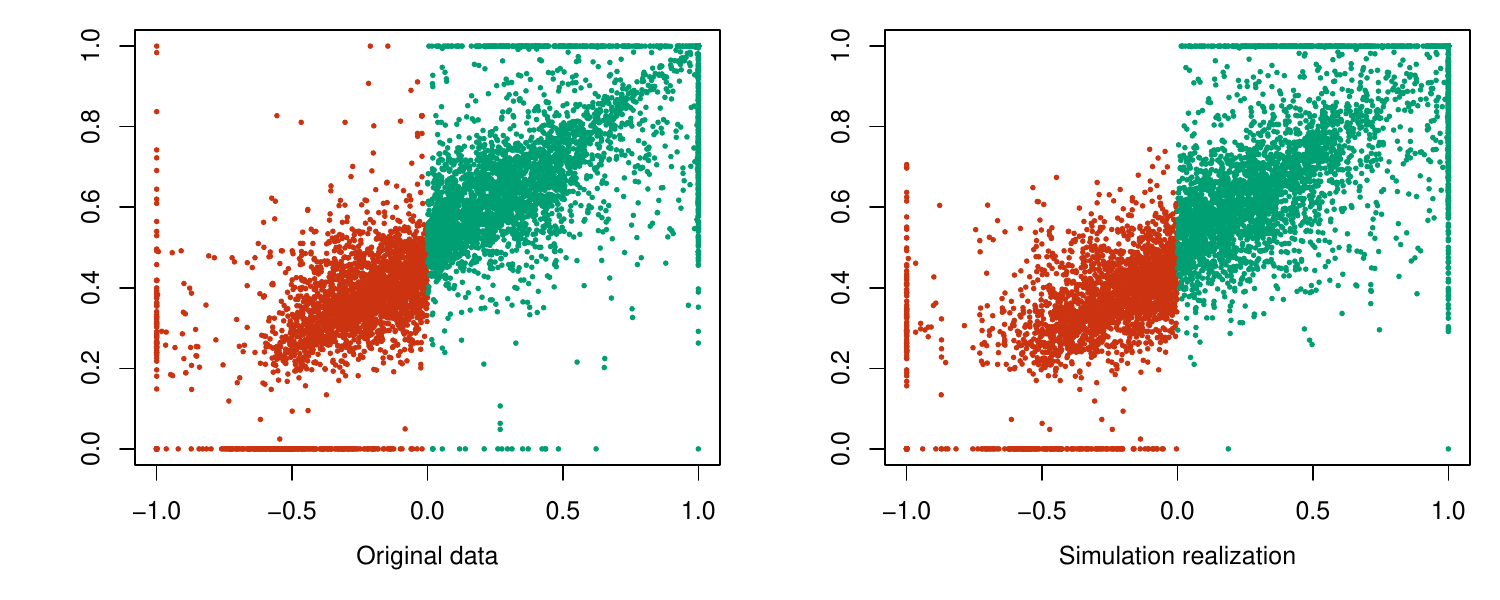}
    \end{tabular}
    \caption{The original data from one of our benchmark datasets, \citet{lee2008}, as well as a simulated dataset produced using a WGAN trained to the original data.} 
    \label{fig:gan-lee}
    \end{center}
\end{figure}

This semi-synthetic simulation setup is an important part of our argument. It comes with two qualifications. First, it relies on the availability of datasets where regression discontinuity methods are applicable. We apply it to nine such data sets, but we encourage researchers to apply it to additional ones. Second, it relies on our generative adversarial nets yielding realistic data generating processes given the provided datasets. Well-known results in non-parametric inference imply that, fixing a sample size and given any dataset where standard regression-discontinuity estimators work well, one can construct another adversarial dataset that is statistically indistinguishable from the first but where the same estimators fail to get coverage \citep{low1997nonparametric}. Our approach to evaluation implicitly relies on an assumption that the real datasets we are calibrated to aren't adversarial in this sense. Furthermore, our estimation process involves some choices of tuning parameters for which different researchers may make different choices; in our experiments we seek to mitigate concerns of researcher bias by using a single set of tuning parameters across all datasets.

We summarize results for 95\% confidence intervals
in \cref{tab:wgan}; see \cref{sec:empirical-assessment} for details on the
simulation design. A first observation is that the conventional local linear regression intervals
systematically undercover the true parameter. 
The method \texttt{rdrobust} achieves better---although not perfect---coverage, with
nominal rates in \ag{4 out of the 9 settings}. However, it does  so at the expense of wider confidence intervals than conventional local linear regression \ag{(18\% to 36\% wider in our applications)}.
\texttt{RDHonest} in turn achieves reliable coverage throughout all designs, but at the cost of still-wider
intervals: The median ratio between the width of these intervals and \ag{the \texttt{rdrobust} 
ones varies between 1.06 to 1.76 in our simulations, and between 1.25 to 2.11 relative to the conventional ones}. 
Furthermore, these confidence intervals are often excessively conservative,
thus suggesting that some of this additional width is not strictly needed.\footnote{In several settings,
\texttt{RDHonest} achieves 98\% coverage or above---and this may come at a substantial cost in width.
For example, in the classical setting with no bias adjustment, Gaussian 98\% confidence intervals
are 18.7\% wider than the conventional 95\% intervals.}
These qualitative findings do not appear to be specific to our simulation designs; for example, in \cref{tab:simulations} we observe similar coverage properties when applying the method to pure noise
with no relationship between the outcome and the running variable.

These semi-synthetic simulation findings leave
open a pragmatic question: Is it possible to design automated confidence intervals that reliably achieve nominal coverage
in realistic settings, while remaining competitive with the conventional intervals in terms of interval width? We emphasize here the ``realistic'' qualifier. Even if it is not feasible to achieve uniformly better properties, it may be that most real-world applications are sufficiently smooth to allow for nominal coverage and narrower confidence intervals than the existing methods.  

The main contribution of this paper is a method, partially linear regression discontinuity
inference (PLRD), that provides an affirmative answer to this question---at least in the context
of the numerical settings we consider. As described further below, algorithmically, the PLRD estimator
adapts the method of \citet{ArmstrongKolesar2018} to accommodate a partially linear treatment effect
model; conceptually, however, our full approach fits within the econometric
framework of \citet{CCT2014robustCI}. We implement our method in the package \texttt{plrd} for
\texttt{R} \citep{r2023}, which provides end-to-end functionality for regression discontinuity inference
without relying on the user to specify hard-to-choose tuning parameters. As illustrated in \cref{tab:wgan,tab:simulations}, our proposed \texttt{plrd} method achieves nominal and near-exact coverage throughout, and in nearly every instance yields the shortest intervals among those with valid coverage.

\section{The PLRD Estimator}\label{sec:introduce-PLRD}

Consider the sharp RDD formalized using
potential outcomes as in \citet{IL2008review}. For each unit $i = 1, \, \ldots, \, n$, we
observe IID-sampled pairs $(X_i,\, Y_i)$, where $X_i \in \R$ is the running variable and
$Y_i \in \R$ is the outcome of interest. Following the sharp discontinuity design, we assume
there exists a cutoff $c \in \R$ such that treatment is assigned as $W_i = \ind{X_i\ge c}$.
We posit potential outcomes $\{Y_i(0), \, Y_i(1)\}$ such that $Y_i = Y_i(W_i)$, and our
goal is to estimate the conditional average treatment effect at the cutoff,
$\tau_c:=\mu_{1}(c)-\mu_{0}(c)$ where $\mu_{w}(x) := \E[Y_i(w)|X_i = x]$.

It is well known that, provided the $\mu_{w}(x)$ are continuous in $x$ and that the
running variable has continuous support around $c$, we can identify our target parameter
as a discontinuity in the conditional response surface at $X_i = c$ \citep{Hahn-et-al-2001}:
\begin{equation}\label{eqn:tau-definition}
\tau_c = \lim_{x\downarrow c} \mathbb{E}[Y_i \,|\, X_i = x] -\lim_{x\uparrow c}\mathbb{E}[Y_i \,|\, X_i = x].
\end{equation}
In order to obtain practical estimation or inference results for $\tau_c$, though, one
needs to make further regularity assumptions on the $\mu_{w}(x)$; our PLRD method relies
on the following two assumptions.

\begin{assumption}
\label{assu:smooth}
 $\mu_w(x)$ are 3 times continuously differentiable in $x$.
\end{assumption}

\begin{assumption}
\label{assu:part_lin}
The conditional average treatment effect function $\tau(x) = \mu_{1}(x)-\mu_{0}(x)$
has a linear dependence in $x$, i.e., $\tau(x) = \tau_c + \beta (x - c)$ for some
$\beta \in \R$.
\end{assumption}

\cref{assu:smooth} is a standard smoothness assumption; this is exactly the
same as the main smoothness assumption used by \citet{CCT2014robustCI} to justify
\texttt{rdrobust}. \cref{assu:part_lin} implies that the treatment effect
has a simpler dependence on the running variable than the baseline effect $\mu_{0}(x)$.
This assumption is well in line with assumptions often made in the literature on
heterogeneous treatment effect estimation \citep[e.g.,][Chapter 4]{wager2024causal};
however, it has not typically been used in the RDD literature to date, which is why
we emphasize this assumption in naming our method. Given these two assumptions, PLRD
proceeds as follows:
\begin{enumerate}
\item\label{step(1)} Under \cref{assu:smooth} there exists a finite constant $B$
such that the second derivatives $\mu''_w(x)$ are $B$-Lipschitz in a neighborhood
of $c$; and it is possible to {identify}  such a $B$ from data in the large-sample limit.
Following \citet{imbens2012optimal} and \citet{CCT2014robustCI}, PLRD starts by estimating
$B$ via a cubic regression.\footnote{We use sample splitting \cref{algo:PLRD} to
justify technical arguments underlying formal results given in \cref{sec:formal_considerations}. It
is possible, however, that an alternative argument could be used to justify similar results without
sample splitting; see, e.g., \citet[Appendix E]{ArmstrongKolesar2020} for results of this type.}
\item\label{step(2)} We form the set of all possible choices of $\mu_{0}(x)$ and $\mu_{1}(x)$ consistent
with the $B$-Lipschitz curvature bound estimated above and \cref{assu:part_lin}. Then,
following \citet{ArmstrongKolesar2018} and \citet{optrdd2019} we use numerical convex optimization
to derive a minimax linear estimator\footnote{Any estimator of the form
\smash{$\wh{\tau} = \sum_{i = 1}^n \gamma_i Y_i$} where the weights $\gamma_i$ only depend
on the realization of the running variables $X_i$ is called a ``linear'' estimator;
the minimax linear estimator is the estimator of this form with the minimax mean-squared
error conditionally on the $X_i$. The minimax linear estimation approach was introduced
by \citet{Donoho1994}, and first applied to RDDs by \citet{ArmstrongKolesar2018}. For a
textbook overview of the minimax linear estimation approach to inference for RDDs, see
\citet[Chapter 8.2]{wager2024causal}.}
for $\tau_c$ within this class. 
\item\label{step(3)} We identify the worst-case bias of our estimator within this class, and construct
``bias aware'' confidence intervals for $\tau_c$ using the construction of
\citet{ImbensManski2004} that simulatenously accounts for the standard error and
potential bias of an estimator.
\end{enumerate}
Our full procedure---including some extra algorithmic details such as cross fitting---is
detailed as \cref{algo:PLRD}. As a pragmatic safeguard against failures of
\cref{assu:part_lin}, we start our procedure with a hypothesis test, and if we reject
this assumption with high confidence then we proceed with a variant of our method that
only relies on \cref{assu:smooth}; see \cref{sec:cuv_change} for further discussion.

\begin{algorithm}[p]
    \caption{Partially Linear Regression Discontinuity Inference (PLRD)}\label{algo:PLRD}
    This algorithm provides asymptotically valid $100(1-\alpha)\%$ confidence intervals for the conditional average treatment effect $\tau_c$ in a sharp RDD. This procedure is implemented in the \texttt{R} package \texttt{plrd}.

    \textbf{Input:} Data $\{(X_i,Y_i)\}_{i=1}^n$, the cutoff $c$, and the confidence level $1-\alpha$ (default is set as $95\%$).

    \textbf{Tuning Parameters:} Window size $\ell$ for estimating the smoothness parameter $B$ (default is set as the whole range), a lower bound $\varepsilon$ for $\wh B$ (default is $\varepsilon = {\operatorname{SD}[Y]} \, / \, 100$), \ag{and a confidence level $1-\alpha'$ for curvature detection (default is set as $99.9\%$)}. 

    \textbf{Output:} The PLRD estimator $\wh{\tau}_{\texttt{PLRD}}$ with associated $1-\alpha$ confidence interval $\mc{I}_\alpha$.

    \begin{algorithmic}[1]
\State Discard all observations with $|X_i - c| > \ell$.
    \State\label{anovatest}\ag{Perform an ANOVA test between cubic polynomials fitted with or without different curvatures at significance level $1-\alpha'$ to detect whether different curvatures should be used.}
    \State Randomly split the data into two near-equally sized folds, with index sets $I_{1}$ and $I_{2}$. 
        \State Estimate $\sigma^2$ by running the ordinary least-squares regression of $Y_i$ on the interaction of $X_i$ and $W_i$ for $i\in I_2$. Denote this estimate as $\wh{\sigma}^2_{(2)}$ and record the residuals from this regression as:
    \begin{equation}\label{eq:ri}
        r_i:=Y_i-\wh\mu_{W_i}(X_i)\text{ for each }i\in I_2.
    \end{equation}
\If{\ag{the test for different curvatures in Step~\ref{anovatest} is \emph{not} rejected}}

  \State\label{first-within-if-loop}{Use the observations in fold $I_{2}$ to estimate the Lipschitz constant $B$ by fitting cubic}
  \State \quad  polynomials \ag{with the same curvature} as in~\eqref{eq:poly}, and denote this estimate as $\wh{B}_{(2)}$.
  
  \State Enforce a lower bound $\wh{B}_{(2)} \leftarrow \max\{\wh{B}_{(2)}, \, \varepsilon \}$.

  \State\label{last-within-if-loop}Let $\rhos:=\{f: f(0)=f'(0)=f''(0)=0, f'' \text{ is } \mtd\text{-Lipschitz}\}$, and solve the following optimization problem via quadratic programming (see~\cref{sec:implementation} for details):
    \begin{equation}
    \label{eq:point_est}
    (\wh\g_{(1)},\wh{t}_{(1)}):= \mmz_{\gamma,\,t} \left(\wh{B}^2_{(2)}t^2 +\wh{\sigma}^2_{(2)} \sum_{i\in I_{1}} \g_i^2\right)\quad \text{subject to}\quad \sup_{\rho\in \rhos} \ \sum_{i\in I_{1}} \g_i \rho(X_i-c)\le t.
    \end{equation}
    \State Repeat Steps \ref{first-within-if-loop} to \ref{last-within-if-loop} with the roles of $I_1$ and $I_2$ swapped.
\Else
  \State\label{first-within-else}{Use the observations in fold $I_{2}$ to estimate the Lipschitz constant $B$ by fitting cubic}
  
  \State \quad \ag{polynomials with different curvatures} (cf.~\cref{sec:cuv_change}), and denote this estimate as $\wh{B}_{(2)}$.

  \State Enforce a lower bound $\wh{B}_{(2)} \leftarrow \max\{\wh{B}_{(2)}, \, \varepsilon \}$.

  \State\label{last-within-else}Let $\rhos:=\{f: f(0)=f'(0)=f''(0)=0, f'' \text{ is } \mtd\text{-Lipschitz}\}$, and solve the following optimization problem via quadratic programming (see~\cref{sec:implementation} for details):
    \begin{equation}
    \label{eq:point_est_diffcurv}
    \ag{(\wh\g_{(1)},\wh{t}_{(1)}):= \mmz_{\gamma,\,t} \left(\wh{B}^2_{(2)}t^2 +\wh{\sigma}^2_{(2)} \sum_{i\in I_{1}} \g_i^2\right)\quad \text{subject to}\quad \sup_{\rho_0,\rho_1\in \rhos} \ \sum_{i\in I_{1}} \g_i \rho_{W_i}(X_i-c)\le t.}
    \end{equation}
    \State Repeat Steps \ref{first-within-else} to \ref{last-within-else} with the roles of $I_1$ and $I_2$ swapped.
\EndIf

    \State Set $\wh\g_i = \g_{(k),i}/2$ if $i\in I_k$ ($k=1,2$), and $\wh\g_i=0$ if $|X_i-c|>\ell$.
    Compute $\wh\tau_{\texttt{PLRD}}=\sum_{i=1}^n \wh\g_i Y_i$, and $\wh{s}^2(\wh\g) = \sum_{i=1}^n \wh\g_i^2r_i^2$ where the $r_i$ are the residuals from the least squares regressions in \eqref{eq:ri}.
    
    \State Report the following $100(1-\alpha)\%$ confidence interval for $\t$: 
    \begin{equation}
    \label{eq:alg_CI}
    \cal{I}_\alpha := [\wh{\tau}_{\texttt{PLRD}}-\wh{h}_\alpha,\,\wh\tau_{\texttt{PLRD}}+\wh{h}_\alpha],\quad \wh{h}_\alpha:= \min \,\{h: \P(|b+\wh{s}(\wh\g) Z| \leq h) \geq 1-\alpha \text { for all }|b| \leq \wh{b}\,\},
    \end{equation}
    where $Z \sim \mathcal{N}(0,1)$, and $\wh{b}:=\big(\wh{B}_{(2)}\wh{t}_{(1)}+\wh{B}_{(1)}\wh{t}_{(2)}\big)/2$.
    \end{algorithmic}
\end{algorithm}

Our main formal result about the PLRD estimator is that it yields
valid confidence intervals under our assumptions in the large-sample
limit.\footnote{In our formal results, we assume a continuous running variable with
support near the cutoff. In the case of discrete running variables, steps \ref{step(2)} and \ref{step(3)}
of our procedure remain valid, i.e., given an appropriate $B$ they yield partial
identification intervals for $\tau_c$ with a guarantee of the type \eqref{eqn:asymp-valid}
\citep{optrdd2019,KolesarRothe2018}. However, when $X_i$ has discrete support, step \ref{step(1)} is
no longer guaranteed to consistently recover $B$ under \cref{assu:smooth} alone;
and so further assumptions would be required to justify data-driven detection of $B$.}

\begin{theorem}\label{thm:validity}
Consider a sharp regression discontinuity design under \cref{assu:smooth,assu:part_lin},
where the running variable $X_i$ has a strictly positive density at the threshold $c$, 
$\sigma_i^2=\var(Y_i\mid X_i)$ is bounded away from $0$, i.e., $\sigma_i\ge \sigma_{\min}$ for some $\sigma_{\min}>0$, and $\E[|Y_i - \mu_{W_i}(X_i)|^q \mid X_i = x] \le C$ uniformly over all $x$, for some $q > 2$ and $C\in (0,\infty)$.
Suppose furthermore that we execute the initial cubic regression used to pick
$B$ over a pilot bandwidth of width $\ell$ with
\begin{equation}
\label{eq:ell_decay}
\lim_{n \rightarrow \infty}   n^{1/12}\ell = 0, \ \ \ \ \lim_{n \rightarrow \infty}   n^{1/7} \ell= +\infty,
\end{equation}
and choose any $\varepsilon > 0$ in Algorithm \ref{algo:PLRD}.
Then, level-$(1 - \alpha)$ PLRD intervals $\mathcal{I}_\alpha$ are asymptotically valid: For any $0 < \alpha < 1$,
\begin{equation}
\label{eqn:asymp-valid}
    \liminf_{n\to\infty} \, \P[\tau_c \in \mathcal{I}_\alpha] \ge 1 - \alpha.
\end{equation}
\end{theorem}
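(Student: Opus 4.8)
The plan is to reduce the claim to three ingredients: (i) a bound on the conditional bias of $\wh\tau_{\texttt{PLRD}}$ in terms of the reported budget $\wh b$; (ii) a studentized central limit theorem for its sampling noise; and (iii) consistency of the variance estimate $\wh s^2(\wh\g)$. Writing $\eps_i := Y_i - \mu_{W_i}(X_i)$, decompose
\[
\wh\tau_{\texttt{PLRD}} - \tau_c \;=\; \Big(\sum_{i=1}^n \wh\g_i\,\mu_{W_i}(X_i) - \tau_c\Big) \;+\; \sum_{i=1}^n \wh\g_i\,\eps_i \;=:\; \mathrm{Bias}_n + \mathrm{Noise}_n .
\]
Since $\mathcal I_\alpha$ is exactly the Imbens--Manski interval built from $\wh b$ and $\wh s(\wh\g)$, once we establish $|\mathrm{Bias}_n|\le \wh b\,(1+\o(1))$, $\mathrm{Noise}_n/\wh s(\wh\g)\dto\normal(0,1)$, and that these hold jointly, a routine argument (continuity of the Imbens--Manski critical value in $\wh b/\wh s$, Slutsky, and letting a slack parameter $\delta\downarrow 0$) yields $\liminf_n \P[\tau_c\in\mathcal I_\alpha]\ge 1-\alpha$.

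\emph{Bias.} This is where \cref{assu:part_lin} is used. Taylor-expanding $\mu_w$ to second order at $c$, $\mu_w(x)=\mu_w(c)+\mu_w'(c)(x-c)+\tfrac12\mu_w''(c)(x-c)^2+R_w(x-c)$, partial linearity forces $\mu_1^{(k)}\equiv\mu_0^{(k)}$ for all $k\ge 2$, hence $R_0=R_1=:R$, with $R''$ Lipschitz on $[c-\ell,c+\ell]$ with constant $B_\ell:=\max_w\sup_{|u|\le\ell}|\mu_w'''(c+u)|$. The minimax linear estimator over the function class of \cref{assu:smooth,assu:part_lin} with the estimated curvature bound is the solution of \eqref{eq:point_est} augmented with the normalization constraints $\sum\g_i=\sum\g_i(X_i-c)=\sum\g_i W_i(X_i-c)=\sum\g_i(X_i-c)^2=0$ and $\sum\g_i W_i=1$ (these make the estimator exactly unbiased against every polynomial piece of the class and are needed for \eqref{eq:point_est} to be well-posed). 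Hence $\mathrm{Bias}_n=\sum_i\wh\g_i R(X_i-c)$; since $R$ restricted to $[c-\ell,c+\ell]$ is, after rescaling by $B_\ell$, the restriction of an element of $\rhos$, and $\wh\g_i$ vanishes for $|X_i-c|>\ell$, the constraint in \eqref{eq:point_est} gives $|\sum_{i\in I_k}\wh\g_{(k),i}R(X_i-c)|\le B_\ell\,\wh t_{(k)}$. Combining folds, $|\mathrm{Bias}_n|\le \tfrac12 B_\ell(\wh t_{(1)}+\wh t_{(2)})$, so $|\mathrm{Bias}_n|/\wh b\le B_\ell/\min_k\wh B_{(k)}$; it thus suffices that $\liminf_n \min_k \wh B_{(k)}/B_\ell\ge1$ in probability. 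This follows from consistency of the cubic pilot regression: its third-derivative estimate has variance $\asymp \sigma^2/(n\ell^{7})\to 0$ under the second half of \eqref{eq:ell_decay} and bias $\to 0$ since $\ell\to 0$ and $\mu_w'''$ is continuous, so $\wh B_{(k)}\Pto B_*:=|\mu_0'''(c)|$, while $B_\ell\to B_*$ as well; if $B_*>0$ the ratio tends to $1$, and if $B_*=0$ the floor $\wh B_{(k)}\ge\eps$ makes it diverge. (The requirement $n^{1/7}\ell\to\infty$ also guarantees the pilot window eventually contains the $\asymp n^{-1/7}$ effective bandwidth of the minimax estimator, so the truncation $|X_i-c|\le\ell$ in step 1 does not distort the weights.)

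\emph{Noise and variance.} Cross-fitting makes this step clean: for $i$ in fold $k$, the weight $\wh\g_i$ depends on the data only through $\{X_j\}_{j=1}^n$ and the opposite fold's outcomes, hence is independent of $\eps_i$. I would first replace the data-dependent tuning quantities by their probability limits: the ``oracle'' weights $\g^\ast$ obtained from \eqref{eq:point_est} with $(\wh B_{(2)},\wh\sigma^2_{(2)})$ replaced by $(B_*,\sigma_*^2)$ are measurable functions of $\{X_j\}_{j=1}^n$ alone and satisfy $\|\wh\g-\g^\ast\|_2=\o(\|\g^\ast\|_2)$, using strong convexity of the objective in \eqref{eq:point_est} (the $\wh\sigma^2\|\g\|_2^2$ term) and continuity of its solution map in the tuning parameters. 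Conditionally on $\{X_j\}_{j=1}^n$, $\sum_i\g^\ast_i\eps_i$ is a sum of independent mean-zero terms, and the Lindeberg condition holds because $\E[|\eps_i|^q\mid X_i]\le C$ with $q>2$ together with the ``spread-out weights'' property $\max_i(\g^\ast_i)^2=o(\sum_j(\g^\ast_j)^2)$ --- a consequence of the known structure of minimax linear RDD weights (effective bandwidth $\asymp n^{-1/7}$) --- so $\sum_i\g^\ast_i\eps_i/s(\g^\ast)\dto\normal(0,1)$ with $s^2(\g^\ast):=\sum_i(\g^\ast_i)^2\sigma_i^2$. Replacing $\g^\ast$ by $\wh\g$ changes the sum by $\o(s(\g^\ast))$ (bound its conditional second moment by $\sigma_{\max}^2\|\wh\g-\g^\ast\|_2^2$, conditioning on $\{X_j\}$ and the opposite fold). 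Finally $\wh s^2(\wh\g)/s^2(\g^\ast)\Pto 1$: the OLS residuals obey $r_i=\eps_i+O_P(\ell^2)+O_P((n\ell)^{-1/2})$ on the much narrower support of $\wh\g$, which is negligible; $\sum_i(\g^\ast_i)^2\eps_i^2$ concentrates around $s^2(\g^\ast)$ by a truncation argument using $q>2$ and the spread-out weights; and passing back to $\wh\g$ is controlled as before. Slutsky then gives $\mathrm{Noise}_n/\wh s(\wh\g)\dto\normal(0,1)$, jointly with the bias bound.

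I expect the main obstacle to be controlling how the \emph{estimated} curvature bound $\wh B_{(k)}$ (and, secondarily, $\wh\sigma^2_{(k)}$) propagates through the convex program into both the weights and the variance estimate, while simultaneously ensuring $\wh B_{(k)}$ does not asymptotically undershoot the local curvature $B_\ell$ --- the latter is what keeps the reported bias budget honest, and is precisely what the pilot-bandwidth rates \eqref{eq:ell_decay} are calibrated to deliver. The cross-fitting device is the structural trick that decouples this estimation from the final-stage sampling noise and lets the central limit theorem go through.
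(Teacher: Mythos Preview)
Your proposal is correct and follows essentially the same three-ingredient strategy as the paper: a bias bound of the form $|b(\wh\g)|\le \wh b + o_p(\wh s(\wh\g))$, a Lindeberg CLT for the noise, consistency of $\wh s(\wh\g)$, and then the Imbens--Manski/Gaussian slack argument to conclude. Two points where you are glossing over the real work:
\begin{itemize}
\item Your bias statement $|\mathrm{Bias}_n|\le \wh b\,(1+o_p(1))$ only feeds into the slack argument once you also know $\wh b/\wh s(\wh\g)=O_p(1)$ (so that the multiplicative $o_p(1)$ becomes an additive $o_p(\wh s)$); the paper isolates this as a separate lemma, and it requires a matching minimax lower bound on the variance, not just continuity in the tuning parameters.
\item The ``spread-out weights'' property $\max_i \wh\g_i^2/\sum_j\wh\g_j^2\to 0$ is where essentially all the technical effort goes. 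The paper does \emph{not} read this off from a heuristic effective bandwidth; it passes to the convex dual (an infinite-dimensional problem over $(\rho,\lambda)$), proves a rate of convergence for the empirical dual optimizer to its population counterpart via bracketing-entropy maximal inequalities, and then bounds $\|\wh\g\|_\infty$ through the dual representation $\wh\g_i\propto G(\wh\theta;X_i,W_i)$. Your oracle-weight reduction (fixing $(B_*,\sigma_*^2)$ in the same empirical program) is fine for handling the tuning randomness, but it does not by itself deliver Lindeberg for $\g^\ast$---you still need a structural argument for those weights, and the paper's route is the dual.
\end{itemize}
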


We note that our algorithm does in principle require two tuning parameters,
namely the pilot bandwidth $\ell$ used to run the initial cubic regression to choose $B$
and the lower bound $\varepsilon$ for $\wh B$. We note, however, that our algorithmic performance
is largely insensitive to these choices (and simple defaults appear to provide good performance
in practice); and similar objects are also required in \citet{CCT2014robustCI} and \citet{imbens2012optimal}.
First, regarding $\ell$, this bandwidth can in practice be chosen simply to be
``large'': We only need $\ell$ to decay faster than $n^{-1/12}$, whereas the mean-squared
error optimal bandwidth for local linear regression in our setting (i.e., under \cref{assu:smooth})
scales as $n^{-1/7}$ \citep*{Cheng-Fan-Marron}.
In our implementation, we by default simply set $\ell$ to contain
the full range of the data, and this choice is what we use in all our
experiments.\footnote{This choice mirrors \citet[p.~942, {\it step 2}]{imbens2012optimal}
who seed their bandwidth-selection procedures by fitting global cubic polynomials to the full dataset;
see also \citet[S.2.6, {\it step 0}]{CCT2014robustCI}.} As
such, practitioners should be able to get reliable results out of our method unless the
nature of the relationship between $X_i$ and $Y_i$ changes completely as we get far from
the cutoff---and in settings like this it would be possible to pick $\ell$ by plotting the data.
Meanwhile, $\varepsilon$ will have no effect in large samples as long as we
choose it to be small enough that $\varepsilon < |\mu'''(c)|$.\footnote{One interesting
question is what happens at the critical point where $\mu'''(c) = 0$; as, in this case,
our constraint does necessarily bind and we fall back to using $\varepsilon$ as our curvature
bound. The challenge is that, when $\mu'''(c) = 0$, our cubic polynomial estimator \smash{$\wh B$} (in step 3 of \cref{algo:PLRD})
converges to 0, but the exact way it does so is left underspecified under \cref{assu:smooth}.
Our use of the lower bound $\varepsilon$ guarantees validity of our confidence intervals at
the expense of some excess conservativeness when $\mu'''(c) = 0$. We note that existing
methods, including \citet{CCT2014robustCI} and \citet{imbens2012optimal}, also do not
offer guidance on how to potentially benefit from ``superefficiency'' in
similar critical cases; i.e., our approach is again in line with current best practices.}

\section{Relationship to Existing Methods}\label{sec:relationship-to-existing-methods}

Regression discontinuity designs were originally introduced by \citet{thistlethwaite1960regression}.
The modern econometric literature goes back to \citet{Hahn-et-al-2001}, who emphasized the identification
result \eqref{eqn:tau-definition} and recommended estimation $\tau_c$ via local linear regression,
\begin{equation}\label{eqn:llr-def}
    (\wh{a}_w, \wh{\beta}_w)=\argmin_{(a,\beta) \in \mathbb{R}^2}\ \left\{\sum_{W_i=w} K\left(\frac{|X_i-c|}{h_n}\right)\left(Y_i-a-\beta (X_i-c)\right)^2\right\}, \ \ \ \ \wh\tau_\texttt{LLR} = \wh{a}_1 - \wh{a}_0,
\end{equation}
where $h_n \rightarrow 0$ is a bandwidth that ``localizes'' the regression and $K(\cdot)$ is a kernel
function with support on $[-1, \, 1]$. In the decade following \citet{Hahn-et-al-2001}, local linear
(or polynomial) regression remained the prevalent approach to inference in RDDs, with confidence
intervals produced by applying off-the-shelf tools for heteroskedasticity-robust inference directly to the above
regression \citep{IL2008review,lee2010regression}. A limitation of this approach, however,
is that it leads to contradictory guidance on the bandwidth choice under a non-parametric
specification as given in, {\it e.g.},  \cref{assu:smooth}. The accuracy of
\smash{$\wh\tau_\texttt{LLR}$} depends delicately on the bandwidth $h_n$, with a first systematic proposal given in \citet{imbens2012optimal}.
For choices of $h_n$ that optimize this accuracy the bias and
standard error of \smash{$\wh\tau_\texttt{LLR}$} are of the same order---so off-the-shelf
local linear regression inference will not have valid coverage for this choice of $h_n$.
The practitioner is thus forced to invoke ``undersmoothing'', whereby they voluntarily
use a non-accuracy-optimizing choice of $h_n$ to justify validity of their inferential
procedure.

In a major advance, \citet{CCT2014robustCI} and
\citet{ArmstrongKolesar2018} introduced paradigms for inference in RDDs that formally
account for the bias of local linear regression. As discussed in \citet{stommes2023reliability},
these methods offer material improvements in the credibility of regression-discontinuity
analyses over the earlier practice. The main idea in \citet{CCT2014robustCI} is to
estimate and correct for the bias of local linear regression by leveraging higher-order smoothness;
we refer to this approach as the ``bias-corrected'' approach.\footnote{When
reporting results using the bias-correct approach, we use robust standard error
estimates as recommended by \citet{CCT2014robustCI} and as implemented in the package
\texttt{rdrobust} \citep{calonico2015rdrobust,calonico2017rdrobust}.} \citet*{CCFT2019} discuss the
use of covariates under the bias-corrected approach to RDDs while \citet{cattaneo2022regression}
provide a recent review. In contrast, the ``bias-aware'' approach developed by
\citet{ArmstrongKolesar2018} involves conservatively widening
intervals to account for worst-case bias (without needing to actually estimate the bias); see
\citet{KolesarRothe2018}, \citet{optrdd2019} and \citet{noack2024bias}
for further applications of this paradigm.

Interestingly, not only do the above papers under the bias-corrected vs.~bias-aware paradigms
explore different methodological strategies for inference in RDDs; they also focus on proving
formal results under different formal paradigms. \citet{CCT2014robustCI} and follow-ups focus
entirely on asymptotic validity results---as we do here. They fix a data-generating process, and
guarantee that they eventually get coverage for this data-generating process as the sample
size gets large (as is also done in \eqref{eqn:asymp-valid}). In contrast, most papers written
under the bias-aware paradigm specify a quantitative smoothness class for the $\mu_w(x)$
({\it e.g.}, all functions with a second derivative bounded by a pre-specified constant $B$), and
then seek guarantees that hold in finite samples and uniformly over this whole class.
Getting uniform guarantees in finite samples of course requires stronger assumptions
than getting pointwise asymptotic guarantees; and for this reason papers associated with the
bias-aware paradigm often start by making stronger assumptions than those associated with the
bias-corrected paradigm. We emphasize, however, that this is due to the sought formal
guarantees; and it's not that bias-aware methods inherently require stronger assumptions to work
than bias-corrected methods.

Our proposed method, PLRD, straddles these two literatures. Our confidence intervals
are rooted in the bias-aware paradigm of \citet{ArmstrongKolesar2018}, and
we use numerical convex optimization to construct our estimator as in \citet{optrdd2019}.
In contrast, our formal results are focused on pointwise asymptotics, and
the assumptions and guarantee type given in our \cref{thm:validity} mirror
those in Theorem 1 of \citet{CCT2014robustCI}. One notable paper that also straddles
these literatures in \citet{ArmstrongKolesar2020}, who propose bias-aware confidence intervals
for local linear regression with pointwise asymptotic guarantees. In particular, they show
that if one runs local linear regression as in \eqref{eqn:llr-def} with a mean-square-error
optimal bandwidth choice $h_n$, then under generic
conditions we can build valid, bias-aware 95\% confidence intervals for $\tau_c$ as
$\wh\tau_\texttt{LLR} \pm 2.18 \times \text{standard errors}$ rather than the conventional
$\wh\tau_\texttt{LLR} \pm 1.96 \times \text{standard errors}$. We provide a comparison of
PLRD to this approach in \cref{app:ablation-analysis}.

Wrapping up, PLRD can best be understood within the context of earlier developments in
the bias-aware paradigm; and in fact {\it steps (\ref{step(2)})} and {\it (\ref{step(3)})} of our algorithmic outline fall
squarely within the roadmap for minimax linear inference via numerical convex optimization
as presented in \citet{optrdd2019}. However, in a deviation from existing work:
\begin{itemize}
    \item We rely on higher-order smoothness and a preliminary cubic fit to initialize
    tuning parameters of our algorithm; in \citet{ArmstrongKolesar2018}, \citet{KolesarRothe2018} and \citet{optrdd2019}, a parameter analogous to the $B$ derived in our {\it step (\ref{step(1)})} would
    need to be pre-specified using subject matter knowledge or chosen using a heuristic.
    \item Because of this data-driven choice of $B$ in {\it step (\ref{step(1)})}, we are not able to provide
    uniform, finite-sample inferential guarantees;\footnote{As shown in \citet{ArmstrongKolesar2018},
    doing so would in fact be impossible under smoothness assumptions like \cref{assu:smooth} alone.
    We do note that proving uniform bounds for methods involving data-driven smoothness guarantees
    can be possible under more restricted function classes, e.g., \citet[Appendix E]{ArmstrongKolesar2020}
    consider such results over function classes where global polynomial approximations are guaranteed
    to capture worst-case local curvature of the underlying functions. We leave a study
    of uniformity guarantees for PLRD over restricted function classes to future work.}
    thus, like in \citet{CCT2014robustCI},
    we focus on pointwise asymptotic inferential statements.
    \item The specific worst-case class we use in our {\it step (\ref{step(2)})}, i.e., with $B$-Lipschitz
    curvature bounds and a linear treatment effect assumption, has not been previously
    considered in the literature. The most standard choice in applications of
    \citet{ArmstrongKolesar2018}, \citet{KolesarRothe2018} and/or \citet{optrdd2019}
    has been to only assume that the $\mu_w(x)$ have second derivatives bounded by $M$.
\end{itemize}
Regarding this last point, the earlier focus on the $M$-bounded curvature class appears
to have been largely guided by intuitive simplicity of this class rather than any
conceptual or empirical arguments. Given the empirical performance of PLRD on calibrated
simulations, we suggest that our $B$-Lipschitz curvature class with linear treatment effects
may be a good default choice in many applications. We conduct an ablation analysis to
examine the effect of different choices on the behavior of our estimator in \cref{app:ablation-analysis}.

\section{Empirical Study}\label{sec:empirical-assessment}

We now describe in more detail the experiment presented in the introduction (\cref{tab:wgan}).
As context for this experiment, \cref{tab:real-data-CIs} shows a replication
of a number of high-profile RDD applications using a variety of methods;
citations for each application are given in the leftmost column and further description
of each benchmark dataset is provided \cref{app:real-data-details}.

\begin{table}[t]
\small
\centering\renewcommand{\arraystretch}{1}
\begin{tabular}{@{}lcccccccccc@{}}
\toprule
\multirow{2}{*}{} \textbf{Dataset} & \multicolumn{1}{c}{conventional} &  \multicolumn{1}{c}{\texttt{rdrobust}} & \multicolumn{1}{c}{\texttt{RDHonest}} & \multicolumn{1}{c}{\texttt{plrd}}\\ 
 \midrule
\citet{lee2008}   & $0.064  \pm   0.022$  & $0.059 \pm 0.025$ & $0.058 \pm 0.032$ &  $0.073   \pm  0.024$\\
\citet{cattaneo2015senate}   &  $7.414   \pm   2.859$ & $7.507 \pm 3.413$ & $8.030 \pm 4.210$ &  $ 5.830   \pm  2.127$\\
\citet{LudwigMiller}  &  $-2.409 \pm 2.363$& $-2.781  \pm 2.682 $ \hspace{1mm}  & $ -3.153 \pm 2.828$ \hspace{1mm}  & $ -1.249\pm 1.507$ \hspace{1mm} \\ 
\citet{meyersson2014islamic}  & $3.020 \pm2.797$ & $2.983 \pm 3.293$ & $3.024 \pm 3.109$ & $3.010 \pm 2.093$\\
\citet{matsudaira} (read.) & $0.036 \pm 0.028$ & $ 0.030 \pm 0.032$ & $0.031 \pm 0.046$ & $0.017  \pm 0.033$ \\
\citet{matsudaira} (math) & $0.075 \pm  0.025$ &  $0.072 \pm 0.030$  & $0.057 \pm 0.041$ &   $0.073 \pm 0.028$\\
\citet{jacob2004} (read.)  & $0.129  \pm  0.042$  & $0.143  \pm    0.048$  & $0.127    \pm  0.046$ & $0.099 \pm    0.040$\\
\citet{jacob2004} (math) & $0.195 \pm   0.035$  & $0.203 \pm  0.039$ & $ 0.187 \pm   0.058$ & $ 0.168  \pm  0.039$\\
\citet{Oreopoulos}  & $0.070  \pm  0.104$ &  $ 0.099\pm 0.143$  & $0.070 \pm 0.125$ & $0.021 \pm 0.064$\\
\bottomrule
\end{tabular}
\caption{95\% confidence intervals produced by different methods on cited benchmark datasets. }
\label{tab:real-data-CIs}
\end{table}

A fundamental challenge in assessing methods for causal inference on real-world data is that we
typically do not have access to a ground truth we can use for evaluation. In light of this
challenge, \citet{athey2021using} proposed using Wasserstein Generative Adversarial Networks (WGANs)
to generate synthetic data that mimic benchmark datasets---and thus obtaining simulation instances
that enable us to compare methods in settings that are as realistic as possible.
Our WGAN evaluation pipeline involves the following three steps:
\begin{enumerate}
    \item \textbf{Training:} We train WGANs on each benchmark dataset to mimic the underlying data generating process 
    for both treated and control units.
    \item \textbf{Generation:} Once trained, the WGANs generate multiple synthetic datasets that reflect the characteristics of the original data. These synthetic datasets serve as a proxy for the real data in our simulation study.
    \item \textbf{Evaluation:} We apply the various methods, including our proposed \texttt{PLRD} method, to the synthetic datasets.  The performance of these methods is then evaluated based on the empirical coverage and the average width of the confidence intervals. 
\end{enumerate}
As illustrated in Figure \ref{fig:gan-lee}, a trained WGAN is able to simulate data whose
distribution is very similar to that of the original data; thus suggesting that if a method
gets reliable coverage on the simulator we can also trust it on the real data.

\ag{Our WGAN training procedure consists of the following steps.  First,  apply marginal CDF transforms to the data to convert $(X_i, Y_i)$ to $(U_i, V_i)$, where $U_i=F_x(X_i)$ and $V_i = F_y(Y_i)$. For discrete data we break ties by randomization, and for additional stability we use a Gaussian transformation. We train two generators $\wh{g}_w(u, z)$ ($w=0,1$), one for treated units and one for controls, that mimic the conditional distribution of the CDF-transformed outcome $V$ given the CDF-transformed running variable $U$. Finally, we invert the CDF-transforms to obtain $(\wt{X}_i,\wt{Y}_i)$ from the pair $(\wt{U}_i, \wt{V}_i)$ generated by the trained WGAN. 

We calculate the ground truth $\widetilde{\tau}$ for the GAN estimated distribution as follows. We first calculate the cutoff $c'$ under the applied CDF transforms. We then sample random noise $Z_{0,q}, Z_{1,q}\in \R$ for $q=1,\dots,Q$, where $Q$ is a large number ($10^7$ in our simulations), generate CDF-transformed outcomes at the threshold $c'$, and invert them:
$$\widetilde{\tau}:=Q^{-1}\sum_{q=1}^Q \left(F_y^{-1}(\wh{g}_1(c', Z_{1,q}))-F_y^{-1}(\wh{g}_0(c',Z_{0,q}))\right).$$
We generate $N=10^4$ simulated datasets indexed by $i=1,\dots,N$, and compute confidence intervals $[a_{i,k},b_{i,k}]$, for each method $k$. The empirical coverage for the $k$-th method is given by $\wh{p}_k=N^{-1}\sum_{i=1}^n \mathbf{1}\{a_{i,k}\le \widetilde{\tau}\le b_{i,k}\}$. \cref{tab:wgan} shows the obtained results.
}

\subsection{Revisiting existing simulation designs}\label{sec:simulations-from-CCT}


\begin{table}[t]
\centering\small
\renewcommand{\arraystretch}{1.1}
\begin{tabular}{@{}lcccccccc@{}}
\toprule
\multirow{2}{*}{} 
  & \multicolumn{2}{c}{conventional} 
  & \multicolumn{2}{c}{\texttt{rdrobust}} 
  & \multicolumn{2}{c}{\texttt{RDHonest}} 
  & \multicolumn{2}{c}{\texttt{plrd}}  \\ 
\cmidrule(lr){2-3} \cmidrule(lr){4-5} \cmidrule(lr){6-7} \cmidrule(lr){8-9} 
  & \textbf{coverage} & \textbf{width} 
  & \textbf{coverage} & \textbf{width} 
  & \textbf{coverage} & \textbf{width} 
  & \textbf{coverage} & \textbf{width} \\ 
\midrule
\textbf{Pure noise}   & 92.6\% & 1.419  & 93.1\% & 1.681 & \good{96.0\%} & 1.854 & \good{97.2\%} & \better{1.011}   \\ 
\textbf{Setting 1}    & 88.4\% & 0.203  & 90.9\% & 0.240 & \good{95.4\%} & 0.288 & \good{94.8\%} & \better{0.237}  \\
\textbf{Setting 2}    & 79.7\% & 0.266  & 90.5\% & 0.296 & \good{96.2\%} & 0.440 & \good{97.3\%} & \better{0.389}  \\
\textbf{Setting 3}    & 90.1\% & 0.213  & 92.7\% & 0.246 & \good{96.2\%} & 0.301 & \good{99.6\%} & \better{0.290}  \\
\textbf{Setting 4}    & 91.8\% & 0.201  & 92.3\% & 0.238 & \good{96.6\%} & 0.276 & \good{96.1\%} & \better{0.178}  \\
\bottomrule
\end{tabular}
\caption{Empirical coverage and average width of 95\% confidence intervals produced by different methods on pure noise and on simulations from \citet{CCT2014robustCI}. All simulations have sample size $n = 500$, and results are aggregated across 10{,}000 replications. Methods are run as in \cref{tab:wgan}.}
\label{tab:simulations}
\end{table}

For completeness, we also benchmark the PLRD estimator against other existing methods on
some standard simulation designs. First, we consider a pure noise setting where we generate
the running variables as $X_i \sim \text{Unif}([-1, \, 1])$, the outcomes as $Y_i \sim \mathcal{N}(0,\, 1)$,
and set treatment to $W_i = 1(\{X_i \geq 0\})$.
Next, we consider simulation settings used in \citet[Section 6]{CCT2014robustCI}.
In each of these settings, we draw the running variable $X_i\sim 2 \, \text{Beta}(2, 4)-1$,
and the outcome is given  by
$Y_i = m(X_i)+ \eps_i$, where $\eps_i$ are mean-zero Gaussian noise with variance $0.1295^2$.
The form of $m(\cdot)$ varies for the different simulation settings;
see \citet{CCT2014robustCI} for details on Settings 1 through 3.
Setting 4 uses the function $m=m_\text{quad}$ as in \citet[Section 4.6]{imbens2012optimal}.
We note that the linear treatment effect condition (\cref{assu:part_lin}) is not satisfied
in several of these designs, so we expect our pragmatic safeguard (given in \cref{algo:PLRD})
to play an important role in delivering robust performance.

\cref{tab:simulations} presents the empirical coverage and average width
of 95\% CIs produced by different methods across these simulation settings.
As before, \texttt{rdrobust} falls short of nominal coverage; note that these
numbers are in line with those originally reported by \citet[Table 1]{CCT2014robustCI}.
\texttt{RDHonest} delivers valid coverage at the expense of wider intervals, and \texttt{plrd} delivers valid coverage with intervals
whose width is competitive with \texttt{rdrobust}.

\section{Formal Results}\label{sec:formal_considerations}

Our point estimator, i.e., the $\wh{\g}_i$-weighted average of outcomes $Y_i$ with weights as
given in \eqref{eq:point_est} is, ignoring cross-fitting, equivalent to a minimax linear
estimator for $\tau_c$ over the class $\cal{M}_{\wh{B}}$ of all functions satisfying  \cref{assu:part_lin} and for which
$\mu''_0(x)$ is globally $\wh{B}$-Lipschitz, i.e.,\footnote{The conditional variance of
a linear estimator as in \eqref{eqn:minimax-problem0} is
\smash{$\sum_{i = 1}^n \wh{\gamma}_i^2 \sigma_i^2$} with
\smash{$\sigma_i^2 = \operatorname{Var}[Y_i \,|\, X_i]$}. If we had
\smash{$\sigma_i^2 = \wh{\sigma}^2$} for all $i = 1, \, \ldots, \, n$, then
\eqref{eqn:minimax-problem0} would exactly be the minimax linear estimation
problem for $\tau_c$. In the case of heteroskedasticity, the term
\smash{$\wh{\sigma}^2 \lVert \gamma \rVert_2^2$} should be interpreted as a
variance proxy that yields a practical estimator---and one that still maintains
provable validity guarantees under heteroskedasticity; see \citet[p.~272]{optrdd2019}
for further discussion.}
\begin{equation}\label{eqn:minimax-problem0}
  \wh{\tau}_{\texttt{PLRD}} := \sum_{i=1}^n \wh{\gamma}_{i} Y_i,\quad
  \wh{\g} :=\argmin_{\gamma} \underbrace{\wh{\sigma}^2 \lVert \gamma \rVert_2^2}_{\text{variance proxy}} + \underbrace{\sup_{\mu_{0},\, \mu_{1}\in\,\cal{M}_{\wh{B}}} \left( \sum_{i=1}^n  {\gamma}_{i} \mu_{W_i}(X_i) - (\mu_1(c) - \mu_0(c))\right)^2}_{\text{worst-case squared bias over } \cal{M}_{\wh{B}}}.
\end{equation}
If $\wh{B}$ were a constant chosen a-priori and $\mu''_0(x)$
were in fact is globally $\wh{B}$-Lipschitz, we could use existing results for minimax linear regression-discontinuity
inference \citep[e.g.,][]{ArmstrongKolesar2018,optrdd2019} to directly provide guarantees for $\wh{\tau}_{\mathtt{PLRD}}$.
But here, of course, this is not the case: Our choice of $\wh{B}$ is learned in step 1 of our algorithm, and
\cref{assu:smooth} only implies local---not global---Lipschitz continuity in $\mu''_0(x)$ .

{The goal of this section is to provide technical tools to address these challenges, culminating in the proof of
\cref{thm:validity}.}~We start by verifying validity of the cubic regression in the first step of our
algorithm. The following result is a direct consequence of standard results for local polynomial
regression \citep[e.g.,][Theorem 3.1]{fan1996framework}, and implies that
$\wh{B}_{(1)}$ and $\wh{B}_{(2)}$ in \cref{algo:PLRD} satisfy \smash{$\wh{B}_{(k)}\rightarrow_p B$}
as $n\to\infty$.

\ag{
\begin{proposition}
\label{prop:B}
Suppose that \cref{assu:smooth,assu:part_lin} hold and that $X_i$ has a density bounded away
from 0 in a neighborhood of $c$, and $B := \max\{|\mu_0'''(c)|,|\mu_1'''(c)|,\eps\}$. 
Run the local cubic regression: 
\begin{equation}
\label{eq:poly}
\wh\beta:=\argmin_{\beta\in \R^6} \sum_{\substack{|X_i - c| \leq \ell}} \left(Y_i - \beta_{W_i,0}-\beta_{W_i,1} (X_i-c)-\frac{\beta_{2}}{2}(X_i-c)^2-\frac{\beta_{3}}{6}(X_i-c)^3\right)^2,
\end{equation}
with bandwidth $\ell$ scaling as in \eqref{eq:ell_decay}, and set\footnote{\ag{As an additional safe-guard against small sample sizes, we calculate a standard error estimate $\mathrm{s.e.}(\wh{\beta}_3)$ and use the conservative yet asymptotically consistent estimate
$\wh B = \max\{|\wh\beta_{3}\pm 1.96\cdot\mathrm{s.e.}(\wh\beta_3)|,\eps\}$ in our software.}}
$\wh B = \max\{|\beta_{3}|,\eps\}$.
Then, $\wh B\rightarrow_p B$ as $n\to\infty$.
\end{proposition}
}

Our core analytic result is the following central limit theorem for our PLRD estimator.
Asymptotic validity of the PLRD confidence intervals as claimed in \cref{thm:validity}
then follows immediately.

\begin{theorem}
\label{thm:main-CLT}
Under the conditions of \cref{thm:validity}, $\wh{\tau}_{\mathtt{PLRD}}$ as constructed in \cref{algo:PLRD} satisfies 
\begin{equation}
\label{eq:clt_eq}
\left(\wh{\tau}_{\mathtt{PLRD}}-b(\wh{\gamma})-\t\right) \,\big/\, {s}(\wh\g) \dto \normal(0, \, 1), \ \ \ \ {s}^2(\wh\g) = \sum_{i=1}^n\wh\gamma_{i}^2\sigma_i^2, \ \ \ \ 
b(\wh{\g})=\sum_{i=1}^n \wh{\g}_{i}\, \mu_{W_i}(X_i)-\t,
\end{equation}
where $b(\wh{\g})$ is interpreted as the conditional bias of $\wh{\tau}_{\mathtt{PLRD}}$, and $\sigma_i^2 =\var(Y_i\mid X_i)$.
Furthermore,
\begin{equation}
\label{eq:bias_ok}
\limsup_{n\to\infty} \left(\left| b(\wh{\g})\right| - \wh{b}\right) \,\big/\, {s}(\wh\g) \leq_p 0,
\end{equation}
i.e., the learned bias bound $\wh{b}$ in \eqref{eq:alg_CI} is conservative for the true bias;
and ${s}(\wh\g)=\O(n^{-3/7})$.
\end{theorem}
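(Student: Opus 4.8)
The plan is to condition on the running variables $\{X_i\}$ (and the fold assignment), treat $\wh\g$ as fixed given this $\sigma$-algebra, and establish the three claims in turn: the CLT in \eqref{eq:clt_eq}, the bias-bound conservativeness \eqref{eq:bias_ok}, and the rate $s(\wh\g) = \O(n^{-3/7})$. Because of cross-fitting, the weights $\wh\g_{(1)}$ depend only on fold $I_1$'s running variables and on $(\wh B_{(2)},\wh\sigma^2_{(2)})$, which are measurable with respect to fold $I_2$; hence, conditionally on all the $X_i$'s and on the fold-$I_2$ outcomes, the estimator $\sum_{i\in I_1}\wh\g_i Y_i$ is a weighted sum of independent, mean-$\mu_{W_i}(X_i)$ random variables with the weights frozen. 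First I would record the decomposition $\wh\tau_{\texttt{PLRD}} - \tau_c = b(\wh\g) + \sum_i \wh\g_i(Y_i - \mu_{W_i}(X_i))$, so that the left side of \eqref{eq:clt_eq} is exactly the normalized noise term $\sum_i \wh\g_i \eps_i / s(\wh\g)$ with $\eps_i := Y_i - \mu_{W_i}(X_i)$.

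For the CLT, I would apply the Lyapunov central limit theorem (conditionally) to the triangular array $\{\wh\g_i\eps_i\}$: the Lyapunov ratio is $\sum_i |\wh\g_i|^q \E[|\eps_i|^q\mid X_i] \big/ s(\wh\g)^q$, which by the uniform $q$-th moment bound $C$ is controlled by $\linf{\wh\g}^{q-2}/\sigma_{\min}^{2}\cdot(\text{something}) $; more precisely it suffices that $\linf{\wh\g}/s(\wh\g)\to 0$. This reduces to showing $\linf{\wh\g} = o(n^{-3/7})$ while $s(\wh\g)\asymp n^{-3/7}$, i.e., no single weight dominates — a standard ``delocalization'' property of minimax linear weights that follows from the structure of the quadratic program \eqref{eq:point_est} (the effective bandwidth is of order $n^{-1/7}$, so there are $\asymp n^{6/7}$ active observations each carrying weight $\asymp n^{-6/7}$). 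I would invoke the known characterization of minimax-linear RDD weights from \citet{optrdd2019} / \citet{ArmstrongKolesar2018} to get both $s(\wh\g)=\O(n^{-3/7})$ and $s(\wh\g)$ bounded \emph{below} by the same order (using $\sigma_i\ge\sigma_{\min}$ and the positive density at $c$), together with the matching $\linf{\wh\g}$ bound; since $\wh B\to_p B>0$ by Proposition \ref{prop:B}, the random curvature bound does not disturb these rates. The conditional CLT plus boundedness of the conditional variance then upgrades to an unconditional CLT by dominated convergence of characteristic functions.

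For \eqref{eq:bias_ok}, the key point is that the worst-case-bias constraint in \eqref{eq:point_est} certifies, for the \emph{estimated} curvature $\wh B_{(2)}$, that $\big|\sum_{i\in I_1}\wh\g_i\,\rho(X_i-c)\big| \le \wh B_{(2)}\wh t_{(1)}$ for every $\rho$ with $\rho''$ being $\wh B_{(2)}$-Lipschitz and $\rho(0)=\rho'(0)=\rho''(0)=0$; combined with the companion bound over $I_2$ and with \cref{assu:part_lin} (which kills the treatment-effect contribution to the bias except through its value and slope at $c$, both of which the linear estimator reproduces exactly — an unbiasedness-on-linear-functions property built into the QP), this gives $|b(\wh\g)| \le (\wh B_{(2)}\wh t_{(1)} + \wh B_{(1)}\wh t_{(2)})/2 = \wh b$ \emph{on the event that the true $\mu''_0$ is globally $\wh B$-Lipschitz with the realized $\wh B$}. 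The true $\mu''_0$ is only \emph{locally} $B$-Lipschitz (\cref{assu:smooth}), so the remaining work is to show the contribution to the bias from outside the pilot window is asymptotically negligible relative to $s(\wh\g)$: since $\wh\g_i = 0$ for $|X_i - c|>\ell$ by construction and $\ell\to 0$, and since $\mu''_0$ is $B'$-Lipschitz on $|x-c|\le\ell$ for $n$ large with $B'\to|\mu'''(c)|\le B$ (here is where $\wh B\to_p B$ and $\varepsilon$ enter), the bias is controlled by the QP value up to a $o_p(s(\wh\g))$ term. I would make this precise by a Taylor expansion of $\mu_{W_i}(X_i)$ around $c$ to third order, absorbing the linear and quadratic parts into the exactly-unbiased directions and bounding the cubic remainder by the constraint.

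The main obstacle I anticipate is the interplay between the \emph{data-dependent} curvature bound $\wh B$ and the local-vs-global Lipschitz gap: one must show that the minimax weights computed with $\wh B_{(2)}$ (a random, fold-$I_2$-measurable quantity converging to $B$) still deliver (i) the right $n^{-3/7}$ variance rate, (ii) the delocalization $\linf{\wh\g}=o(s(\wh\g))$ needed for Lyapunov, and (iii) a valid bias certificate for the true, only-locally-Lipschitz $\mu_0$ — all simultaneously and with the cross-fitting dependence structure handled cleanly. A continuity/stability argument for the QP solution in $\wh B$ (the optimal value and weights depend continuously on the curvature parameter, uniformly on compact sets bounded away from $0$, where $\varepsilon$ guarantees we stay), combined with conditioning on $I_2$ to freeze $\wh B_{(2)}$, should let me transfer the fixed-$B$ results of \citet{optrdd2019,ArmstrongKolesar2018} to the estimated-$B$ setting; this transfer step is the technical heart of the argument.
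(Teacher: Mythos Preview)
Your high-level structure---conditional Lyapunov CLT reduced to the delocalization $\linf{\wh\g}/s(\wh\g)\to_p 0$, bias control via the QP constraint combined with $\wh B\to_p B$, and the $n^{-3/7}$ rate---matches the paper exactly. In particular, the paper handles \eqref{eq:bias_ok} by the same mechanism you outline: under \cref{assu:smooth,assu:part_lin} one writes $b(\wh\g)=B\sum_i\wh\g_i\rho(X_i-c)$ with $\rho$ the (normalized) cubic Taylor remainder of $\mu_0$, uses the QP constraint to get $|b(\wh\g)|\le B\wh t=(B/\wh B)\,\wh b$, and concludes via $\wh B\to_p B$ together with $\wh b/s(\wh\g)=\O(1)$.

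The gap is in how you plan to obtain the delocalization and the rates. You propose to ``invoke the known characterization of minimax-linear RDD weights from \citet{optrdd2019}/\citet{ArmstrongKolesar2018}'' plus a continuity-in-$B$ argument. But those papers work under second-order classes (bounded second derivative), not the third-order class with $B$-Lipschitz $\mu_0''$ used here; the rates, optimal weights, and effective bandwidths are different ($n^{-3/7}$ vs.\ $n^{-2/5}$), and there is no off-the-shelf delocalization lemma for the new class to cite. The paper therefore does not transfer fixed-$B$ results via continuity. Instead it analyzes the convex \emph{dual} of \eqref{eq:point_est}, sets up a population-level dual, and uses empirical-process maximal inequalities (bracketing-entropy bounds for $\{\rho:\rho(0)=\rho'(0)=\rho''(0)=0,\ \rho''\text{ Lipschitz}\}$) to prove a rate of convergence of the empirical dual optimizer to the population one (\cref{thm:dual-rate-of-conv}). \cref{coro:lindeberg} then reads off $\linf{\wh\g}/\ltwo{\wh\g}\to_p 0$ by combining a Cauchy--Schwarz lower bound $\ltwo{\wh\g}\gtrsim n^{-1/2}$ (from $\sum_i(2W_i-1)\wh\g_i=2$) with an explicit sup-norm bound on the dual solution. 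The rate $s(\wh\g)=\O(n^{-3/7})$ (\cref{lemma:minimax-bias-rate}) is obtained by exhibiting a concrete kernel estimator achieving MSE $\O(n^{-6/7})$ over the class, and $\wh b/s(\wh\g)=\O(1)$ (\cref{lemma:bias-ok}) needs a separate contradiction argument against the minimax lower bound. Your continuity/stability idea is not wrong in spirit, but even with $B$ frozen you would still have to derive delocalization and rates for this class from scratch---and that derivation is precisely the dual/empirical-process machinery the paper builds.
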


\subsection{Proof of \texorpdfstring{\cref{thm:validity}}{Theorem 1}}
Define $\wh{s}^2(\wh\g)=\sum_{i=1}^n \wh\g_i^2\wh\sigma_i^2$ as in \cref{algo:PLRD}. It follows from \citet{white1980heteroskedasticity} that $\wh{s}(\wh\g)/s(\wh\g)\Pto 1$; see \cref{lemma:ehw} for details. Next, fix any $\delta\in(0,1)$. In view of \eqref{eq:bias_ok} and Slustky's lemma, $\P(A_n)\to 1$ where $A_n:=\{|b(\wh\g)|\le \wh{b}+\delta\wh{s}(\wh\g)\}$. We show in \cref{lemma:Gaussian-CDF} that for any $s$, $h$, $t$ and $\delta>0$,
$$\inf_{|b|\,\le\, t+\delta s}\P(|b+s Z|\le h) \ge \inf_{|b|\,\le\, t}\P(|b+s Z|\le h) -\delta,$$
where $Z\sim \normal(0,1)$.
Using this with $s=\wh{s}(\wh\g)$, $h=\wh{h}_\alpha$ and $t=\wh{b}$, we get, for every $n$,
\begin{equation}\label{1-alpha-eps}
\begin{split}
    \inds{A_n}\P\left(|b(\wh\g)+\wh{s}(\wh\g) Z|\le \wh{h}_\alpha\,\Big|\, (X_i, Y_i)_{i=1}^n\right)&\ge \inds{A_n}\inf_{|b|\,\le \,\wh{b}\,+\,\delta\wh{s}(\wh{\gamma})}\P\left(|b+\wh{s}(\wh{\gamma})Z|\le \wh{h}_\alpha\right)\\
    &\ge \inds{A_n} \left(\inf_{|b|\,\le\, \wh{b}}\P(|b+\wh{s}(\wh\g) Z|\le \wh{h}_\alpha)-\delta\right)\\ 
    &\ge \inds{A_n} (1-\alpha-\delta),
\end{split}
\end{equation}
where the last inequality follows from the definition of $\wh{h}_\alpha$.
On the other hand, \cref{thm:main-CLT} and Slutsky's lemma imply that
\smash{$Z_n:=(\wh\tau_\texttt{PLRD}-\t-b(\wh\g))/\wh{s}(\wh\g)\dto Z\sim \normal(0,1)$}. Consequently,
\begin{align*}
    \liminf_{n\to\infty}\P(\t\in\mathcal{I}_\alpha) &= \liminf_{n\to\infty}\P(|\wh\tau_\texttt{PLRD}-\t|\le \wh{h}_\alpha) \\
    &=\liminf_{n\to\infty}
\P\left(|b(\wh\g)+\wh{s}(\wh\g) Z|\le \wh{h}_\alpha\right) \\
    &\geq\liminf_{n\to\infty}
\P(\{|b(\wh\g)+\wh{s}(\wh\g) Z|\le \wh{h}_\alpha\}\cap A_n) \\
&\ge 1-\alpha-\delta.\tag{using \eqref{1-alpha-eps} and $\P(A_n)\to 1$}
\end{align*}
Since $\delta\in(0,1)$ is arbitrary, we are through.

\subsection{Proof of \texorpdfstring{\cref{thm:main-CLT}}{main CLT}}\label{sec:asymptotic}

The optimization problem in \cref{algo:PLRD} is inherently related to its convex dual (see \citet{BoydVandenberghe} for a definition). 
We show in \cref{sec:implementation} that the dual problem is given by
\begin{equation}\label{dual:empirical-duplicate}
\begin{split}
    &\wh{\theta}_n(\sigma^2,B)=\argmin_{\theta=({\rho},\lambda)\in \Theta} \wh{L}_n(\theta;\sigma^2,B),\\
    &\wh{L}_n(\theta;\sigma^2,B):=\frac{1}{4n\sigma^2}\sum_{i=1}^n G^2(\theta; X_i-c, W_i)+\frac{\lambda_1^2}{4nB^2}+\frac{\lambda_2-\lambda_3}{n},
\end{split}
\end{equation}
where $\theta=({\rho},\lambda)$ varies in the set
\begin{equation}\label{def:dual-space}
    \Theta:=\left\{(f,\lambda)\mid f:[-\ell,\ell]\to \R, \, f(0)=f'(0)=f''(0)=0, \, f'' \text{ is $\lambda_1$-Lipschitz}, \, \lambda\in [0,\infty)\times \R^5\right\},
\end{equation}
and 
\begin{equation*}
    G({\rho},\lambda; x, w):={\rho}(x)+\lambda_2 w+\lambda_3(1-w)+\lambda_4 wx +\lambda_5 (1-w)x+\lambda_6 x^2.
\end{equation*}
In \cref{algo:PLRD}, we use sample splitting to estimate  $B$ (cf.~\cref{prop:B}) and the conditional variance $\sigma^2=\mathbb{E}[\var(Y_i\mid X_i)]$. However, for ease of exposition, we assume throughout this section that we have $\wh{B}\rightarrow_p B$ and $\wh\sigma^2\rightarrow_p\sigma^2$ independently of the data, and 
 $$\wh\theta_n := \wh\theta_n(\wh\sigma^2,\wh{B})=\argmin_{\theta\,\in\,\Theta} \wh{L}_n(\theta;\wh\sigma^2,\wh{B}).$$
We can recover the primal solution $(\wh\g_n,\wh{t}_n\,)$ from the dual solution $\wh\theta_n$ as:
$$\wh\g_{n,i} = -\frac{1}{2\wh\sigma_i^2} \, G(\wh\theta_n; \, X_i-c, W_i),\quad \wh{t}_n = \frac{1}{2\wh{B}^2}\wh{\lambda}_{n,1}.$$
Next, define the population-level dual problem as:
\begin{equation}\label{dual:population}
\begin{split}
&\theta_n^*(\sigma^2,B)=\argmin_{\theta\,\in\,\Theta}L_n(\theta;\sigma^2,B), \\ &L_n(\theta;\sigma^2,B):=\frac{1}{4\sigma^2}\E\left[G^2(\theta; X-c, W)\right]+\frac{\lambda_1^2}{4nB^2}+\frac{\lambda_2-\lambda_3}{n}.
\end{split}
\end{equation}
Our proof strategy hinges on establishing fast-enough convergence of
\smash{$\wh\theta_n$} to \smash{$\theta_n^*(\sigma^2,B)$} to enable us
to read-off large-sample properties of \smash{$\wh{\tau}_{\mathtt{PLRD}}$}
from the solution to the population dual \eqref{dual:population}.

Although the empirical minimization problem \eqref{dual:empirical-duplicate} looks similar to a moment-estimation problem, it has some technical complications, \textit{e.g.},
(i) the dual parameter $\theta=(\rho,\lambda)$ involves a function $\rho$ and a vector $\lambda$, (ii)
the set $\Theta$ in \eqref{def:dual-space} is \emph{not} the Cartesian product of a function space and a vector space, and (iii)
the population dual problem in \eqref{dual:population} involves the sample size $n$.
To tackle these issues, 
we borrow tools from empirical process theory (see, {\it e.g.,} \citet{VW} for a textbook treatment).
The following result, proven in \cref{proof-of-lemma}, characterizes the rate of convergence of
the empirical dual optimizer $\wh\theta_n$ to the population dual optimizer $\theta_n^*$;
and implies (\cref{coro:lindeberg}) that the weights underlying \smash{$\wh{\tau}_{\mathtt{PLRD}}$}
satisfy a Lindeberg-type condition.

\begin{lemma}\label{thm:dual-rate-of-conv}
Let $(\wh{\rho}_n, \wh\lambda_n)=\wh\theta_n(\wh{\sigma}^2,\wh{B})$ be the optimizer of the empirical dual problem \eqref{dual:empirical-duplicate}, and $({\rho}_n^*, \lambda_n^*)=\theta_n^*(\wh{\sigma}^2, B)$ be the optimizer of the population dual problem \eqref{dual:population}. Assume that the pilot bandwidth $\ell$ in \cref{algo:PLRD} satisfies \eqref{eq:ell_decay}, and $\wh{B}\Pto B$. Then, under the conditions of \cref{thm:validity},
$$\|\wh{\rho}_n-\rho_n^*\|^2_{L^2(P)}+\|\wh{\lambda}_{n,-1}-\lambda_{n,-1}^*\|_2^2 = \O\left(\frac{\ell^{\,6}}{n^{1/4+6/7}}+ \frac{\ell^{\,3}}{n^{1+3/7}}\right).
$$
\end{lemma}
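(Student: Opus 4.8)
The plan is to treat \eqref{dual:empirical-duplicate} as a (generalized) M-estimation problem and to extract the rate by combining (i) strong convexity of the population objective $L_n$ in the pooled parameter $\theta$, restricted to the affine constraint set $\Theta$, with (ii) a uniform control of the empirical process $\theta \mapsto \wh{L}_n(\theta;\wh\sigma^2,B) - L_n(\theta;\wh\sigma^2,B)$ over a suitable localized neighborhood of $\theta_n^*$. Concretely, I would first show that both optimizers lie in a bounded region of $\Theta$: because the $\lambda_1$-Lipschitz bound on $\rho''$ is penalized by $\lambda_1^2/(4nB^2)$ and the linear terms $\lambda_2,\lambda_3$ enter through $(\lambda_2-\lambda_3)/n$ together with the quadratic moment term, one gets a priori bounds of the form $\|\rho_n^*\|_\infty \lesssim \ell^3$, $|\lambda_{n,1}^*|\lesssim 1$ and $\|\lambda_{n,-1}^*\|_2 = O(1)$ (and similarly for $\wh\theta_n$ on an event of probability tending to one), using that $X$ has positive density at $c$ so the quadratic form $\E[G^2(\theta;X-c,W)]$ is nondegenerate on directions orthogonal to the constraints. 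I would record a quadratic lower bound
\begin{equation*}
L_n(\theta;\wh\sigma^2,B) - L_n(\theta_n^*;\wh\sigma^2,B) \;\gtrsim\; \|\rho - \rho_n^*\|_{L^2(P)}^2 + \|\lambda_{-1} - \lambda_{n,-1}^*\|_2^2
\end{equation*}
valid on the bounded region, which follows from the fact that $\theta\mapsto \E[G^2(\theta;X-c,W)]$ is a positive semidefinite quadratic form whose kernel, intersected with the tangent space of $\Theta$ at $\theta_n^*$, is trivial in the $(\rho,\lambda_{-1})$-coordinates (the $\lambda_1$-coordinate is not controlled at the same rate, which is why it is excluded from the conclusion — only $\lambda_{-1}$ appears on the left-hand side of the lemma).

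Next, the basic inequality $\wh{L}_n(\wh\theta_n)\le \wh{L}_n(\theta_n^*)$ together with $L_n(\theta_n^*)\le L_n(\wh\theta_n)$ gives
\begin{equation*}
L_n(\wh\theta_n) - L_n(\theta_n^*) \;\le\; \big(\wh{L}_n - L_n\big)(\theta_n^*) - \big(\wh{L}_n - L_n\big)(\wh\theta_n) \;=\; \frac{1}{4\wh\sigma^2}\Big(\E - \E_n\Big)\big[G^2(\wh\theta_n;X-c,W) - G^2(\theta_n^*;X-c,W)\big].
\end{equation*}
Combined with the quadratic lower bound, this is a standard peeling/localization setup: I would bound the right-hand side by $\phi_n(r)$, the modulus of continuity of the centered empirical process over the shell $\{\theta\in\Theta : \|\rho-\rho_n^*\|_{L^2(P)}^2 + \|\lambda_{-1}-\lambda_{n,-1}^*\|_2^2 \le r^2,\ \text{a priori bounds hold}\}$, and solve $\phi_n(r)\lesssim r^2$ for $r$. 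The entropy input is that the class $\{\rho : [-\ell,\ell]\to\R,\ \rho(0)=\rho'(0)=\rho''(0)=0,\ \rho'' \text{ is } \lambda_1\text{-Lipschitz},\ |\lambda_1|\lesssim 1\}$ has $L^2(P)$ bracketing entropy of smooth-function type (polynomial in $1/\eps$ with a power governed by the $C^{2,1}$ smoothness on an interval of length $2\ell$), and the finite-dimensional $\lambda_{-1}$-part contributes only a logarithmic term; the $q>2$ moment bound on $Y_i - \mu_{W_i}(X_i)$ enters when passing from $G^2$ to the actual product terms $G\cdot(Y-\mu)$ that appear once one expands around $\theta_n^*$. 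Tracking how $\ell$ enters — through the diameter $\sim\ell^3$ of the function class, the variance proxy, and the $\lambda_1^2/(nB^2)$ penalty — produces the two terms $\ell^6/n^{1/4+6/7}$ and $\ell^3/n^{1+3/7}$ in the stated rate; here the replacement $\wh\sigma^2\to$ its limit and $\wh B\to B$ (allowed by the ``independent of the data'' simplification in force in this section, and by $\wh B\Pto B$) only perturb constants.

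I expect the main obstacle to be item (ii) of the difficulties flagged before the lemma: the constraint set $\Theta$ is not a product of a function space and a Euclidean space, so the localization and the strong-convexity argument must be carried out in the pooled parameter with the affine constraints $\rho(0)=\rho'(0)=\rho''(0)=0$ and $\lambda\in[0,\infty)\times\R^5$ respected throughout — in particular one must verify that the population quadratic form is coercive in the correct coordinates after projecting onto the tangent cone, which uses both the positive density of $X$ at $c$ and the specific structure of $G$ (the terms $w,(1-w),wx,(1-w)x,x^2$ are linearly independent of any admissible $\rho$ near $0$ because $\rho$ vanishes to third order). A secondary subtlety is that the population problem \eqref{dual:population} carries explicit $n$-dependence through the $\lambda_1^2/(4nB^2)$ and $(\lambda_2-\lambda_3)/n$ terms, so $\theta_n^*$ is a moving target; this is handled by keeping all constants in the a priori bounds and the curvature bound uniform in $n$ (they are, since the $1/n$ terms only help convexity and shrink the minimizer), and by noting that the comparison above is between $\wh\theta_n$ and $\theta_n^*$ at the \emph{same} $n$, so no cross-$n$ stability is needed. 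Once the rate for $\|\wh\rho_n-\rho_n^*\|_{L^2(P)}^2 + \|\wh\lambda_{n,-1}-\lambda_{n,-1}^*\|_2^2$ is in hand, the Lindeberg condition for the weights $\wh\g_{n,i} = -G(\wh\theta_n;X_i-c,W_i)/(2\wh\sigma_i^2)$ — and hence \cref{thm:main-CLT} — follows by comparing to the population weights and using $\E[G^2(\theta_n^*;X-c,W)]=\Theta(1/n)$-type bookkeeping, which I would defer to \cref{coro:lindeberg}.
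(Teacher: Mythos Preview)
Your overall architecture---strong convexity of $L_n$ plus a uniform bound on $\wh L_n - L_n$ over a localized set, combined via the basic inequality---is exactly what the paper does. But the proposal has a genuine gap that would prevent you from reaching the stated rate.

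The fingerprints $n^{6/7}$ and $n^{3/7}$ in the conclusion come from the fact that $\lambda_{n,1}^* = \O(n^{-3/7})$, not from ``tracking how $\ell$ enters.'' Your a~priori bounds $|\lambda_{n,1}^*|\lesssim 1$ and $\|\lambda_{n,-1}^*\|_2=O(1)$ are far too loose: with those, the maximal inequality over your localized shell yields at best $(\lambda_{n,1}^*)^2\ell^6/\sqrt n$ of order $\ell^6/\sqrt n$, and the final bound would miss the extra factor $n^{-6/7}$. The paper obtains $\lambda_{n,1}^* = \O(n^{-3/7})$ by a separate argument (\cref{lemma:minimax-bias-rate}): since $\wh B^2\,\wh t_n^{\,2}$ is dominated by the worst-case MSE of \emph{any} linear estimator over $\mathcal{M}_{\wh B}$, one can compare to a local quadratic kernel estimator whose MSE is $\O(n^{-6/7})$, giving $\wh t_n=\O(n^{-3/7})$ and hence $\lambda_{n,1}^*=2\wh B^2 t_n^*=\O(n^{-3/7})$. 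The paper also needs the sharper relation $\|\lambda_{n,-1}^*\|_2=O(\lambda_{n,1}^*\ell^3+n^{-1})$ (\cref{lemma:one-to-minus-one}), obtained from the first-order condition $\nabla_{\lambda_{-1}}L_n(\theta_n^*)=0$, to simplify the maximal inequality before plugging in the rate for $\lambda_{n,1}^*$. Without these two inputs your peeling argument stalls at the wrong exponent.

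Two smaller points. First, your clean coercivity claim $L_n(\theta)-L_n(\theta_n^*)\gtrsim \|\rho-\rho_n^*\|_{L^2(P)}^2+\|\lambda_{-1}-\lambda_{n,-1}^*\|_2^2$ does not follow from the linear-independence observation alone: in infinite dimensions, trivial intersection of subspaces does not give a uniform angle bound. The paper instead proves (\cref{lemma:strongly-cvx}) the displayed lower bound \emph{minus} a cross-term remainder $R_n\lesssim \ell^3(\lambda_1+\lambda_{n,1}^*)\|\lambda_{-1}-\lambda_{n,-1}^*\|_2$, obtained from $|\rho(x)-\rho_n^*(x)|\le (\lambda_1+\lambda_{n,1}^*)\ell^3/2$; this remainder is then absorbed using the localization and the rate on $\lambda_{n,1}^*$. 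Second, the $q>2$ moment bound on $Y_i-\mu_{W_i}(X_i)$ plays no role here: the dual objectives $\wh L_n$ and $L_n$ depend only on $(X_i,W_i)$, not on $Y_i$, so no ``$G\cdot(Y-\mu)$'' terms appear in this lemma. That moment condition is used later, for the CLT and for consistency of the sandwich variance, not for the dual rate.
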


\begin{corollary}
\label{coro:lindeberg}
Under the conditions of \cref{thm:validity}, the weights $\wh{\gamma}_i$ underlying
$\wh{\tau}_{\mathtt{PLRD}}$ satisfy 
\begin{equation}
    \frac{\|\wh\g\|_\infty}{\|\wh\g\|_2}=\frac{\max_{1\le i\le n} |\wh\g_{i}|}{(\sum_{i=1}^n \wh\g_{i}^2)^{1/2}}\Pto 0.
\end{equation}
\end{corollary}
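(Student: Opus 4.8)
The plan is to read off the PLRD weights from the dual optimizer and reduce the claim to two one-sided bounds. As recorded in \cref{sec:asymptotic}, if $\wh\theta_n=(\wh\rho_n,\wh\lambda_n)$ solves the empirical dual \eqref{dual:empirical-duplicate}, then $\wh\g_{n,i}=-G(\wh\theta_n;X_i-c,W_i)/(2\wh\sigma_i^2)$ with $\wh\sigma_i^2$ bounded above and below (the cross-fitting bookkeeping only splits the sample into two folds of size $\Theta(n)$ and contributes a harmless factor of $2$). Hence, up to universal constants,
\begin{equation*}
\frac{\|\wh\g\|_\infty}{\|\wh\g\|_2}\ \le\ C\,\frac{\max_{1\le i\le n}|G(\wh\theta_n;X_i-c,W_i)|}{\bigl(\sum_{i=1}^n G^2(\wh\theta_n;X_i-c,W_i)\bigr)^{1/2}},
\end{equation*}
so it suffices to show the numerator is $\o(n^{-1/2})$ and the denominator is at least a constant times $n^{-1/2}$. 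For the denominator, I would use that the minimax linear estimator in \eqref{eqn:minimax-problem0} is exactly unbiased along the un-penalized polynomial directions of $\cal{M}_{\wh{B}}$; in particular its weights satisfy $\sum_{i:\,W_i=1}\wh\g_i=1$, an identity that survives cross-fitting because it holds in each fold. Cauchy--Schwarz then gives $1\le|\{i:W_i=1\}|^{1/2}\,\|\wh\g\|_2\le n^{1/2}\|\wh\g\|_2$, so $\|\wh\g\|_2\ge n^{-1/2}$ and $\sum_{i=1}^n G^2(\wh\theta_n;X_i-c,W_i)\ge c\,n^{-1}$.

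For the numerator, split $G$ into its function part $\wh\rho_n$ and its polynomial part. Since $\wh\rho_n(0)=\wh\rho_n'(0)=\wh\rho_n''(0)=0$ with $\wh\rho_n''$ being $\wh\lambda_{n,1}$-Lipschitz (by the definition of $\Theta$ in \eqref{def:dual-space}), iterated integration gives $|\wh\rho_n(x)|\le\wh\lambda_{n,1}|x|^3/6$; as every retained observation has $|X_i-c|\le\ell$, the function part contributes at most $\wh\lambda_{n,1}\ell^3/6$, while the polynomial part contributes at most $C\|\wh\lambda_{n,-1}\|_2$ for $n$ large (because $\ell\to0$ by \eqref{eq:ell_decay}). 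The first piece is handled by the primal--dual identity $\wh\lambda_{n,1}=2\wh B^2\wh t_n$: the rate $\wh t_n=\O(n^{-3/7})$ is established on the way to \cref{thm:main-CLT} (it is controlled by the $\Theta(n^{-6/7})$ minimax mean-squared error), and $\ell^3=\o(n^{-1/4})$ by \eqref{eq:ell_decay}, so $\wh\lambda_{n,1}\ell^3=\o(n^{-19/28})=\o(n^{-1/2})$. The second piece requires $\|\wh\lambda_{n,-1}\|_2=\o(n^{-1/2})$; by \cref{thm:dual-rate-of-conv} we already have $\|\wh\lambda_{n,-1}-\lambda_{n,-1}^*\|_2=\o(n^{-1/2})$ (both terms in its bound are $o(n^{-1})$ before taking the square root, and the powers of $\ell$ only help), so it remains to control the population coordinates $\lambda_{n,-1}^*$.

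For that step I would use the first-order conditions of the population dual \eqref{dual:population} in $\lambda_2,\ldots,\lambda_6$, which are interior coordinates: they say that $G(\theta_n^*;\cdot)$ is orthogonal in $L^2(P)$ to $w(X-c),\,(1-w)(X-c),\,(X-c)^2$ and has inner products of size $\O(1/n)$ against $w$ and $1-w$. Writing $G(\theta_n^*;\cdot)=\rho_n^*+\langle\lambda_{n,-1}^*,v\rangle$ with $v=(w,\,1-w,\,w(X-c),\,(1-w)(X-c),\,(X-c)^2)$, this gives $\E[vv^\top]\,\lambda_{n,-1}^*=\O(1/n)-(\langle\rho_n^*,v_j\rangle)_j$, where the last vector has norm $\lesssim\|\rho_n^*\|_\infty\lesssim\lambda_{n,1}^*\ell^3=\o(n^{-1/2})$ using the population analogue $\lambda_{n,1}^*=\O(n^{-3/7})$. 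Since $X$ has a positive density at $c$, the (appropriately rescaled) design Gram matrix $\E[vv^\top]$ is invertible with a controlled inverse, so inverting yields $\|\lambda_{n,-1}^*\|_2=\o(n^{-1/2})$, hence $\|\wh\lambda_{n,-1}\|_2=\o(n^{-1/2})$. Combining the two pieces, the numerator is $\o(n^{-1/2})$, and with the denominator bound the corollary follows.

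I expect the main obstacle to be this control of the ``correction'' dual coordinates $\lambda_{n,-1}^*$: heuristically they are of order $n^{-6/7}$ (comparable to the weight values near the cutoff), strictly smaller order than the curvature multiplier $\lambda_{n,1}^*\asymp n^{-3/7}$, and obtaining even the weaker bound $\o(n^{-1/2})$ requires unpacking the population optimum through its first-order conditions and the invertibility of $\E[vv^\top]$ rather than merely quoting convergence rates. A subtlety here is that, because the pilot window shrinks, the $(X-c)^2$ direction of $\E[vv^\top]$ degrades like $\ell^4$, so one must verify that the correspondingly amplified component of $\lambda_{n,-1}^*$ still clears the $n^{-1/2}$ threshold under \eqref{eq:ell_decay}; tracking the density factor $\P(|X-c|\le\ell)\asymp\ell$ in the relevant inner products shows it does, with room to spare. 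The remaining ingredients---the closed form of the weights, the unbiasedness identity $\sum_{W_i=1}\wh\g_i=1$, and the rate inputs from \cref{thm:dual-rate-of-conv,thm:main-CLT}---are routine to assemble.
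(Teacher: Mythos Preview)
Your proposal is correct and follows essentially the same route as the paper: lower-bound $\|\wh\g\|_2$ via the linear constraint $\sum_{W_i=1}\wh\g_i=1$ and Cauchy--Schwarz (the paper uses the equivalent $\sum_i(2W_i-1)\wh\g_i=2$), and upper-bound $\|\wh\g\|_\infty$ by the Taylor bound $|\wh\rho_n(x)|\lesssim\wh\lambda_{n,1}\ell^3$ together with the first-order-condition bound $\|\lambda_{n,-1}^*\|_2=O(\lambda_{n,1}^*\ell^3+n^{-1})$---which is exactly the paper's \cref{lemma:one-to-minus-one}---and then pass to $\wh\lambda_{n,-1}$ via \cref{thm:dual-rate-of-conv}, combining everything with the rate $\lambda_{n,1}^*=\O(n^{-3/7})$ from \cref{lemma:minimax-bias-rate}. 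Your explicit concern about the conditioning of $\E[vv^\top]$ as $\ell\to0$ actually goes slightly beyond the paper, which in \cref{lemma:strongly-cvx} and \cref{lemma:one-to-minus-one} treats $\lambda_{\min}(A)$ as a fixed positive constant without tracking its dependence on $\ell$.
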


\cref{coro:lindeberg} immediately implies that the central limit theorem claimed in
\cref{thm:main-CLT} holds; see the proof of \citet[Theorem 1]{optrdd2019} for details. 
We delegate showing that $s^2(\wh\g)=\O(n^{-6/7})$ (\cref{lemma:minimax-bias-rate}) and
$\wh{b}/{s}(\wh\g)=\O(1)$ (\cref{lemma:bias-ok}) to \cref{proof-of-lemma}.
Finally, to verify \eqref{eq:bias_ok}, we note that \cref{assu:part_lin,assu:smooth} yield
$b(\wh{\g}) = \sum_{i=1}^n \wh\g_i \mu_{W_i}(X_i) = B\sum_{i=1}^n\wh\g_i \rho(X_i-c)$, where
$$\rho(x-c):=(\mu_0(x)-\mu_0(c)-\mu'_0(c)(x-c)-\mu''_0(c)(x-c)^2/2)/B.$$
It follows from \cref{algo:PLRD} (with cross-fitted $\wh{B}$) that $|b(\wh\g)|\le B\wh{t}$ where $\wh{t}=\wh{b}/\wh{B}$. Consequently,
\begin{equation}
\label{eq:bias_cmp}
\limsup_{n\to\infty} \frac{\left| b(\wh{\g})\right| - \wh{b}}{{s}(\wh\g)} - \frac{(B/\wh{B}-1)\wh{b}}{{s}(\wh\g)} \leq_p 0. 
\end{equation}
Furthermore, since $\wh{B}\Pto B$ and $\wh{b}/{s}(\wh\g)=\O(1)$, we have
$$\lim_{n\to\infty} \frac{(B/\wh{B}-1)\wh{b}}{{s}(\wh\g)} =_p 0, $$
and so \eqref{eq:bias_cmp} implies \eqref{eq:bias_ok}.


\bibliographystyle{chicago}
\bibliography{references}

\newpage

\appendix
\setcounter{section}{0}
\linespread{1.15}

\section{Safeguard against Curvature Changes}
\label{sec:cuv_change}

While \cref{assu:part_lin} provides a parsimonious and empirically realistic model for treatment effect heterogeneity in many applications, it can in principle be violated. To safeguard against such violations, our software incorporates an automatic diagnostic that tests for difference in curvatures. 
The diagnostic works as follows. Before running the main PLRD procedure, we conduct an ANOVA test comparing a restricted cubic polynomial model (where treatment interacts only linearly with $X_i - c$, consistent with \cref{assu:part_lin}) against an unrestricted model (where treatment interacts with all polynomial terms in $X_i - c$). If the test rejects at significance level $\alpha'$ (default 0.1\%), the method proceeds with a more flexible specification that only assumes \cref{assu:smooth} but not \cref{assu:part_lin}. Otherwise, it uses the default specification under both assumptions. This adaptive approach helps explain the strong empirical performance of \texttt{plrd} across all simulation settings in \cref{tab:simulations}, including those where \cref{assu:part_lin} fails.

When the flexible specification is selected, estimation of the Lipschitz constant $B$ changes to accommodate potentially different curvature on either side of the threshold. Rather than fitting a single cubic polynomial as in \eqref{eq:poly}, we fit \emph{separate} cubic regressions for treated and control units:

\begin{proposition}[Consistent estimation of the Lipschitz constant $B$ under different curvatures]
\label{prop:Bdiffcurv}
Suppose that \cref{assu:smooth} holds and that $X_i$ has a density bounded away
from 0 in a neighborhood of $c$, and $B := \max\{|\mu_0'''(c)|,|\mu_1'''(c)|,\eps\}$. Estimate $\wh B$ via local cubic regression: For $w\in\{0,1\}$ run
\begin{equation}
\label{eq:polydiffcurv}
\wh\beta_w:=\argmin_{\beta_w\in \R^4} \sum_{\substack{W_i = w \\ |X_i - c| \leq \ell}} \left(Y_i - \beta_{w,0}-\beta_{w,1} (X_i-c)-\frac{\beta_{w,2}}{2}(X_i-c)^2-\frac{\beta_{w,3}}{6}(X_i-c)^3\right)^2,
\end{equation}
with bandwidth $\ell$ scaling as in \eqref{eq:ell_decay}, and set
$\wh B = \max\{|\beta_{0,3}|, |\beta_{1,3}|,\eps\}$.
Then, $\wh B\rightarrow_p B$ as $n\to\infty$.
\end{proposition}
With this estimated $\wh B$ in hand, the subsequent minimax optimization in \cref{algo:PLRD} adjusts the constraint set to allow for different curvature functions $\rho_0$ and $\rho_1$ on either side of the cutoff (see \eqref{eq:point_est_diffcurv} versus \eqref{eq:point_est}). The optimization problem remains computationally tractable via quadratic programming, and the remainder of the PLRD procedure—including cross-fitting, variance estimation, and confidence interval construction—proceeds similarly to the default case. We note that a formal proof of asymptotic validity under this flexible specification would follow similar arguments to those in \cref{thm:validity}, with nearly identical technical steps that are omitted here.

\section{Details on Baseline Methods}\label{app:ablation-analysis}

In this section, we give further details on the baseline methods \texttt{rdrobust} and
\texttt{RDHonest} considered in our experiments, and provide further discussion of how
our modeling assumptions differ from those used to-date in software for bias-aware inference
in RDDs.

\subsection{Bias-Corrected Confidence Intervals}\label{sec:rdrobust}

Define the local polynomial estimator of order $p$ as
$
\wh{\tau}_p(h_n):=\wh{\mu}_{1,\, p}(h_n)-\wh{\mu}_{0,\, p}(h_n),
$
where $\widehat{\mu}_{1,\, p}(h_n)$ and $\widehat{\mu}_{0,\, p}(h_n)$ are the intercepts from local polynomial regressions of order $p$ fitted separately on each side of the threshold. More precisely, for $w\in\{0,1\}$, compute
\begin{align*}
\wh{\mu}_{w,\, p}(h_n)&:=\mathbf{e}_0^{\prime} \wh{\boldsymbol{\beta}}_{w,\, p}(h_n),\\ \wh{\boldsymbol{\beta}}_{w,\, p}(h_n)&:=\argmin _{\boldsymbol{\beta} \in \mathbb{R}^{p+1}} \sum_{i\,:\,W_i=w} K(|X_i-c|/h_n)\left(Y_i-\mathbf{r}_p(X_i-c)^{\top} \boldsymbol{\beta}\right)^2,
\end{align*}
where $\mathbf{e}_0=(1,0, \ldots, 0) \in \mathbb{R}^{p+1}$ is the first unit vector, $\mathbf{r}_p(x)=\left(1, x, \ldots, x^p\right)^{\top}$, $K(\cdot)$ is a kernel, and $h_n$ is a bandwidth. The first-order bias of $\wh\tau_p(h_n)$ is \citep{CCT2014robustCI,imbens2012optimal}
$$B_p(h_n)= \frac{h^{p+1}}{(p+1)!}\left[\mu_{1}^{(p+1)}(c+)-\mu_{0}^{(p+1)}(c-)\right], $$
where $\mu_w^{(p+1)}$ denotes the derivative of $\mu_w(\cdot)$ of order $p+1$. \cite{CCT2014robustCI} estimate the above bias using a higher-order local polynomial of order $q>p$ with a separate pilot bandwidth $b_n$, thus compute the bias-corrected estimator as $$\wh\tau^\mathrm{bc}_{p,q}(h_n,b_n) := \wh\tau_{p}(h_n)-h_n^{p+1}\wh{B}_{n,p,q}(h_n,b_n),$$ where $\wh{B}_{n,p,q}(h_n,b_n)$ is the estimated bias term $B_p(h_n)$. The confidence interval they construct based on the bias-corrected estimator takes the following form:
\begin{equation}\label{ci-bc}
    \wh{\text{CI}}_{1-\alpha}^{\text{bc}}(h_n, b_n) = \left( \wh\tau^\mathrm{bc}_{p,q}(h_n,b_n) \pm z_{1-\alpha/2}  \wh{\sigma}(h_n) \right),
\end{equation}
where $z_{1-\alpha/2}$ is the critical value from the standard normal distribution, and $\wh{\sigma}(h_n)$ is the standard error (for which \citet{CCT2014robustCI} provide two options: plug-in and fixed matches). The above CI construction is implemented in the \texttt{rdrobust} package, with  $p=1$ and $q=2$ as default.

 The CI constructed in \eqref{ci-bc} might not perform well in finite samples because the estimator of the bias introduces additional variability. To address this, \citet{CCT2014robustCI} also propose an alternative, more robust CI construction using the bias-corrected estimator $\wh\tau^\mathrm{bc}_{p,q}(h_n,b_n)$ as a starting point. They construct this robust CI by inflating the CI in \eqref{ci-bc} with a more conservative estimator of the standard error:
 \begin{equation}\label{ci-rbc}
    \wh{\text{CI}}_{1-\alpha}^{\text{rbc}}(h_n, b_n) = \left( \wh\tau^\mathrm{bc}_{p,q}(h_n,b_n) \pm z_{1-\alpha/2}  \sqrt{\wh{V}^\text{bc}_{n,p,q}} \right),
\end{equation}
where $\wh{V}^\text{bc}_{n,p,q}$ is an estimator of the asymptotic variance that accounts for both the variability in estimation and the bias-correction. In \cref{tab:wgan,tab:real-data-CIs,tab:simulations}, the columns marked as \ag{\texttt{rdrobut} correspond to the confidence intervals computed using \eqref{ci-rbc}}.

\subsection{Bias-Aware Confidence Intervals}\label{sec:bias-aware-recap}

The bias-aware inference approach \citep[e.g.,][]{ArmstrongKolesar2018,ArmstrongKolesar2020,optrdd2019} provides a different approach for obtaining confidence intervals for $\t$. Note that the LLR estimator takes the form $\wh\tau_\texttt{LLR}=\sum_{i=1}^n \g_{i,\texttt{LLR}} Y_i$ for some weights $\g_{i,\texttt{LLR}}$ that are functions of the running variable. In fact, the same also holds more generally for the estimator $\wh{\tau}_p(h_n)$ defined in \cref{sec:rdrobust}. Next, one makes the observation that although $\wh\tau_\texttt{LLR}$ is motivated by
a regression problem, the rest of our inference procedure hinges upon on properties that hold for any linear estimators of the form $\wh\tau(\g)=\sum_{i=1}^n \g_i Y_i$. Given by this observation, we can systematically do better by directly finding weights $\g_i$ that minimize the mean squared error (MSE) of the estimator $\wh\tau(\g)$ uniformly over a class $\cal{M}$ of conditional response functions $\mu_{w}$, i.e., 
\begin{equation}\label{bias-aware-weights}
    \wh\g :=\argmin_\g \sup_{\mu_{w}\,\in\,\cal{M}} \text{MSE}\left(\wh\tau(\g)\mid \{X_i\}_{i=1}^n\right).
\end{equation}
The two most widely considered classes to date \citep{ArmstrongKolesar2018,KolesarRothe2018,optrdd2019} are
\begin{align}
&\cal{M}_{\text{T2},\, M} =\left\{\mu_{w}(\cdot):|\mu_{w}(x)-\mu_{w}(c)-\mu_{w}'(c)x|\le M(x-c)^2/2\text{ for every }x,\text{ and } w=0,1\right\}, \label{def:T2}\\
&\cal{M}_{\text{H2},\, M} =\left\{\mu_{w}(\cdot): \mu'_{w}(x) \text{ is $M$-Lipschitz for all $x$ and } w=0,1\right\},\label{def:H2}
\end{align}
where $M>0$ is a constant. The first option, considered in early work by \citet{ArmstrongKolesar2018},
justifies accuracy of a 2nd-order Taylor expansion at $c$ and allows for explicit, closed-form solutions
to the minimax linear estimation problem. The second option, which simply takes all functions to belong
to a 2nd-order H\"older class, was used by \citet{KolesarRothe2018} and \citet{optrdd2019} and also
adopted for the \texttt{RDHonest} package by \citet{ArmstrongKolesar2018}.

Fixing a value of $M$, we numerically solve the optimization problem \eqref{bias-aware-weights} and obtain an upper bound on the worst-case bias of the resulting estimator as a by-product. We then follow \citet{ImbensManski2004} to construct a bias-aware confidence interval by inflating the Gaussian confidence interval accounting for the worst-case bias of $\tau(\wh\g)$ under the function class $\cal{M}$. These bias-aware confidence intervals provide uniformly valid coverage in finite samples,
provided the function class $\cal{M}$ and the distribution of $Y_i$ given $X_i$ is Gaussian \citep{ArmstrongKolesar2018}.\footnote{If the conditional distribution of $Y_i$ is non-Gaussian, one can recover exact finite-sample results using the bias-aware approach by using finite-sample concentration results instead of Gaussian approximation.} In practice, however, smoothness constant $M$ is typically unknown; both \citet{ArmstrongKolesar2020} and \citet{optrdd2019} suggest using some rule-of-thumb estimates.

\subsection{The Choice of Function Class}

One of the key insights our paper provides is that the choice of the function class $\cal{M}$ in the minimax problem \eqref{eqn:minimax-problem0} plays a crucial role in determining the conservativeness (and thus the practical usefulness) of the resulting bias-aware method. 
The function classes used in the existing bias-aware literature ({\it e.g.}, the classes $\cal{M}_{\text{T2},\, M}$ and $\cal{M}_{\text{H2},\, M}$ defined in \eqref{def:T2} and \eqref{def:H2} respectively) are much modest in terms of assumed smoothness as compared to the the bias-corrected methods used in practice (see, {\it e.g.}, \citet[Theorem 1]{CCT2014robustCI}). Here, on the other hand, use the class of all functions $\mu_{w}(\cdot)$ that satisfy \cref{assu:part_lin} and for which $\mu_{0}''(\cdot)$ is $\wh{B}$-Lipschitz, where we obtain $\wh{B}$ via a global cubic regression and cross-fitting.  While this might appear as a minor modification from a theoretical point of view, it materially affects the width of our intervals in practice, as we demonstrate in \cref{tab:wgan,tab:real-data-CIs,tab:simulations}. 

To illustrate the difference between the conservativeness of our function class versus the function classes used by existing bias-aware approaches, we plot in \cref{fig:enter-label} the regression functions corresponding to the worst-case bias obtained by our method and $\cal{M}_{\text{H2},\, M}$ as used in \texttt{RDHonest}. This plot demonstrates that the classes $\cal{M}_{\text{H2},\, M}$ can often be unreasonably broad, and include regression functions that are \emph{ex ante} implausible or rarely occur in practice. This explains why the resulting bias-aware confidence intervals can be substantially more conservative than the bias-corrected ones (see, {\it e.g.},~\cref{tab:wgan,tab:real-data-CIs,tab:simulations}). The empirical success of our PLRD method that we demonstrate in \cref{sec:empirical-assessment} is, in fact, rooted in this choice of the function class over which one computes the worst-case MSE.

\begin{figure}[t]
\centering
\includegraphics[width=0.495\linewidth]
{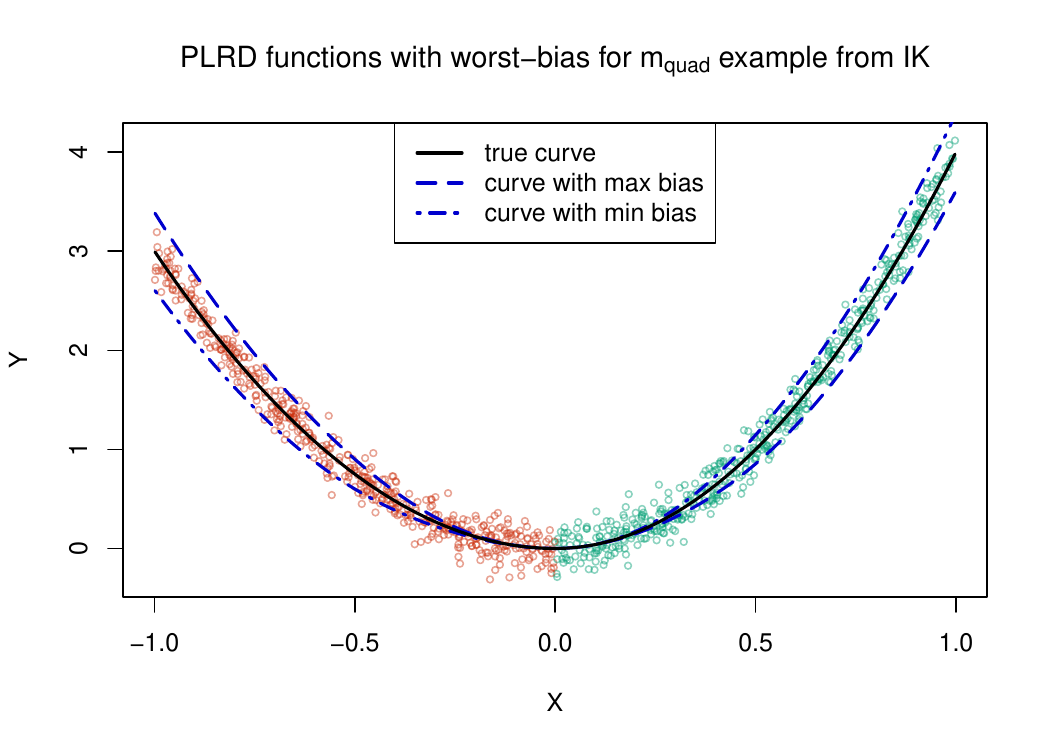}
\includegraphics[width=0.495\linewidth]
{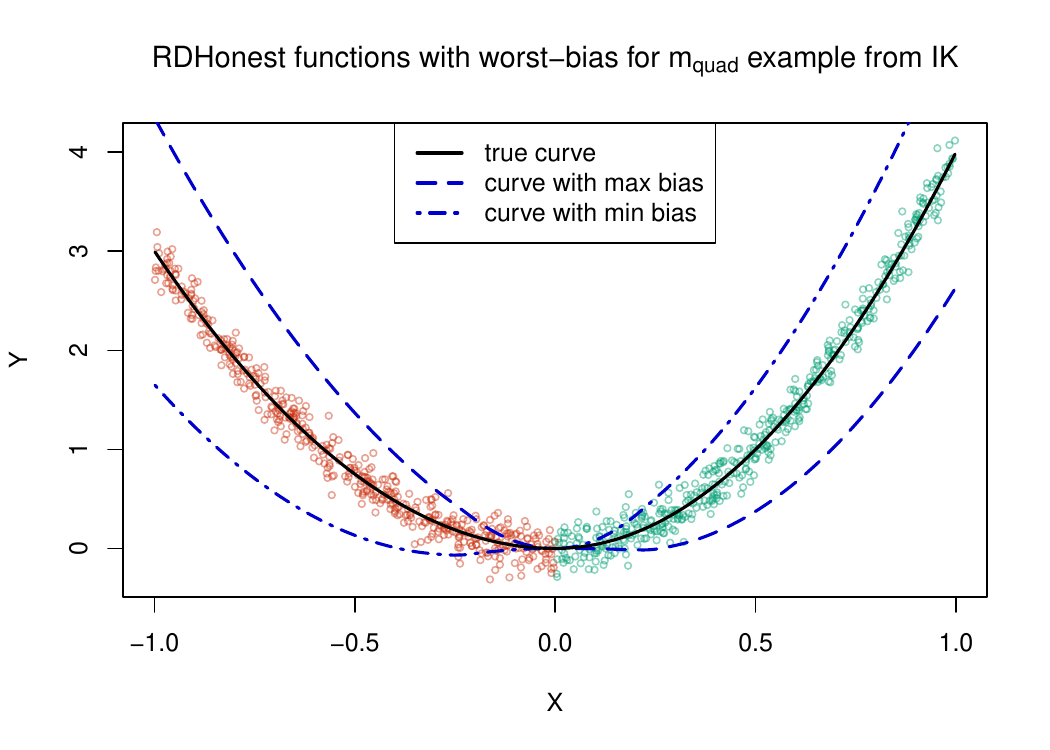}
\includegraphics[width=0.495\linewidth]
{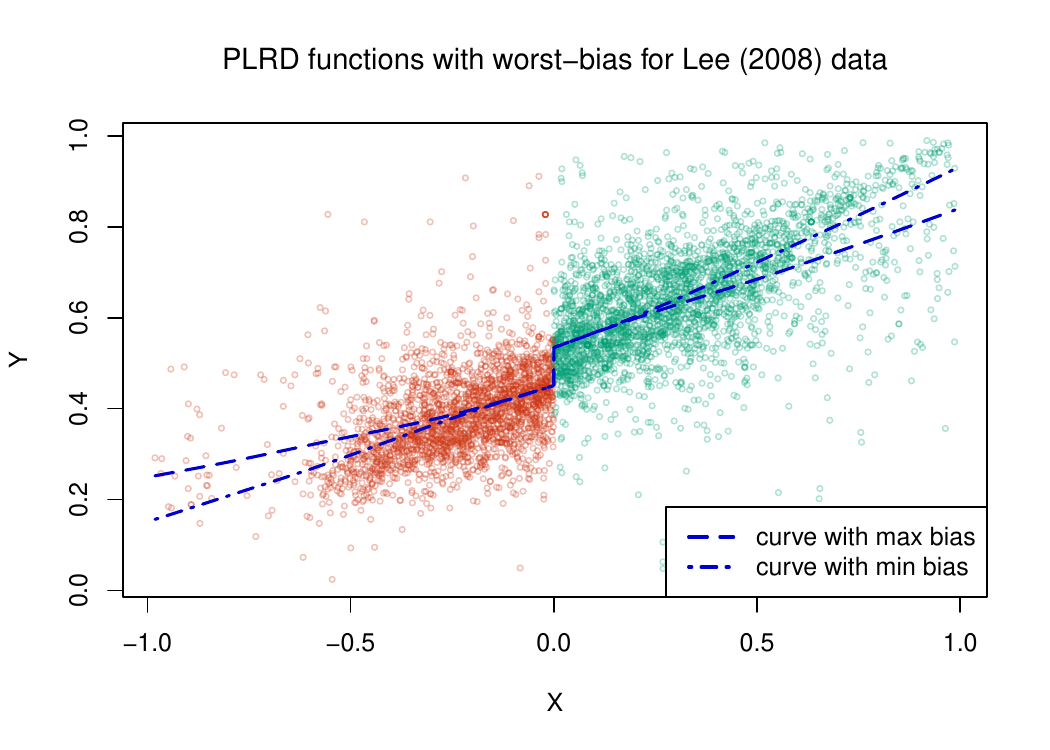}
\includegraphics[width=0.495\linewidth]
{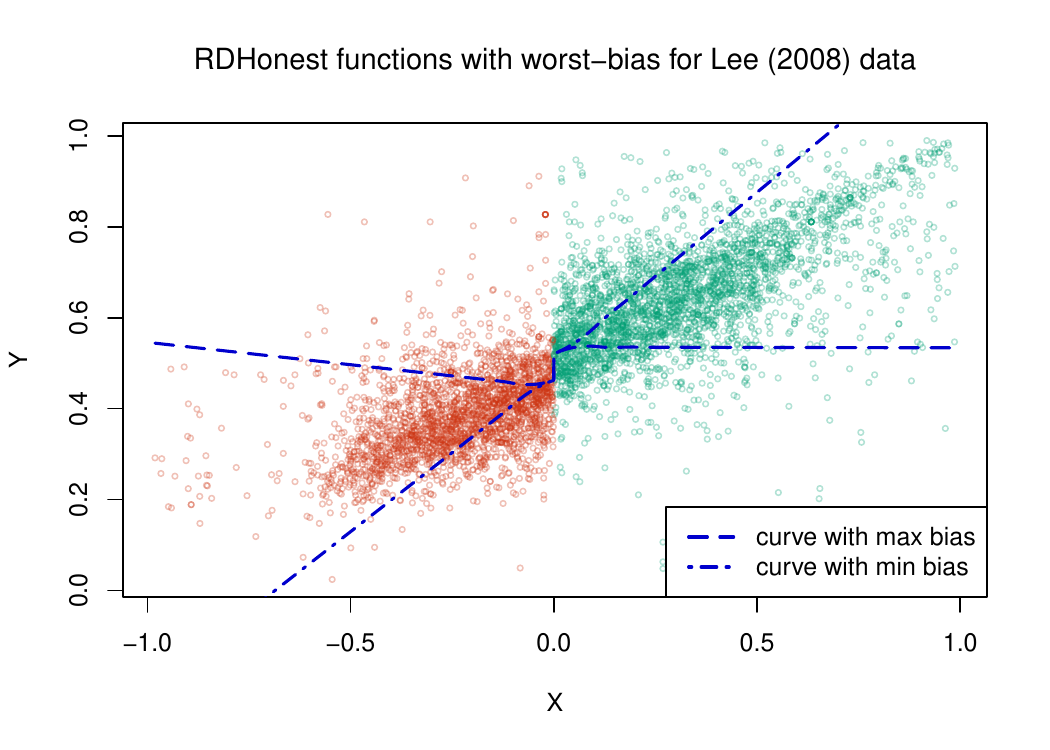}
    \caption{Functions $\mu_{w}(\cdot)$ with worst-case bias in the RDHonest and PLRD function classes. The top two plots use the $m_\text{quad}$ example from \citet{imbens2012optimal}, while the bottom two plots use the data from \citet{lee2008}. In each plot, black lines show $\mu_{w}(\cdot)$ functions with maximum (solid) or minimum (dotted) bias in the respective function class. This figure illustrates how the RDHonest function class may accommodate in-hindsight-unreasonable functions for several practical settings.}
    \label{fig:enter-label}
\end{figure}

\subsection{Ablation Analysis}

Here we present an empirical comparison of our PLRD method with other alternative methods. We summarize our empirical results in \cref{tab:wgan-ablation}, which complements the results in \cref{tab:wgan} of the main paper.
\cref{tab:wgan-ablation} compares our PLRD method with four other alternative approaches:

\begin{table}[p]
\small
\centering
\begin{tabular}{@{}lccccc@{}}
\toprule
\textbf{Dataset} & \textbf{Alt.~(a)} & \textbf{Alt.~(b)} & \textbf{Alt.~(c)} & \textbf{Alt.~(d)} & \texttt{plrd} \\
\midrule
\citet{lee2008}              & \better{0.041}  & 0.051  & 0.045  & 0.043  & 0.045  \\
                              & (\good{96.6\%})  & (\good{95.8\%})  & (\good{98.2\%})  & (\good{99.7\%})  & (\good{98.1\%})  \\[1mm]
\citet{cattaneo2015senate}   & 5.567  & 6.880  & 5.652  & 9.709  & \better{3.870}  \\
                              & (\good{96.6\%})  & (\good{96.0\%})  & (\good{98.3\%})  & (\good{99.8\%})  & (\good{96.1\%})  \\[1mm]
\citet{LudwigMiller}         & 6.626  & 8.104  & 7.049  & {3.905}  & \better{4.272}  \\
                              & (\good{96.8\%})  & (\good{96.3\%})  & (\good{97.8\%})  & (87.4\%)  & (\good{96.8\%})  \\[1mm]
\citet{meyersson2014islamic}  & 7.011  & 8.736  & 6.432  & 7.599  & \better{4.618}  \\
                              & (\good{96.3\%})  & (\good{95.9\%})  & (\good{96.7\%})  & (\good{97.9\%})  & (\good{96.3\%})  \\[1mm]
\citet{matsudaira} (read.)   & 0.056  & 0.070  & \better{0.059}  & 0.085  & \better{0.059}  \\
                              & ({90.5\%})  & (\good{95.4\%})  & (\good{97.7\%})  & (\good{94.8\%})  & (\good{97.7\%})  \\[1mm]
\citet{matsudaira} (math)    & \better{0.051}  & 0.061  & 0.052  & 0.075  & {0.052}  \\
                              & (\good{95.0\%})  & (\good{95.1\%})  & (\good{97.6\%})  & ({92.4\%})  & (\good{97.6\%})  \\[1mm]
\citet{jacob2004} (read.)    & 0.076  & 0.099  & 0.082  & \better{0.070}  & 0.076  \\
                              & (\good{96.4\%})  & (\good{95.1\%})  & (\good{95.4\%})  & (\good{97.8\%})  & (\good{97.5\%})  \\[1mm]
\citet{jacob2004} (math)     & 0.066  & 0.085  & 0.070  & 0.070  & \better{0.061}  \\
                              & (\good{96.9\%})  & (\good{95.0\%})  & (\good{95.5\%})  & (\good{97.1\%})  & (\good{97.9\%})  \\[1mm]
\citet{Oreopoulos}           & 0.210  & 0.393  & 0.244  & 0.183  & \better{0.124}  \\
                              & (\good{96.2\%})  & (\good{96.6\%})  & (\good{98.3\%})  & (\good{98.3\%})  & (\good{96.2\%})  \\[1mm]
\textbf{Pure noise simulation}  & 1.611 & 2.053 & 1.678 & 1.049 & \better{1.011}  \\
& (\good{96.2\%})  & (\good{98.4\%})  & (\good{97.1\%})  & (\good{98.3\%})  & (\good{97.2\%})  \\[1mm]
\textbf{Simulation Setting 1}   & 0.238  & 0.301  & 0.250  & {0.169} & \better{0.237}  \\
& ({93.5\%})  & (\good{97.5\%})  & (\good{95.2\%})  & ({88.9\%})  & (\good{94.8\%})  \\[1mm]
\textbf{Simulation Setting 2}   & 0.330  & 0.313 &  \better{0.389} &  0.468 & \better{0.389} \\
& ({87.9\%})  & ({86.6\%})  & (\good{97.3\%})  & ({83.9\%})  & (\good{97.3\%})  \\[1mm]
\textbf{Simulation Setting 3}   & \better{0.252}  & 0.300 &  0.290 &  0.421 & {0.290} \\
& (\good{94.9\%})  & (\good{97.5\%})  & (\good{99.6\%})  & (\good{99.9\%})  & (\good{99.6\%})  \\[1mm]
\textbf{Simulation Setting 4}   & 0.235  & 0.300 &  0.276 &  0.368 & \better{0.178} \\
& (\good{96.2\%})  & (\good{98.2\%})  & (\good{99.3\%})  & (\good{100.0\%})  & (\good{96.1\%})  \\
\bottomrule
\end{tabular}
\caption{Comparison of PLRD with other alternatives: first two columns correspond to inflating conventional confidence intervals: instead of $1.96$, we multiply the standard error by a factor of (a) $2.181$  or (b) $2.113$, as suggested by \citet{ArmstrongKolesar2020};  the next two columns correspond to (c) PLRD minus \cref{assu:part_lin}, and (d) PLRD minus \cref{assu:smooth}; the last column corresponds to our PLRD method. We write the empirical coverages in parentheses and the average width in the rows. Other details of this table are same as for \cref{tab:wgan,tab:simulations} in the main text.} 
\label{tab:wgan-ablation}
\end{table}

\begin{enumerate}[label=(\alph*)]
    \item  Bias-aware intervals for local linear regression with inflated critical values as proposed by  \citet{ArmstrongKolesar2020}:\footnote{Our end-to-end implementation differs from that in \citet{ArmstrongKolesar2020}, as we choose bandwidths for local linear regression using the bandwidth
    selector of \citet{imbens2012optimal} as implemented in \texttt{rdrobust}; whereas \citet{ArmstrongKolesar2020}
    discuss adaptive bandwidth choice using an alternative rule of thumb. Here, we chose to pick bandwidths as
    in \texttt{rdrobust} to enable a cleaner comparison to the conventional estimator.}
    \begin{enumerate}[label=(\roman*)]
        \item Fit a local linear regression using the \texttt{rdrobust} package, with default bandwidth selection rule.
        \item Obtain the LLR estimator $\wh\tau_\texttt{LLR}$ with its conventional standard error estimate $\wh{\text{SE}}_\texttt{LLR}$ 
        \item Build the confidence interval $\wh\tau_\texttt{LLR} \pm 2.181\cdot \wh{\text{SE}}_\texttt{LLR}$.
    \end{enumerate}
    \item Bias-aware intervals for local quadratic regression with inflated critical values as proposed by \citet{ArmstrongKolesar2020}:
    \begin{enumerate}[label=(\roman*)]
        \item Fit a local quadratic regression using the \texttt{rdrobust} package (i.e., with $p=2$ in \cref{sec:rdrobust}), with default bandwidth selection rule.
        \item Obtain the LQR estimator $\wh\tau_\texttt{LQR}$ (defined as $\wh\tau_2(h_n)$ in \cref{sec:rdrobust}) with its conventional standard error estimate $\wh{\text{SE}}_\texttt{LQR}$ 
        \item Build the confidence interval $\wh\tau_\texttt{LQR} \pm 2.113\cdot \wh{\text{SE}}_\texttt{LQR}$.
    \end{enumerate}
    \item PLRD without \cref{assu:part_lin}: Bias-aware inference operating under the class $$\cal{M}_{\text{H3},\, B}=\left\{\mu_{w}(\cdot): \mu''_{w}(x) \text{ is $M$-Lipschitz for all $x$ and } w=0,1\right\}.
$$ This method uses the additional smoothness just as PLRD, while allowing different curvatures around the treatment threshold. 
    \begin{enumerate}[label=(\roman*)]
        \item Estimate $B$ using a global cubic regression on either side of the threshold, in the same manner as we estimate $B$ via global cubic regression in \cref{prop:B}.
        
        \item Follow \citet{ArmstrongKolesar2018} and \citet{optrdd2019} to solve the minimax problem \eqref{eqn:minimax-problem0} with the function class $\cal{M}_{\text{H3},\, \wh{B}}$ via numerical convex optimization and compute the minimax linear estimator for $\t$ within this new class.

        \item As a by-product of the optimization problem, obtain a bound on the worst-case bias of our estimator within this class, and construct ``bias aware" confidence intervals for $\t$ using the \citet{ImbensManski2004} construction. 
    \end{enumerate} 
    Algorithmically, this is similar to our \cref{algo:PLRD}; we also  implemented this as a part of our \texttt{plrd} package.
    
    \item PLRD without \cref{assu:smooth}: Bias-aware intervals operating under the intersection of the class $\cal{M}_{\text{H2},\, M}$ and the class of regression functions satisfying \cref{assu:part_lin}. In essence, this method combines \texttt{optrdd} \citep{optrdd2019} with \cref{assu:part_lin}.
    \begin{enumerate}[label=(\roman*)]
        \item Here we estimate $M$ using a global quadratic regression on either side of the threshold, in the same manner as we estimate $B$ via global cubic regression in \cref{prop:B}.
        \item Follow \citet{ArmstrongKolesar2018} and \citet{optrdd2019} to solve the minimax problem \eqref{eqn:minimax-problem0} with the above function class via numerical convex optimization and compute the minimax linear estimator for $\t$ within this new class.
        
        \item Obtain an upper bound on the worst-case bias of our estimator within this class, and construct ``bias aware" confidence intervals for $\t$ using the \citet{ImbensManski2004} construction. 
    \end{enumerate}
    Algorithmically, this is similar to the \texttt{optrdd} package by \citet{optrdd2019}.\footnote{In fact, over the weaker smoothness class $\cal{M}_{\text{H2},\, M}$, \cref{assu:part_lin} on partial linearity becomes much less useful, because the second derivative of the baseline effect can have (bounded) sudden jumps anyways. As such, we also expect numerical performance of this ablation baseline to be close to \texttt{optrdd}.} We also provide an implementation of this method as a part of our replication files.
\end{enumerate}

The empirical results demonstrate that the rule-of-thumb methods (AK20 (a) and AK20 (b)) achieve valid coverage but at the cost of wider intervals, mirroring our earlier findings in \cref{tab:wgan,tab:simulations}. The relaxed variants of PLRD offer insights into the role of each assumption: Dropping either \cref{assu:smooth} or \cref{assu:part_lin} generally leads to more conservative intervals, suggesting that both assumptions contribute to the efficiency of PLRD.

\section{Additional Details on Real Data Experiments}\label{app:real-data-details}

Here we provide detailed descriptions of the real datasets analyzed in the main paper. We reference each dataset by the paper that introduced it, same as in \cref{tab:wgan,tab:real-data-CIs,tab:wgan-ablation}.

\begin{enumerate}
    \item \citet{lee2008}: This dataset, frequently used for illustrating  regression discontinuity methods, examines the incumbency advantage in U.S. House elections. The running variable is the margin of victory, defined as the difference in vote share between the Democratic candidate and their strongest opponent, with the threshold set at $c = 0$. The outcome of interest is the probability of winning subsequent elections. The dataset consists of $6{,}558$ observations from U.S. House elections between 1946 and 1998.  \citet{lee2008} reported a significant positive effect of incumbency on the probability of winning subsequent elections, indicating an incumbency advantage; see Table 1 and Figure 2 of \citet{lee2008} for more details.

    \item \citet{cattaneo2015senate}: This is similar to \citet{lee2008} except that they study the incumbency advantage in U.S.~Senate elections instead of House elections.  The running variable is the margin of victory, defined as the difference in vote percentages (ranging from $-100$ to $100$) between the Democratic candidate and their strongest opponent, with the threshold set at $c = 0$. The outcome of interest is the percentage of winning subsequent elections. The dataset consists of $1{,}390$ observations ($1{,}297$ excluding missing outcomes) from U.S. House elections between 1914 and 2010. Our findings align with  the original paper \citet{cattaneo2015senate} as well as the replication in \citet{cattaneo2024book}.

    \item \citet{matsudaira}:  
This dataset records fifth-grade students' math and reading scores in an unspecified large school district in the northeast from 2001 to 2002. \citet{matsudaira} examined the impact of mandatory summer school on students' performance. The running variable is the minimum of students' test scores in {math} and {reading}, with a score below the passing threshold for either math or reading determining mandatory attendance in the summer school. The primary outcomes are post-summer school reading and math test scores.
We use a version of this dataset shared by the author with K.~Kalyanamaran and G.~Imbens in 2008. This dataset has a slightly larger sample size $n = 68{,}798$
than the one used in the original paper ($n = 66{,}839$).
Our findings  mirror the results in \citet[Table 4]{matsudaira} for fifth-grade students.

    \item  \citet{jacob2004}: This dataset contains the record of third-grade student achievement in Chicago Public Schools from 1997 to 1999. \citet{jacob2004} investigated the effectiveness of remedial education programs on student achievement, using a regression discontinuity design based on test score. We use the same dataset as
\citet{imbens2009regression}, which contains $70{,}831$ observations after dropping entries with missing information. The running
variable is the minimum of the math and reading score for an exam in the
spring and the outcome  is the post-remedial program reading (resp.~math) test scores, reported as Rasch test scores rather than grade equivalents. Rasch test
    scores are normalized scores (zero mean, unit s.d.) for the population of third-grade students from 1997 to 1999. Our results are aligned with the findings in \citet{jacob2004} for third-grade students.

    \item \citet{Oreopoulos}: This study concerns the impact of compulsory schooling laws on educational attainment. The running variable is the individual’s year of birth, with thresholds determined by changes in compulsory schooling laws. The outcome here is years of completed education. 
Our dataset comprises $73{,}954$ observations, differing slightly from the original study due to variations in preprocessing; again, we use the same dataset also used in \citet{KolesarRothe2018} and focus on the reduced form analysis rather than a fuzzy RD approach (see Table 1, Panel C of the \href{https://assets.aeaweb.org/asset-server/articles-attachments/aer/contents/corrigenda/corr_aer.96.1.152.pdf}{corrigendum} to \citet{Oreopoulos}). Our 
findings are in line with Table 2 of \citet{KolesarRothe2018}.

    \item \citet{LudwigMiller}: This study examines the effectiveness of the Head Start program on children's health and schooling.  The running variable is the 1960 poverty rate, and the threshold corresponds to the qualifying cutoff for program participation. Following \citet{CCT2014robustCI}, we consider as outcome variable the mortality rates per 100,000 for children between 5 and 9 years old, with Head Start-related
causes, for 1973–1983. After cleaning missing values, this dataset includes $2{,}783$ observations. Table III and Figure IV (Panel A) in \citet{LudwigMiller} illustrate that the Head Start program had a negative impact on the outcomes.

\item \citet{meyersson2014islamic}: This study  investigates the effect of Islamic party control in local governments on women’s rights, in particular on the educational attainment of young women, focusing on Turkey's 1994 mayoral elections. The study employs an RD design where the running variable is the Islamic margin of victory—the vote share gap between the leading Islamic and secular parties, thus with a natural cutoff of zero. The analysis is conducted at the municipality level and has $2{,}629$ observations. We replicate part of the study 
where the primary outcome is the school attainment for women who were (potentially) in high school during the period 1994--2000, measured with variables extracted from the 2000 census.  This data have been used as a running example in \citet{cattaneo2024book}. Our results align with the illustrations in \citet{cattaneo2024book} and the original paper \citet{meyersson2014islamic} --- in municipalities where the Islamic party barely won, the educational attainment of women is roughly 3\% higher than in municipalities where the party barely lost.

\end{enumerate}

\section{Implementation via Convex Optimization}\label{sec:implementation} 

In this section, we discuss how one can solve the optimization problem \eqref{eqn:minimax-problem0} using readily available software described in, {\it e.g.}, \citet{BoydVandenberghe}. Essentially, our task
is to solve quadratic minimization problems of the form:\footnote{Here,
the $\sigma_i^2$ should simply be taken as observation-level
weights in an optimization problem. If we set $\sigma_i^2 =\var(Y_i\mid X_i)$ to be the
conditional variance of the response $Y_i$ given the running variable $X_i$, then the
problem below is exactly the minimax linear estimation problem.
When we run this the optimization as specified in
Algorithm \ref{algo:PLRD}, we set $\sigma_i^2 = \wh \sigma^2$ via cross-fitting, as
described in Section \ref{sec:introduce-PLRD}.}
\begin{equation}\label{eqn:minimax-problem}
  \mmz_{\gamma} \left[\sup_{\mu_{w}\,\in\,\cal{M}_B} \left(\sum_{i=1}^n\g_i\mu_{W_i}(X_i)- (\mu_1(c) - \mu_0(c))\right)^2 + \sum_{i=1}^n  \sigma_i^2\g_i^2\right].
\end{equation}
Here, we take $\mathcal{M}_B$ of all conditional response functions $\mu_{w}(\cdot)$ satisfying \cref{assu:part_lin} and such that $\mu_0''(x)$ is globally $B$-Lipschitz.
Since $\g_i$ are measurable functions of $\{X_i\}_{i=1}^n$, the bias of the estimator $\wh\tau(\gamma) = \sum_{i=1}^n \g_i Y_i$ conditional on the running variable is given by
\begin{align*}
    \E \left[\sum_{i=1}^n \g_i Y_i \,\bigg|\, \{X_i\}_{i=1}^n \right] -\t &= \sum_{i=1}^n \g_i \mu_{W_i}(X_i)  -\t\\&= \sum_{i=1}^n \g_i (W_i(\t + \beta (X_i-c)) +  \mu_{0}(X_i))  -\t\\
    &= \sum_{i=1}^n \g_i \left(\mu_{0}(X_i)-\mu_{0}(c)- \mu_{0}'(c) (X_i-c)-\frac{1}{2} \mu_{0}''(c) (X_i-c)^2\right),
\end{align*}
provided $$\sum_{i=1}^n \g_i=0,\ \sum_{i=1}^n \g_i W_i = 1,\ \sum_{i=1}^n \g_i W_i(X_i-c)=0,\ \sum_{i=1}^n \g_i (1-W_i)(X_i-c)=0,\ \sum_{i=1}^n \g_i (X_i-c)^2=0,$$ 
which are implicit constraints in our optimization procedure. Indeed, at values of $\g$ for which these constraints are not all satisfied, we can construct $\mu_{w}(\cdot)\in \mathcal{M}_B$  that forces the squared conditional bias to blow up to $\infty$. Thus, the solution $\wh\g$ to \eqref{eqn:minimax-problem} must satisfy the constraints. Next, define $$\rho(x):=\frac{1}{B}\left(\mu_{0}(x+c)-\mu_{0}(c)- \mu_{0}'(c) x-\frac{1}{2} \mu_{0}''(c) x^2\right).$$ 
Note that for any $\mu_{w}(\cdot)\in \mathcal{M}_B$, the function $\rho(\cdot)$ belongs to the function class $\rhos:=\{f:f(0)=f'(0)=f''(0)=0, f'' \text{ is } \mtd\text{-Lipschitz}\}$.
We can therefore rewrite the optimization problem \eqref{eqn:minimax-problem} as:
\begin{equation*}
     \wh{\g} := \mmz_{\gamma} \left[B^2\left(\sup_{\rho\in \rhos} \sum_{i=1}^n \g_i \rho(X_i-c)\right)^2 +\sum_{i=1}^n \sigma_i^2\g_i^2\right].
\end{equation*}
Finally, we make the implicit constraints explicit and use an auxiliary variable $t$ and rewrite the above optimization problem as:
\begin{align*}
(\wh\g,\wh{t}\,) := & \mmz_{\gamma, t}\left\{B^2t^2+\sum_{i=1}^n \g_i^2 \sigma_i^2\right\}\\
&\text{subject to }\ \sup_{\rho\in\rhos}\sum_{i=1}^n \g_i \rho(X_i-c)\le t,\ \sum_{i=1}^n \g_i W_i=1,\ \sum_{i=1}^n \g_i\left(1-W_i\right)=-1, \label{eqn:main-cvx-prob}\numberthis\\
&\qquad\qquad\ \sum_{i=1}^n\g_i (X_i-c)=0,\ \sum_{i=1}^n \g_i(1-W_i) (X_i-c)=0,\ \sum_{i=1}^n \g_i (X_i-c)^2=0.
\end{align*}
An immediate consequence of this modified optimization problem is that we get an upper bound on the conditional bias of our PLRD estimator. 
Next, we discuss how one can solve the above problem using convex duality. The dual problem is given by
\begin{align*}
& \mxz_{\rho(\cdot), \lambda} \inf _{\gamma, t}\left\{\sum_{i=1}^n \g_i^2 \sigma_i^2+B^2 t^2\right.\\
&\qquad\qquad\qquad\quad+\lambda_1\left(\sum_{i=1}^n \g_i \rho(X_i-c)-t\right) +\lambda_2\left(\sum_{i=1}^n \g_i W_i-1\right)+\lambda_3\left(\sum_{i=1}^n \g_i\left(1-W_i\right)+1\right)\\
&\qquad\qquad\qquad\quad  \left.+\lambda_4\sum_{i=1}^n \g_i W_i (X_i-c) +\lambda_5 \sum_{i=1}^n \g_i \left(1- W_i\right)(X_i-c)+\lambda_6\sum_{i=1}^n \g_i (X_i-c)^2\right\} \\
& \text {subject to } \lambda_1 \geq 0, \lambda_2, \lambda_3, \lambda_4, \lambda_5,\lambda_6 \in \R,\text{and},\rho\in\rhos.
\end{align*}
The inner minimization problem is simply a quadratic one, which we can explicitly solve to reduce the dual problem to the following.
\begin{equation}\label{dual:empirical}
    \wh{\theta}=\argmin_{\theta=(\wt{\rho},\lambda)\in \Theta} \left(\frac{1}{4n}\sum_{i=1}^n \frac{G^2(\theta; X_i-c, W_i)}{\sigma_i^2}+\frac{\lambda_1^2}{4nB^2}+\frac{\lambda_2-\lambda_3}{n}\right),
\end{equation}
where $\theta=(\wt{\rho},\lambda)$ denotes the dual variable that varies in the set
\begin{equation}\label{dual-param-space-duplicate}
    \Theta:=\left\{(f,\lambda)\mid f:[-\ell,\ell]\to\R, f(0)=f'(0)=f''(0)=0, f'' \text{ is $\lambda_1$-Lipschitz}, \lambda\in [0,\infty)\times \R^5\right\},
\end{equation}
and 
\begin{equation*}
    G(\wt{\rho},\lambda; x, w):=\wt{\rho}(x)+\lambda_2 w+\lambda_3(1-w)+\lambda_4 wx +\lambda_5 (1-w)x+\lambda_6 x^2.
\end{equation*}
The function $\wt{\rho}(\cdot)$ in the  above formulation corresponds to the function $\lambda_1 \rho(\cdot)$ in the original problem. Thus, we can recover the optimal variables $(\wh\g,\wh{t}\,)$ for the primal problem \eqref{eqn:main-cvx-prob} from the dual optimal variable $\wh\theta$ (i.e., solution of \eqref{dual:empirical}), as follows.
$$\wh\g_{i} = -\frac{1}{2\sigma_i^2}G(\wh\theta; X_i-c, W_i),\quad \wh{t} = \frac{1}{2B^2}\wh{\lambda}_1.$$
Since our optimization problem differs from \citet{optrdd2019} essentially in the choice of the function class, \citet[Proposition 2]{optrdd2019} provides a basis for approximating \eqref{dual:empirical} using a finite-dimensional quadratic program, in a way that the resulting discrete solutions converge uniformly to the continuous solution as the grid size becomes small. In our software implementation, we adopt this finite-dimensional approximation approach to compute the optimal weights using \texttt{quadprog}, a function for quadratic programming in \texttt{R} \citep{r2023}.

\section{Main Proofs}\label{proof-of-lemma}

\begin{proof}[Proof of \cref{thm:dual-rate-of-conv}]

For simplicity in notation, we assume without loss of generality that the regression discontinuity threshold is $c=0$. We split the proof into some auxiliary results, namely, \cref{lemma:emp-dual-close-to-population-dual,lemma:strongly-cvx,lemma:uniform-concentration,lemma:one-to-minus-one,lemma:minimax-bias-rate} in \cref{app:aux-results}. 
First, 
using tools from empirical process theory (see \citet{VW} for a book-level treatment), we 
prove a uniform concentration bound for the loss functions of the  empirical and population dual problems. Recall that the dual parameter space is defined as $$\Theta(\ell):=\left\{(f,\lambda)\mid f:[-\ell,\ell]\to\R,\, f(0)=f'(0)=f''(0)=0, f'' \text{ is $\lambda_1$-Lipschitz}, \lambda\in [0,\infty)\times \R^5\right\}.$$ Define a neighborhood $\Theta_n(\ell,\delta,\eps)$ of the dual optimizer $\theta_n^*$, as follows:  $$\Theta_n(\ell,\delta,\eps):=\{(\rho,\lambda)\mid (\rho,\lambda)\in \Theta(\ell),\ |\lambda_1-\lambda_{n,1}^*|\le \delta \lambda_{n,1}^* \text{ and } \|\lambda_{-1}-\lambda^*_{n,-1}\|_2\le \eps \|\lambda_{n,-1}^*\|_2\},$$ where  $\ell,\delta,\eps\in(0,1)$ are fixed. We show in \cref{lemma:uniform-concentration} the following maximal inequality.
\begin{equation*}
\E \sup_{\theta\,\in\,\Theta_n(\ell,\delta,\eps)} |\wh{L}_n(\theta;\wh\sigma^2,\wh{B})-L_n(\theta;\wh\sigma^2, \wh{B})|\lesssim (\lambda_{n,1}^*)^2\frac{\ell^6}{\sqrt{n}}+\|\lambda_{n,-1}^*\|_2^2\frac{\sqrt{\eps}}{\sqrt{n}}+\lambda_{n,1}^*\|\lambda_{n,-1}^*\|_2\frac{\ell^3}{\sqrt{n}}.
\end{equation*}
Next, \cref{lemma:one-to-minus-one} simplifies the above bound to the following.
\begin{equation}\label{recall-maximal-inequality2}
\E \sup_{\theta\,\in\,\Theta_n(\ell,\delta,\eps)} |\wh{L}_n(\theta;\wh\sigma^2,\wh{B})-L_n(\theta;\wh\sigma^2, \wh{B})|\lesssim (\lambda_{n,1}^*)^2\frac{\ell^6}{\sqrt{n}}+\lambda_{n,1}^*\frac{\ell^3}{n^{3/2}}+\frac{\sqrt{\eps}}{n^{5/2}}.
\end{equation}
On the other hand, \cref{lemma:one-to-minus-one}  tells us that
\begin{equation*}
\sup_{\theta\,\in\,\Theta_n(\ell,\delta,\eps)} |{L}_n(\theta;\wh{\sigma}^2,\wh{B})-L_n(\theta;\wh{\sigma}^2, B)|\lesssim 
\frac{(\lambda_{n,1}^*)^2}{n}\left|\frac{1}{\wh{B}}-\frac{1}{B}\right|.
\end{equation*}
This, in conjunction with 
\eqref{recall-maximal-inequality2} tells us that for every $\eta\in(0,1)$, there exists $C=C(\eta)>0$ such that, for all sufficiently large $n$,
\begin{equation}\label{recall-maximal-ineq3}
\P\left(\sup_{\theta\,\in\,\Theta_n(\ell,\delta,\eps)} |\wh{L}_n(\theta;\wh\sigma^2,\wh{B})-L_n(\theta;\sigma^2,B)|\le C \left((\lambda_{n,1}^*)^2\frac{\ell^6}{\sqrt{n}}+\lambda_{n,1}^*\frac{\ell^3}{n^{3/2}}+\frac{\sqrt{\eps}}{n^{5/2}}+ \frac{(\lambda_{n,1}^*)^2}{n}\right)\right)\ge 1-\eta.
\end{equation}
We also show in \cref{lemma:strongly-cvx} that 
\begin{equation}\label{recall-strong-convexity2}
     L_n(\theta;\sigma^2,B)-L_n(\theta_n^*;\sigma^2,B)
    \ge \frac{\alpha}{4\sigma^2}\left(\|\rho-\rho_n^*\|^2_{L^2(P)}+\|\lambda_{-1}-\lambda_{n,-1}^*\|_2^2\right)+\frac{(\lambda_1-\lambda_{n,1}^*)^2}{4nB^2}-R_n,
\end{equation}
for some $\alpha\in (0,1)$, where $0\le R_n\le \ell^3(\lambda_1+\lambda_{n,1}^*)\|\lambda_{-1}-\lambda_{n,-1}^*\|_2/2\sigma^2$. 
Note, for $(\rho,\lambda)\in \Theta_n(\ell,\delta,\eps)$, 
\begin{equation}\label{bound-on-remainder2}
    R_n\le\frac{1}{2\sigma^2} \ell^3 (2+\delta)\lambda_{n,1}^*\eps\|\lambda_{n,-1}^*\|_2\le \frac{3}{2\sigma^2} \lambda_{n,1}^*\|\lambda_{n,-1}^*\|_2 \ell^3\eps.
\end{equation}
Next, we show in \cref{lemma:emp-dual-close-to-population-dual} that for every $\delta,\eta\in(0,1)$, and $\eps=O(n^{-1/4})$, it holds that
$$\P\left(\wh{\theta}_n \in \Theta_n(\ell,\delta,\eps)\right)\ge  1-\eta,$$
for all sufficiently large $n$.
The following inequalities hold whenever the event in \eqref{recall-maximal-ineq3} holds as well as $\wh{\theta}_n\in \Theta_n(\ell,\delta,\eps)$ with $\eps=O(n^{-1/4})$, i.e., the following inequalities hold with probability at least $1-2\eta$ for all sufficiently large $n$.
\begin{align*}
      &\frac{\alpha}{4\sigma^2}\left(\|\wh{\rho}_n-\rho_n^*\|^2_{L^2(P)}+\|\wh{\lambda}_{n,-1}-\lambda_{n,-1}^*\|_2^2\right)+\frac{(\wh{\lambda}_{n,1}-\lambda_{n,1}^*)^2}{4nB^2}\\
      &\le \frac{3}{2\sigma^2} \lambda_{n,1}^*\|\lambda_{n,-1}^*\|_2 \ell^3\eps + L_n(\wh{\theta};\sigma^2,B)-L_n(\theta_n^*;\sigma^2,B)\tag{using \eqref{recall-strong-convexity2} and \eqref{bound-on-remainder2}}\\
      &\le \frac{3}{2\sigma^2} \lambda_{n,1}^*\|\lambda_{n,-1}^*\|_2 \ell^3\eps + \underbrace{\wh{L}_n(\wh{\theta};\wh\sigma^2,\wh{B})-\wh{L}_n(\theta_n^*;\wh\sigma^2,\wh{B})}_{\le 0}+ 2\sup_{\theta\in \Theta_n(\ell,\delta,\eps)} \left|\wh{L}_n(\theta;\wh\sigma^2,\wh{B})-L_n(\theta;\sigma^2,B)\right|\tag{using triangle inequality and the definition of $\wh{\theta}_n$}\\
      &\lesssim  \lambda_{n,1}^*\|\lambda_{n,-1}^*\|_2 \frac{\ell^3}{n^{1/4}} + (\lambda_{n,1}^*)^2\frac{\ell^6}{\sqrt{n}}+\lambda_{n,1}^*\frac{\ell^3}{n^{3/2}}+\frac{1}{n^{5/2+1/8}}+ \frac{(\lambda_{n,1}^*)^2}{n}.\tag{using \eqref{recall-maximal-ineq3}}
\end{align*}
Finally, \cref{lemma:one-to-minus-one} tells us that $\|\lambda_{n,-1}^*\|_2 =O(\lambda_{n,1}^*\ell^3+n^{-1})$, while \cref{lemma:minimax-bias-rate} tells us that $\lambda_{n,1}^*=\O(n^{-3/7})$. This simplifies the above upper bound to the following.
\begin{align*}
      &\frac{\alpha}{4\sigma^2}\left(\|\wh{\rho}_n-\rho_n^*\|^2_{L^2(P)}+\|\wh{\lambda}_{n,-1}-\lambda_{n,-1}^*\|_2^2\right)+\frac{(\wh{\lambda}_{n,1}-\lambda_{n,1}^*)^2}{4nB^2}
      \\&\qquad\qquad\qquad
      \lesssim_\eta (\lambda_{n,1}^*)^2 \frac{\ell^6}{n^{1/4}} +\lambda_{n,1}^*\frac{\ell^3}{n^{3/2}}+ \frac{(\lambda_{n,1}^*)^2}{n}\\
      &\qquad\qquad\qquad\lesssim_\eta \frac{\ell^6}{n^{1/4+6/7}} +\frac{\ell^3}{n^{3/2+3/7}}+ \frac{1}{n^{1+6/7}},
      \end{align*} with probability at least $1-2\eta$, for all sufficiently large $n$. Finally, we can ignore the last term in the above display, since $\ell$ satisfies \eqref{eq:ell_decay}.
 This completes the proof. \end{proof}

\begin{proof}[Proof of \cref{coro:lindeberg}] For the PLRD weights $\wh\g_i$ derived in \cref{algo:PLRD}, we now show that 
\begin{equation*}
    \frac{\|\wh\g\|_\infty}{\|\wh\g\|_2}=\frac{\max_{1\le i\le n} |\wh\g_{i}|}{(\sum_{i=1}^n \wh\g_{i}^2)^{1/2}}\Pto 0.
\end{equation*}
Assume without loss of generality that the threshold for treatment is at $c=0$.
Recall that we use sample splitting in \cref{algo:PLRD} and write $\wh\g_i = \wh\g_{(k),i}/2$ if $i\in I_k$ ($k=1,2$). Thus, 
$$\frac{\|\wh\g\|_\infty}{\|\wh\g\|_2}=\frac{\max_{1\le i\le n} |\wh\g_{i}|}{(\sum_{i=1}^n \wh\g_{i}^2)^{1/2}}\lesssim \frac{\|\wh\g_{(1)}\|_\infty}{\|\wh\g_{(1)}\|_2}+\frac{\|\wh\g_{(2)}\|_\infty}{\|\wh\g_{(2)}\|_2}.$$
Since $\wh\g_{(1)}$ and $\wh\g_{(2)}$ are identically distributed, it suffices to show the following.
\begin{equation}\label{eqn:asymp-ignore-condition}
    \frac{\|\wh\g_{(1)}\|_\infty}{\|\wh\g_{(1)}\|_2}=\frac{\max_{i\in I_1} |\wh\g_{(1),i}|}{(\sum_{i\in I_1} \wh\g_{(1),i}^2)^{1/2}}\Pto 0.
\end{equation}
Towards showing \eqref{eqn:asymp-ignore-condition}, note that by Cauchy-Schwarz inequality,
$$|I_1|\sum_{i\in I_1} \wh\g_{(1),i}^2=\sum_{i\in I_1} (2W_i-1)^2 \sum_{i\in I_1} \wh\g_{(1),i}^2\ge \bigg(\sum_{i\in I_1} (2W_i-1)\wh\g_{(1),i}\bigg)^2=4.$$
This yields the bound $\|\wh\g_{(1)}\|_2\gtrsim n^{-1/2}$. For the numerator in \eqref{eqn:asymp-ignore-condition} use triangle inequality to write
\begin{align*}
    \sqrt{n}\max_{i\in I_1} |\wh\g_{(1),i}| &\le  \sqrt{n}\max_{i\in I_1} |\g_{(1),i}^*|+ \sqrt{n}\max_{i\in I_1} |\wh\g_{(1),i}-\g_{(1),i}^*|,
\end{align*}
where
$\g_{(1),i}^* := -(\rho_n^*(X_i)+ \lambda_{n,-1}^* v(X_i, W_i))/2\wh\sigma_{(2)}^2$,
with $v(x,w)=(w,1-w, xw, x(1-w), x^2)^\top$, and $$(\rho_n^*, \lambda_{n}^*) := \argmin_{(\rho,\lambda)\in \Theta} \frac{1}{4\sigma^2_{(2)}}\E\left[G^2(\rho, \lambda; X, W)\right]+\frac{\lambda_1^2}{4nB^2_{(2)}}+\frac{\lambda_2-\lambda_3}{n},$$
with $\Theta$ as defined in \eqref{dual-param-space-duplicate}. 
For any $(\rho,\lambda)\in \Theta$, $\rho^{(j)}(0)=0$ for $j=0,1,2$, and $\rho''$ is $\lambda_{n,1}^*$-Lipschitz, so a Taylor expansion of ${\rho}(x)$ around $x=0$ gives us
\begin{equation}\label{taylor-expand-rho-Theta}
    |{\rho}(x)|=|\rho(x)-\rho(0)-\rho'(0)x|=\frac{|\rho''(\xi)-\rho''(0)|}{|\xi - x|}\left|\xi-x\right|\frac{x^2}{2}\le \lambda_{n,1}^*\frac{\ell^3}{2}.
\end{equation}
Consequently,
$$\max_{i\in I_1} |\g_{(1),i}^*|\lesssim \sup_x \left|\rho_n^*(x)+v(x,1(x>0))^\top\lambda_{n,-1}^*\right|\lesssim \lambda_{n,1}^* \ell^3+\|\lambda_{n,-1}^*\|_2,$$
where we used \eqref{taylor-expand-rho-Theta} to bound $\sup_x|\rho_n^*(x)|$ and Cauchy-Schwarz inequality to bound the other term. Since $B_{(2)}$ and $\sigma_{(2)}^2$ are independent of the data in the fold $I_1$, \cref{lemma:one-to-minus-one} applies here and tells us that $\|\lambda_{n,-1}^*\|_2=O(\lambda_{n,1}^*\ell^3+n^{-1})$. Consequently, we have
$$\sqrt{n}\max_{i\in I_1} |\g_{(1),i}^*|\lesssim \sqrt{n}\left(\lambda_{n,1}^*\ell^3+n^{-1}\right).$$
On the other hand, $
   \max_{i\in I_1} |\wh\g_{(1),i}-\g_{(1),i}^*|\lesssim \|\wh\rho_n-\rho_n^*\|_\infty + \|\wh\lambda_{n,-1}-\lambda_{n,-1}^*\|_2$. It follows from \eqref{Taylor-expand-rho} that $\|\wh\rho_n-\rho_n^*\|_\infty\le (\wh\lambda_{n,1}+\lambda_{n,1}^*)\ell^3$. Applying
\cref{thm:dual-rate-of-conv}, we deduce that
$$\|\wh\lambda_{n,-1}-\lambda_{n,-1}^*\|_2
=\O\left( \lambda_{n,1}^* \frac{\ell^3}{n^{1/8}} +\sqrt{\lambda_{n,1}^*}\frac{\ell^{3/2}}{\sqrt{n}}+ \frac{\lambda_{n,1}^*}{n^{3/4}}\right).$$ 
Putting this all together,
$$
  \sqrt{n} \max_{i\in I_1} |\wh\g_{(1),i}-\g_{(1),i}^*|\lesssim (\wh\lambda_{n,1}+\lambda_{n,1}^*)\ell^3\sqrt{n}+\sqrt{\lambda_{n,1}^*}\ell^{3/2}+\lambda_{n,1}^*n^{-1/4}.
$$
Further, we show in \cref{lemma:minimax-bias-rate} that $\wh\lambda_{n,1}$ and $\lambda_{n,1}^*$ are both $O(n^{-3/7})$. 
Consequently,
\begin{align*}
    \sqrt{n}\max_{i\in I_1} |\wh\g_{(1),i}| &\le  \sqrt{n}\max_{i\in I_1} |\g_{(1),i}^*|+ \sqrt{n}\max_{i\in I_1} |\wh\g_{(1),i}-\g_{(1),i}^*|=\O\left(n^{-3/7} \ell^3\sqrt{n}\right)=\O\left(n^{1/14}\ell^3\right),
\end{align*}which is $\o(1)$ since $\ell=o(n^{-1/12})$. \end{proof}

\section{Auxiliary Results}\label{app:aux-results}

To set the stage, we recall the key concepts of entropy with and without bracketing for function classes, which will be central to the results presented in this section.
Consider a class of real-valued functions $\mc{F}$ defined on a set $\mc{Z}$, equipped with a norm $\|\cdot\|$. The \emph{covering number} $N(\eps, \mc{F}, \|\cdot\|)$ is the minimum number of balls of radius $\eps$ needed to cover $\mc{F}$. The centers of the covering balls need not belong to $\mc{F}$, but they are required to have finite norms. The \emph{entropy (without bracketing)} is defined as the logarithm of the covering number.
An $\eps$-bracket is a pair of functions $(l, u)$ such that $\|u-l\| < \eps$ and $l \leq f \leq u$ for all $f \in \mathcal{F}$. The \emph{bracketing number} $N_{[\, ]}(\eps, \mc{F}, \|\cdot\|)$ is the minimum number of $\eps$-brackets needed to cover $\mc{F}$.  The upper and lower bounds $u$ and $l$ of the brackets need not belong to $\mc{F}$, but they are required to have finite norms. The \emph{entropy with bracketing} is defined as the logarithm of the bracketing number.

Following~\citet[Section 2.7]{VW},  we consider the class of functions $f: \mc{Z} \to \R$ defined on a bounded set $\mc{Z} \subset \R$ such that
the derivatives of $f$ up to order $k$ are uniformly bounded, and
$f^{(k)}$ is Lipschitz continuous.  We can define a norm on this class as follows:
$$\|f\|_{k+1} := \max_{j \leq k} \sup_x |f^{(j)}(x)| + \sup_{x,y} \frac{|f^{(k)}(x) - f^{(k)}(y)|}{|x-y|},$$
where the last supremum is over all $x, y$ in the interior of $\mc{Z}$ with $x \neq y$. Finally, denote by $\mc{C}_M^k(\mc{Z})$ the set of all continuous functions $f: \mc{Z} \to \R$ with $\|f\|_{k} \leq M$. Throughout, we use $A \lesssim_\eta B$ to denote that $A\le CB$ for some arbitrary constant $C>0$ that can depend on $\eta$. For a vector $\lambda=(\lambda_{1},\lambda_{2},\dots,\lambda_{p})$ we denote $\lambda_{-1}:=(\lambda_{2},\dots,\lambda_{p})$.  We use $\|x\|_\infty$ to denote $\max_{i} |x_i|$ when $x$ is a vector, and $\sup_{t} |x(t)|$ when $x$ is a function.

\begin{lemma}\label{approx-class-entropy}
Consider the following class of functions:
$$\mc{F}_{B,\,\ell,\,k}=\{\rho\mid \rho:[-\ell,\ell]\to \R, \ \rho(0)=\rho'(0)=\cdots=\rho^{(k)}(0)=0, \ \rho^{(k)} \text{ is $B$-Lipschitz}\},$$
where $k\ge 0$ is an integer and $0<\ell\le 1$.
It holds that $$\log N(\eps, \mc{F}_{B,\,\ell,\,k}, \|\cdot\|_\infty)\lesssim \ell (B/\eps)^{1/(k+1)} ,\quad\text{and}\quad \log N_{[\,]}(\eps, \mc{F}_{B,\,\ell,\,k}, L_r(P))\lesssim \ell (B/\eps)^{1/(k+1)} ,$$
for any $r\ge 1$ and probability measure $P$ on $\R$.
Thus, $\mc{F}_{B,\,\ell,\,k}$ is universally Donsker. 
\end{lemma}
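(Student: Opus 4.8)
The plan is to reduce the estimate to the classical metric-entropy bound for smooth function classes on a fixed interval (\citealp[Section~2.7]{VW}) via a rescaling that simultaneously absorbs both parameters $B$ and $\ell$. Assume $B>0$ (if $B=0$ the class is the singleton $\{0\}$ and everything is trivial). Given $\rho\in\mc{F}_{B,\ell,k}$, define $\tilde\rho(u):=\rho(\ell u)/(B\ell^{k+1})$ for $u\in[-1,1]$. Since $\tilde\rho^{(j)}(u)=\rho^{(j)}(\ell u)/(B\ell^{k+1-j})$, we get $\tilde\rho^{(j)}(0)=0$ for $j=0,\dots,k$, and $|\tilde\rho^{(k)}(u)-\tilde\rho^{(k)}(v)|=|\rho^{(k)}(\ell u)-\rho^{(k)}(\ell v)|/(B\ell)\le|u-v|$, so $\tilde\rho\in\mc{F}_{1,1,k}$; conversely every element of $\mc{F}_{1,1,k}$ arises this way. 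Because $\sup_{|x|\le\ell}|\rho_1(x)-\rho_2(x)|=B\ell^{k+1}\sup_{|u|\le1}|\tilde\rho_1(u)-\tilde\rho_2(u)|$, the map $\rho\mapsto\tilde\rho$ scales sup-norm distances by the factor $B\ell^{k+1}$, so $N(\eps,\mc{F}_{B,\ell,k},\|\cdot\|_\infty)=N(\eps/(B\ell^{k+1}),\mc{F}_{1,1,k},\|\cdot\|_\infty)$. It therefore suffices to prove $\log N(\delta,\mc{F}_{1,1,k},\|\cdot\|_\infty)\lesssim\delta^{-1/(k+1)}$, since taking $\delta=\eps/(B\ell^{k+1})$ then recovers $\ell(B/\eps)^{1/(k+1)}$.

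Next I would embed $\mc{F}_{1,1,k}$ into a fixed H\"older ball. For $\rho\in\mc{F}_{1,1,k}$, the conditions $\rho^{(k)}(0)=0$ and $\rho^{(k)}$ $1$-Lipschitz give $|\rho^{(k)}(x)|\le|x|\le1$ on $[-1,1]$; integrating and using $\rho^{(j)}(0)=0$ for $j<k$ yields, by reverse induction on $j$, $\sup_{|x|\le1}|\rho^{(j)}(x)|\le1$ for every $j\le k$. Hence $\|\rho\|_{k+1}=\max_{j\le k}\sup|\rho^{(j)}|+\mathrm{Lip}(\rho^{(k)})\le2$, i.e.\ $\mc{F}_{1,1,k}\subseteq\mc{C}_2^{k+1}([-1,1])$. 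By the classical entropy bound for smooth functions (\citealp[Theorem~2.7.1]{VW}, with ambient dimension $1$ and smoothness exponent $k+1$, after the trivial rescaling of the unit-ball bound to $\mc{C}_2^{k+1}$), $\log N(\delta,\mc{C}_2^{k+1}([-1,1]),\|\cdot\|_\infty)\lesssim\delta^{-1/(k+1)}$, the implied constant depending only on $k$. Combined with the first paragraph this gives the stated covering-number bound. For the bracketing bound, any $\delta$-net $\{\hat\rho_1,\dots,\hat\rho_N\}$ of $\mc{F}_{B,\ell,k}$ in $\|\cdot\|_\infty$ induces the brackets $[\hat\rho_m-\delta,\hat\rho_m+\delta]$, each of $L_r(P)$-diameter at most $2\delta$ for every probability measure $P$ and every $r\ge1$ (since $\|g\|_{L_r(P)}\le\|g\|_\infty$); hence $N_{[\,]}(2\delta,\mc{F}_{B,\ell,k},L_r(P))\le N(\delta,\mc{F}_{B,\ell,k},\|\cdot\|_\infty)$, which yields $\log N_{[\,]}(\eps,\mc{F}_{B,\ell,k},L_r(P))\lesssim\ell(B/\eps)^{1/(k+1)}$ for all $r\ge1$ and all $P$.

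For the Donsker claim: $\mc{F}_{B,\ell,k}$ is uniformly bounded (every member has sup-norm at most $B\ell^{k+1}\le B$ by the bound of the previous paragraph, rescaled back), so it admits a constant, square-integrable envelope; and for every probability measure $P$ the bracketing entropy integral converges, $\int_0^1\sqrt{\log N_{[\,]}(\eps,\mc{F}_{B,\ell,k},L_2(P))}\,d\eps\lesssim\int_0^1\eps^{-1/(2(k+1))}\,d\eps<\infty$, because $1/(2(k+1))<1$ for every $k\ge0$. The bracketing central limit theorem (\citealp[Theorem~2.5.6]{VW}) then shows $\mc{F}_{B,\ell,k}$ is $P$-Donsker for every $P$, i.e.\ universally Donsker.

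The only step with genuine mathematical content is the embedding $\mc{F}_{1,1,k}\subseteq\mc{C}_2^{k+1}([-1,1])$: one must use the vanishing conditions $\rho(0)=\cdots=\rho^{(k)}(0)=0$ to upgrade the single Lipschitz bound on $\rho^{(k)}$ into uniform control of all lower-order derivatives, so that the class genuinely sits inside a \emph{compact} smooth class and the constant in the entropy estimate does not implicitly depend on $\rho$. Everything else is the scaling bookkeeping of the first paragraph together with standard citations, and I anticipate no obstacle there.
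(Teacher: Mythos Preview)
Your proof is correct and follows essentially the same approach as the paper: both arguments use the vanishing-at-zero conditions to control all lower derivatives, embed $\mc{F}_{B,\ell,k}$ into a H\"older ball $\mc{C}_M^{k+1}$, invoke \citet[Theorem~2.7.1]{VW}, and then pass to bracketing and the Donsker conclusion in the standard way. The only cosmetic difference is that you rescale the domain to $[-1,1]$ up front (so the factor $\ell$ emerges from your scaling identity $N(\eps,\mc{F}_{B,\ell,k},\|\cdot\|_\infty)=N(\eps/(B\ell^{k+1}),\mc{F}_{1,1,k},\|\cdot\|_\infty)$), whereas the paper works directly on $[-\ell,\ell]$ and extracts the $\ell$ from the domain-measure dependence in VW's theorem.
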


\begin{proof}[Proof of~\cref{approx-class-entropy}]
    For any $\rho\in \mc{F}_{B,\,\ell,\,k}$, the derivatives of $\rho$ upto order $k$ are all bounded by $B \ell $. Thus, $\|\rho\|_{k+1}\le B(1+\ell)\le 2B$.
    Consequently,\begin{align*}
        N(\eps, \mc{F}_{B,\,\ell,\,k}, \|\cdot\|_\infty)&\le N(\eps, \mc{C}^{k+1}_{2B}([-\ell,\ell]), \|\cdot\|_\infty)\\
        &\le N(\eps/2B, \mc{C}^{k+1}_1([-\ell,\ell]), \|\cdot\|_\infty).
    \end{align*}
   For any $\delta>0$, note that $N(\delta, [-\ell,\ell], |\cdot|)\lesssim \ell/\delta$. We can thus slightly tweak the proof of \citet[Theorem 2.7.1]{VW}, with $\alpha=k+1$ and $\lambda(\cal{X}_1)=2\ell$ (in their notation), to conclude that 
    $$\log N(\eps/2B, \mc{C}^{k+1}_1([-\ell,\ell]), \|\cdot\|_\infty))\lesssim \ell (\eps/2B)^{-1/(k+1)},$$
   as desired to show. This gives us the first entropy bound (without bracketing). 
   
   Next, we can adapt the proof technique of \citet[Corollary 2.7.2]{VW} to deduce that the last display also yields the desired bound on entropy with bracketing. Finally, the bound on $\log N_{[\,]}(\eps, \mc{F}_{B,\,\ell,\,k}, L_2(P))$ implies that $\mc{F}_{B,\,\ell,\,k}$ is $P$-Donsker for any probability distribution $P$ (see, {\it e.g.,} \citet[Theorem 19.5]{vdV}).
\end{proof}

For the rest of this section, we assume without loss of generality that the treatment threshold is at $c=0$.
Before stating the next few lemmas, we recall some notation. The empirical dual problem is given by
$$\wh{\theta}_n(\sigma^2,B)=\argmin_{\theta=({\rho},\lambda)\in \Theta(\ell)} \wh{L}_n(\theta; B), \quad \wh{L}_n(\theta; B):= \frac{1}{4 n}\sum_{i=1}^n \frac{G^2(\theta; X_i, W_i)}{\sigma_i^2} +\frac{\lambda_1^2}{4nB^2}+\frac{\lambda_2-\lambda_3}{n},$$ where
$G(\theta; x, w)=G(\rho, \lambda;x,w)={\rho}(x)+\lambda_2 w+\lambda_3(1-w)+\lambda_4 wx +\lambda_5 (1-w)x+\lambda_6 x^2$, with $w=1(x>0)$,
and $$\Theta(\ell):=\left\{(f,\lambda)\mid f:[-\ell,\ell]\to \R, f(0)=f'(0)=f''(0)=0, f'' \text{ is $\lambda_1$-Lipschitz}, \lambda\in [0,\infty)\times \R^5\right\}.$$
Define the dual problem at the population level as
$$\theta_n^*(\sigma^2,B) = \argmin_{\theta\in \Theta(\ell)}L_n(\theta;\sigma^2,B), \quad \text{where}\quad L_n(\theta;\sigma^2,B):=\frac{1}{4\sigma^2}\E\left[G^2(\theta; X, W)\right]+\frac{\lambda_1^2}{4nB^2}+\frac{\lambda_2-\lambda_3}{n}.$$
In \cref{algo:PLRD}, we sample splitting to estimate the unknown Lipschitz constant $B$ and the homoskedastic variance $\sigma^2=\var(Y_i\mid X_i)$. However, for ease of exposition, we assume throughout this proof that an estimate $\wh{B}$ of $B$ and an estimate $\wh\sigma^2$ of $\sigma^2$ is available independently of the data using which we solve the empirical dual problem. We define
$$\wh\theta_n = \wh\theta_n(\wh\sigma^2,\wh{B}),\quad\text{and}\quad\theta_n^*=\theta_n^*(\wh\sigma^2,B).$$

 \begin{lemma}\label{lemma:uniform-concentration} Define  $$\Theta_n(\ell,\delta,\eps):=\{(\rho,\lambda)\mid (\rho,\lambda)\in \Theta(\ell),\ |\lambda_1-\lambda_{n,1}^*|\le \delta \lambda_{n,1}^* \text{ and } \|\lambda_{-1}-\lambda^*_{n,-1}\|_2\le \eps \|\lambda_{n,-1}^*\|_2\},$$ where  $\ell,\delta,\eps\in(0,1)$ are fixed. Then, the following maximal inequality holds.
$$\E \sup_{\theta\,\in\,\Theta_n(\ell,\delta,\eps)} |\wh{L}_n(\theta;\wh\sigma^2,\wh{B})-L_n(\theta;\wh\sigma^2,\wh{B})|\lesssim (\lambda_{n,1}^*)^2\frac{\ell^6}{\sqrt{n}}+\|\lambda_{n,-1}^*\|_2^2\frac{\sqrt{\eps}}{\sqrt{n}}+\lambda_{n,1}^*\|\lambda_{n,-1}^*\|_2\frac{\ell^3}{\sqrt{n}}.$$
\end{lemma}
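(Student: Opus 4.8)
To prove \cref{lemma:uniform-concentration} the plan is to reduce everything to an empirical-process maximal inequality and then to exploit the polynomial bracketing entropy of \cref{approx-class-entropy}. First I would note that the ridge term $\lambda_1^2/(4n\wh B^2)$ and the affine term $(\lambda_2-\lambda_3)/n$ are identical in $\wh L_n(\theta;\wh\sigma^2,\wh B)$ and $L_n(\theta;\wh\sigma^2,\wh B)$, so, writing $\P_n$ and $P$ for the empirical and population laws of $(X,W)$ and $v(x,w)=(w,1-w,wx,(1-w)x,x^2)^\top$ (so that $G(\theta;x,w)=\rho(x)+\lambda_{-1}^\top v(x,w)$), the left-hand side equals $\tfrac{1}{4\wh\sigma^2}\,\E\sup_{\theta\in\Theta_n(\ell,\delta,\eps)}\bigl|(\P_n-P)\,G^2(\theta;\cdot,\cdot)\bigr|$; since $\wh\sigma^2$ is treated as fixed and bounded away from $0$, it contributes only an absolute constant. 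Two uniform bounds valid on $\Theta_n(\ell,\delta,\eps)$ do the heavy lifting: from $\lambda_1\le(1+\delta)\lambda^*_{n,1}\le 2\lambda^*_{n,1}$, integrating the $\lambda_1$-Lipschitz bound on $\rho''$ three times from $\rho(0)=\rho'(0)=\rho''(0)=0$ gives $\sup_{|x|\le\ell}|\rho(x)|\lesssim\lambda^*_{n,1}\ell^3$ exactly as in \eqref{taylor-expand-rho-Theta}; and $\|\lambda_{-1}\|_2\le(1+\eps)\|\lambda^*_{n,-1}\|_2\lesssim\|\lambda^*_{n,-1}\|_2$ combined with $\sup_{|x|\le\ell}\|v(x,w)\|_2\lesssim 1$ gives $\sup_{x,w}|\lambda_{-1}^\top v(x,w)|\lesssim\|\lambda^*_{n,-1}\|_2$.

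Next I would decompose $G^2=\rho^2+2\rho\,(\lambda_{-1}^\top v)+(\lambda_{-1}^\top v)^2$ and bound the three empirical processes separately, each via a standard bracketing maximal inequality of the shape $\E\sup_{h\in\mathcal{H}}|(\P_n-P)h|\lesssim n^{-1/2}\|U_{\mathcal{H}}\|_{P,2}\,J_{[\,]}$, with $U_{\mathcal{H}}$ an envelope and $J_{[\,]}=\int_0^1\sqrt{1+\log N_{[\,]}(\eps'\|U_{\mathcal{H}}\|_{P,2},\mathcal{H},L_2(P))}\,d\eps'$ the envelope-normalized bracketing integral (see, e.g., \citet[Section 2.14]{VW}). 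For $\mathcal{H}=\{\rho^2:\rho\in\mathcal{F}_{2\lambda^*_{n,1},\ell,2}\}$ the envelope is of order $(\lambda^*_{n,1}\ell^3)^2$; since squaring is $2\lambda^*_{n,1}\ell^3$-Lipschitz in $\|\cdot\|_\infty$ on this uniformly bounded class, a uniform $\eta$-cover of $\mathcal{F}_{2\lambda^*_{n,1},\ell,2}$ supplied by the $L_\infty$-part of \cref{approx-class-entropy} (with $k=2$, $B=2\lambda^*_{n,1}$) yields $L_2$-brackets for $\mathcal{H}$, and tracking the scales gives $\log N_{[\,]}(\zeta,\mathcal{H},L_2(P))\lesssim\ell\bigl((\lambda^*_{n,1})^2\ell^3/\zeta\bigr)^{1/3}$, so normalizing $\zeta$ by the envelope reduces this to $\lesssim(\eps')^{-1/3}$ with no residual dependence on $\ell$ or $\lambda^*_{n,1}$; hence $J_{[\,]}=O(1)$ and this class contributes $(\lambda^*_{n,1})^2\ell^6/\sqrt n$. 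For the cross term $\mathcal{H}=\{\rho\,(\lambda_{-1}^\top v)\}$, a product of (a rescaling of) $\mathcal{F}_{2\lambda^*_{n,1},\ell,2}$ with the five-dimensional parametric class $\{\lambda_{-1}^\top v\}$, the envelope is of order $\lambda^*_{n,1}\ell^3\|\lambda^*_{n,-1}\|_2$ and the decomposition $|\rho\,\lambda_{-1}^\top v-\tilde\rho\,\tilde\lambda_{-1}^\top v|\lesssim\|\lambda^*_{n,-1}\|_2\,|\rho-\tilde\rho|+\lambda^*_{n,1}\ell^3\,\|\lambda_{-1}-\tilde\lambda_{-1}\|_2$ again yields $J_{[\,]}=O(1)$, hence the contribution $\lambda^*_{n,1}\ell^3\|\lambda^*_{n,-1}\|_2/\sqrt n$. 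For the quadratic term I would first subtract the fixed function $(\lambda^{*\top}_{n,-1}v)^2$, which does not affect $\P_n-P$; the centered class then has envelope of order $\eps\|\lambda^*_{n,-1}\|_2^2$ because $\bigl|(\lambda_{-1}^\top v)^2-(\lambda^{*\top}_{n,-1}v)^2\bigr|\le\|(\lambda_{-1}-\lambda^*_{n,-1})^\top v\|_\infty\,\|(\lambda_{-1}+\lambda^*_{n,-1})^\top v\|_\infty\lesssim\eps\|\lambda^*_{n,-1}\|_2^2$, and it has only logarithmic (parametric) entropy, so it contributes $\eps\|\lambda^*_{n,-1}\|_2^2/\sqrt n\le\sqrt\eps\,\|\lambda^*_{n,-1}\|_2^2/\sqrt n$ since $\eps<1$. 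Adding the three bounds gives the claimed inequality.

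The main obstacle, as usual in this kind of argument, is the bookkeeping: one must keep every constant genuinely uniform in $\ell,\delta,\eps$ and in the shrinking scales $\lambda^*_{n,1}$ and $\|\lambda^*_{n,-1}\|_2$. The crucial point is that for each transformed class the envelope and its envelope-normalized bracketing entropy conspire --- for $\{\rho^2\}$, for instance, the envelope is of order $(\lambda^*_{n,1}\ell^3)^2$ while the bracketing number at resolution $\eps'\times$(envelope) is $\exp(O((\eps')^{-1/3}))$ with an absolute constant --- so that the maximal inequality returns exactly (envelope)$/\sqrt n$ with nothing hidden. Two smaller care points: because squaring is not monotone one should route the $\{\rho^2\}$ bracketing bound through the $L_\infty$-covering part of \cref{approx-class-entropy} rather than manipulating $L_2$-brackets of $\mathcal{F}_{2\lambda^*_{n,1},\ell,2}$ directly; and one should use that $\Theta_n(\ell,\delta,\eps)$ confines $\lambda_1$ to a fixed multiple of $\lambda^*_{n,1}$, so that every $\rho$ in play lies in the single class $\mathcal{F}_{2\lambda^*_{n,1},\ell,2}$ with one common Lipschitz constant.
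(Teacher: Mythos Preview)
Your overall strategy matches the paper's: cancel the deterministic terms, decompose $G^2=\rho^2+2\rho\,(\lambda_{-1}^\top v)+(\lambda_{-1}^\top v)^2$, and bound each empirical process via a bracketing maximal inequality. Your treatment of the $\rho^2$ and cross terms is sound; for $\rho^2$ you route through the $k=2$ $L_\infty$-entropy of \cref{approx-class-entropy} plus Lipschitzness of squaring, whereas the paper instead verifies directly that $r=\rho^2$ has $\|r''\|_\infty\lesssim(\lambda_{n,1}^*)^2\ell^4$ and applies \cref{approx-class-entropy} with $k=1$. Both routes give the same contribution $(\lambda_{n,1}^*)^2\ell^6/\sqrt n$.

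The gap is in the quadratic term. Subtracting the fixed function $h_0=(\lambda_{n,-1}^{*\top}v)^2$ does \emph{not} leave the empirical process untouched: one only has $\sup_{\lambda}\bigl|(\P_n-\P)(\lambda_{-1}^\top v)^2\bigr|\le\sup_{\lambda}\bigl|(\P_n-\P)\bigl((\lambda_{-1}^\top v)^2-h_0\bigr)\bigr|+|(\P_n-\P)h_0|$, and the singleton term contributes $\E|(\P_n-\P)h_0|\asymp\|\lambda_{n,-1}^*\|_2^2/\sqrt n$, which lacks the $\sqrt\eps$ factor the lemma demands. Your centered class indeed has envelope $\eps\|\lambda_{n,-1}^*\|_2^2$, but once the singleton is added back the total is $\|\lambda_{n,-1}^*\|_2^2/\sqrt n$. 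The paper avoids centering and instead extracts $\sqrt\eps$ from the \emph{entropy}: since $\lambda\mapsto(\lambda^\top v)^2$ is $O(\|\lambda_{n,-1}^*\|_2)$-Lipschitz on the ball $\Lambda_n$ of radius $\eps\|\lambda_{n,-1}^*\|_2$, one gets $\log N_{[\,]}(\eps',\cdot,L_2(P))\lesssim\eps\|\lambda_{n,-1}^*\|_2^2/\eps'$, and integrating $\sqrt{\cdot}$ up to the (uncentered) envelope $\asymp\|\lambda_{n,-1}^*\|_2^2$ gives $J_{[\,]}\lesssim\sqrt\eps\,\|\lambda_{n,-1}^*\|_2^2$. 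In short, the $\sqrt\eps$ comes from a small-diameter entropy bound, not from a small envelope; your centering argument cannot recover it.
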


\begin{proof}[Proof of~\cref{lemma:uniform-concentration}]
We begin by noting that $$\left|\wh{L}_n(\theta;\wh\sigma^2,\wh{B})-L_n(\theta;\wh\sigma^2,\wh{B})\right|=\frac{1}{4\wh\sigma^2}|(\P_n-\P)G^2(\theta; X, W)|.$$ We prove a maximal inequality for the above RHS using tools from empirical process theory. Write $G(\rho,\lambda; x, w)={\rho}(x)+\lambda_{-1}^\top v(x,w)$ where $v(x,w):=(w,1-w,wx,(1-w)x,x^2)^\top$, and note that $\|v(x,w)\|_2\le \sqrt{1+\ell+\ell^2}\le 2$. To reduce notation, we write $\Theta_n=\Theta_n(\ell,\delta,\eps)$ for the rest of this proof.  Using triangle inequality,
\begin{align*}
    \sup_{\theta\in \Theta_n} |(\P_n-\P)G^2(\theta;X,W)|&\le \sup_{(\rho,\lambda)\in \Theta_n} |(\P_n-\P)\rho^2(X)|+ \sup_{(\rho,\lambda)\in \Theta_n} |(\P_n-\P)(\lambda_{-1}^\top v(X,W))^2|\\
    &\qquad+ 2\sup_{(\rho,\lambda)\in \Theta_n} |(\P_n-\P)\rho (X)\lambda_{-1}^\top v(X,W)|.
\end{align*}
We will bound each of the above three quantities in order. First, note that for any $(\rho,\lambda)\in \Theta_n$, $\rho^{(j)}(0)=0$ for $j=0,1,2$, and $\rho''$ is $B_1:=(1+\delta)\lambda_{n,1}^*$-Lipschitz. This gives $|\rho''(x)|=|\rho''(x)-\rho''(0)|\le B_1 \ell$. Moreover, a Taylor expansion of ${\rho}(x)$ about $x=0$ yields the following 
\begin{equation}\label{Taylor-expand-rho}
    |{\rho}(x)|=|\rho(x)-\rho(0)-\rho'(0)x|=\frac{|\rho''(\xi)-\rho''(0)|}{|\xi - x|}\left|\xi-x\right|\frac{x^2}{2}\le B_1\frac{\ell^3}{2}.
\end{equation}
A similar argument gives $|\rho'(x)|\le B_1 \ell^2$. Putting these all together, we can say that the function $r=\rho^2$ satisfies $r(0)=0$, $r'(0)=0$, $r''(x)=2\rho''(x)\rho(x)+2\rho'(x)^2$, and thus $$\|r''\|_\infty \lesssim B_1^2 \ell^4 \lesssim (\lambda_{n,1}^*)^2\ell^4.$$
Consequently, using the notation of \cref{approx-class-entropy}, we can say that $\mc{A}:=\{\rho^2:(\rho,\lambda)\in \Theta_n\text{ for some }\lambda\}$ is a subset of the class $\mc{F}_{\ell,h,1}$ with Lipschitz constant $B\lesssim (\lambda_{n,1}^*)^2\ell^4$. Therefore, \cref{approx-class-entropy} implies that $$\log N_{[\,]}(\eps', \cal{A}, L_2(P))\le \log N_{[\,]}(\eps', \mc{F}_{\ell,h,1}, L_2(P))\le \ell (B/\eps')^{1/2}\lesssim \lambda_{n,1}^* \ell^3 (\eps')^{-1/2}.$$
Finally, note that \eqref{Taylor-expand-rho} implies that the constant function $F=(\lambda_{n,1}^* \ell^3)^2$ works as an envelope to the class $\cal{A}$.
We can now apply \cite[Corollary 19.35]{vdV} to conclude that
\begin{align*}
    \E\sup_{(\rho,\lambda)\in \Theta_n} |(\P_n-\P)\rho^2(X)| &= \E\sup_{r\in\mc{A}}|(\P_n-\P)r|\\
    &\lesssim \frac{1}{\sqrt{n}}\int_0^{\|F\|_{P,2}} \sqrt{\log N_{[\,]}(\eps', \mc{A}, L_2(P))}\,d\eps' \\
    &\lesssim \frac{1}{\sqrt{n}}\int_0^{F} (\eps')^{-1/4}\left(\lambda_{n,1}^* \ell^3\right)^{1/2} \,d\eps'\\
    &\lesssim\frac{1}{\sqrt{n}} F^{3/4}F^{1/4}=(\lambda_{n,1}^*)^2\frac{\ell^6}{\sqrt{n}}.
\end{align*}
Next, we deal with $$\sup_{(\rho,\lambda)\in \Theta_n} |(\P_n-\P)(\lambda_{-1}^\top v(X,W))^2|=\sup_{\lambda\in\Lambda_n} |(\P_n-\P)(\lambda_{-1}^\top v(X,W))^2|,$$
where $\Lambda_n=\Lambda_n(\eps)=\{v\in\R^5:\|v-\lambda_{n,-1}^*\|\le\eps\|\lambda_{n,-1}^*\|\}\subseteq [0,\infty)\times \R^5$. Note, for any $\lambda\in\Lambda_n$,
$$|\lambda_{-1}^\top v(x,w)|\le \|\lambda_{-1}\|_2\|v(x,w)\|_2\le (1+\eps)\|\lambda_{n,-1}^*\|_2\sqrt{1+\ell+\ell^2}\le 4\|\lambda_{n,-1}^*\|_2.$$
Therefore for any fixed $x,w$, and $\lambda,\wt{\lambda}\in \Lambda_n$,
\begin{align*}
    |\langle\lambda, v(x,w))\rangle^2-\langle\wt{\lambda}, v(x,w))\rangle^2|&\le 8\|\lambda_{n,-1}^*\|_2|\langle\lambda, v(x,w))\rangle-\langle\wt{\lambda}, v(x,w))\rangle|\\
    &\le 16\|\lambda_{n,-1}^*\|_2 \|\lambda-\wt{\lambda}\|_2,\end{align*}
showing that the function $\lambda\mapsto (\lambda^\top v(x,w))^2$ (where $\lambda\in \Lambda_n$) is $c$-Lipschitz, for $c=16\|\lambda_{n,-1}^*\|$. Consequently, 
\begin{align*}
\log N_{[\,]}(\eps', \Lambda_n,L_2(P))&\le 5\log\left(1+\frac{2c \diam(\Lambda_n)}{\eps'}\right)\\
&\le \frac{10c\diam(\Lambda_n)}{\eps'}\lesssim \eps\|\lambda_{n,-1}^*\|_2^2 (\eps')^{-1}.\end{align*}
We can now apply \cite[Corollary 19.35]{vdV} with envelop $(4\|\lambda_{n,-1}^*\|_2)^2$ to conclude that
\begin{align*}
    \E\sup_{(\rho,\lambda)\in \Theta_n} |(\P_n-\P)(\lambda_{-1}^\top v(X,W))^2| 
    &\lesssim \frac{1}{\sqrt{n}}\int_0^{16\|\lambda_{n,-1}^*\|^2} \sqrt{\log N_{[\,]}(\eps', \Lambda_n,L_2(P))}\,d\eps' \\
    &\lesssim  \|\lambda_{n,-1}^*\|_2^2\frac{\sqrt{\eps}}{\sqrt{n}}.
\end{align*}
It only remains to prove a maximal inequality for the following supremum:
$$\sup_{(\rho,\lambda)\in \Theta_n} |(\P_n-\P)\rho (X)\lambda_{-1}^\top v(X,W)|.$$
Recall from \eqref{Taylor-expand-rho} that $|\rho(x)|\le B_1 \ell^3/2\le \lambda_{n,1}^* \ell^3$ for any $(\rho, \lambda)\in \Theta_n$. Also recall from above that $|\langle\lambda_{-1}, v(x,w)\rangle|\le  4 \|\lambda_{n,-1}^*\|_2$. Consequently, for any $(\rho,\lambda)\in \Theta_n$, and for every $x,w$, we have 
$$
|\rho (x)\lambda_{-1}^\top v(x,w)|\le 4\lambda_{n,1}^*\|\lambda_{n,-1}\|_2 \ell^3,
$$
proving that the constant function $4\lambda_{n,1}^*\|\lambda_{n,-1}\|_2 \ell^3$ works as an envelope for this class.

Next, we fix $\eps'>0$. Suppose that $\cal{R}_\eps:=\{\rho_1,\dots,\rho_{N_1}\}$ is a minimal $\eps_1$-cover for  $\pi_1(\Theta_n)=\{\rho:(\rho,\lambda)\in\Theta_n\text{ for some }\lambda\}$ under the sup-norm, and $\cal{L}_\eps:=\{B_1,\dots,\ell_{N_2}\}$ is a minimal $\eps_2$-cover for $\Lambda_n=\{v\in \R^5:\|v-\lambda_{n,-1}^*\|_2\le \eps\|\lambda_{n,-1}^*\|_2\}$ under the $\ell_2$-norm, where $\eps_1 := {\eps'}/{8\|\lambda_{n,-1}^*\|_2}$ and $\eps_2={\eps'}/{4\lambda_{n,1}^*\ell^3}$. It is well-known (see, {\it e.g.}, \citet{vdV}) that,
$$\log N_2\le 5\log\left(1+\frac{c\eps\|\lambda_{n,-1}^*\|}{\eps'/4\lambda_{n,1}^*\ell^3}\right)\lesssim\eps\lambda_{n,1}^*\|\lambda_{n,-1}^*\|\ell^3(\eps')^{-1}.
$$ On the other hand, \cref{approx-class-entropy} yields $$\log N_1\lesssim (\eps'/8\|\lambda_{n,-1}\|_2)^{-1/3}(2\lambda_{n,1}^*)^{1/3}\ell = (\lambda_{n,1}^*\|\lambda_{n,-1}^*\|_2)^{1/3}\ell(\eps')^{-1/3}.$$  Now, for any $\theta=(\rho,\lambda)\in \Theta_n$, pick $\theta^0=(\rho^0,\lambda^0)\in \cal{R}_\eps\times\cal{L}_\eps$ such that $\|\rho-\rho^0\|_\infty\le \eps_1$ and  $\|\lambda-\lambda^0\|_2\le \eps_2$. Note that,
\begin{align*}
    &|\rho (x)\langle\lambda_{-1}, v(x,w)\rangle - \rho^0 (x)\langle\lambda_{-1}^0, v(x,w)\rangle|\\
    &\le |\rho (x)|\cdot |\langle\lambda_{-1}, v(x,w)\rangle - \langle\lambda_{-1}^0, v(x,w)\rangle|+|\langle\lambda_{-1}^0, v(x,w)\rangle|\cdot |\rho (x)- \rho^0 (x)|\\
     &\le \lambda_{n,1}^*\ell^3 \cdot\|\lambda_{-1} - \lambda_{-1}^0\|_2\|v(x,w)\|_2+4\|\lambda_{n,-1}^*\|_2\cdot |\rho (x)- \rho^0 (x)|\\
     &\le \lambda_{n,1}^*\ell^3\cdot \frac{\eps'}{4\lambda_{n,1}^*\ell^3}\cdot 2+4\|\lambda_{n,-1}^*\|_2 \cdot \frac{\eps'}{8\|\lambda_{n,-1}^*\|_2}\le \eps'.
\end{align*}
Therefore, the class $\mc{B}$ of functions $(x,w)\mapsto \rho(x)\langle\lambda_{-1}, v(x,w)\rangle$, where $(\rho,\lambda)\in\Theta_n$, satisfies
\begin{align*}
    \log N(\eps', \mc{B}, \|\cdot\|_\infty)
    &\le \log N_1N_2\lesssim (\eps')^{-1/3}(\lambda_{n,1}^*\|\lambda_{n,-1}^*\|_2)^{1/3}\ell + (\eps')^{-1}\lambda_{n,1}^*\|\lambda_{n,-1}^*\|\ell^3\eps.
\end{align*}
This also gives us a bound for the entropy with bracketing (cf.~\citet[Corollary 2.7.2]{VW}), as follows. 
\begin{align*}
    \log N_{[\,]}(\eps', G(\Theta_n), L_2(P))&\le \log N(2\eps', G(\Theta_n), |\cdot|_\infty)\\
    &\lesssim  (\eps')^{-1/3}(\lambda_{n,1}^*\|\lambda_{n,-1}^*\|_2\ell^3)^{1/3} + (\eps')^{-1}\lambda_{n,1}^*\|\lambda_{n,-1}^*\|\ell^3\eps.
\end{align*}
Recall from above that the constant function $F_{ab}:=4\lambda_{n,1}^*\|\lambda_{n,-1}^*\|_2 \ell^3$ acts as an envelop function for the class $\mc{B}$. We now apply \cite[Corollary 19.35]{vdV} once more to conclude that
\begin{align*}
    \E\sup_{(\rho,\lambda)\in \Theta_n} |(\P_n-\P)\rho (X)\lambda_{-1}^\top v(X,W)|&\lesssim \frac{1}{\sqrt{n}}\int_0^{F_{ab}} \left((\eps')^{-1/6}F_{ab}^{1/6}+(\eps')^{-1/2}\eps^{1/2}F_{ab}^{1/2}\right)\,d\eps'\\
    &\lesssim\frac{1}{\sqrt{n}}\left((F_{ab})^{5/6}(F_{ab})^{1/6} + (F_{ab})^{1/2}\eps^{1/2}(F_{ab})^{1/2}\right)\\
    &\lesssim \lambda_{n,1}^*\|\lambda_{n,-1}^*\|_2 \frac{\ell^3}{\sqrt{n}},
\end{align*}
which completes the proof.
\end{proof}

\begin{lemma}\label{lemma:strongly-cvx} 
Let $G(\rho,\lambda;x,w):=\rho(x) + \lambda_{-1}^\top v(x,w)$, where $v(x,w):=(w,1-w,xw, x(1-w), x^2)$. 
Assume that the assumptions in \cref{thm:validity} hold. Then, there exists $0<\alpha\le 1$ such that, for any $u\in \R^5$, $$u^\top \E \left[v(X,W)v(X,W)^\top \right]u\ge \alpha \|u\|_2^2.$$
Further, $\theta_n^*(\sigma^2,B)=\argmin_{\theta\,\in\,\Theta(\ell)} L_n(\theta;\sigma^2,B)$ satisfies the following.
\begin{align*}
    L_n(\theta;\sigma^2,B)-L_n(\theta_n^*;\sigma^2,B)&\ge\frac{1}{4\sigma^2}\E\left(\rho(X)-\rho_n^*(X)+\langle \lambda_{-1}-\lambda_{n,-1}^*,v(X,W)\rangle\right)^2+\frac{(\lambda_1-\lambda_{n,1}^*)^2}{4nB^2}\\
    &\ge \frac{\alpha}{4\sigma^2}\left(\|\rho-\rho_n^*\|^2_{L^2(P)}+\|\lambda_{-1}-\lambda_{n,-1}^*\|_2^2\right)+\frac{(\lambda_1-\lambda_{n,1}^*)^2}{4nB^2}-R_n,
\end{align*}
where $R_n=\frac{2}{4\sigma^2}\E|(\rho(X)-\rho_n^*(X))\langle \lambda_{-1}-\lambda_{n,-1}^*,v(X,W)\rangle|$. \end{lemma}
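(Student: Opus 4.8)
The plan is to treat the two assertions in turn: first a uniform positive-definiteness statement for the Gram matrix $M:=\E\big[v(X,W)v(X,W)^\top\big]$, and then a standard ``strong convexity of a quadratic objective'' argument for the population dual loss $L_n(\cdot\,;\sigma^2,B)$. For the first assertion, note that $u^\top M u=\E\big[(u^\top v(X,W))^2\big]\ge 0$ for every $u\in\R^5$, so $M$ is positive semidefinite and it only remains to rule out a nontrivial kernel. With $c=0$ and $W=\ind{X\ge 0}$, the scalar $u^\top v(X,W)$ equals $u_1+u_3X+u_5X^2$ on $\{X\ge 0\}$ and $u_2+u_4X+u_5X^2$ on $\{X<0\}$; if $u^\top M u=0$ then $u^\top v(X,W)=0$ almost surely, and since the density of $X$ is positive near $c$ its support accumulates at $c$ from both sides, so each of these degree-at-most-$2$ polynomials vanishes on an infinite set and is therefore identically zero. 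Hence $u_1=u_3=u_5=0$ and $u_2=u_4=u_5=0$, i.e.\ $u=0$, so $M\succ 0$ and one may take $\alpha:=\min\{\lambda_{\min}(M),1\}\in(0,1]$. (The value of $\alpha$ may depend on the pilot bandwidth $\ell$, since the relevant expectations are effectively restricted to $|X-c|\le\ell$; this is harmless for the statement of the lemma.)

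For the second assertion the two structural ingredients are that $\Theta(\ell)$ is convex and that $L_n(\cdot\,;\sigma^2,B)$ is convex on $\Theta(\ell)$ with $\theta_n^*$ as its (unique) minimizer. Convexity of $\Theta(\ell)$ holds because the conditions $f(0)=f'(0)=f''(0)=0$ are linear, the constraint $\lambda_1\ge 0$ is preserved under convex combinations, and for $f=tf_1+(1-t)f_2$ the second derivative $f''=tf_1''+(1-t)f_2''$ is Lipschitz with constant at most $t\lambda_1^{(1)}+(1-t)\lambda_1^{(2)}$, which is exactly the first coordinate of the combined $\lambda$; convexity of $L_n$ is immediate since $(\rho,\lambda_{-1})\mapsto\tfrac{1}{4\sigma^2}\E\big[G^2(\rho,\lambda;X,W)\big]$ is a positive-semidefinite quadratic form, $\lambda_1^2/(4nB^2)$ is convex, and $(\lambda_2-\lambda_3)/n$ is linear. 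The key observation is that along any segment $\theta_t:=\theta_n^*+t(\theta-\theta_n^*)$ with $\theta\in\Theta(\ell)$ the loss is \emph{exactly} quadratic in $t$,
\begin{equation*}
L_n(\theta_t;\sigma^2,B)=L_n(\theta_n^*;\sigma^2,B)+t\,D(\theta-\theta_n^*)+t^2\Big(\tfrac{1}{4\sigma^2}\E\big[G^2(\theta-\theta_n^*;X,W)\big]+\tfrac{(\lambda_1-\lambda_{n,1}^*)^2}{4nB^2}\Big),
\end{equation*}
where $D(\theta-\theta_n^*):=\tfrac{1}{2\sigma^2}\E\big[G(\theta_n^*;X,W)\,G(\theta-\theta_n^*;X,W)\big]+\tfrac{\lambda_{n,1}^*(\lambda_1-\lambda_{n,1}^*)}{2nB^2}+\tfrac{(\lambda_2-\lambda_{n,2}^*)-(\lambda_3-\lambda_{n,3}^*)}{n}$ is the one-sided directional derivative of $L_n$ at $\theta_n^*$. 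Minimality of $\theta_n^*$ over the convex set $\Theta(\ell)$ gives the variational inequality $D(\theta-\theta_n^*)\ge 0$ for every $\theta\in\Theta(\ell)$; evaluating the display at $t=1$ and dropping this nonnegative term yields exactly the first claimed inequality.

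The second inequality follows from the first by expanding the square: with $\Delta\rho:=\rho-\rho_n^*$ and $\Delta\lambda_{-1}:=\lambda_{-1}-\lambda_{n,-1}^*$,
\begin{equation*}
\E\big[(\Delta\rho(X)+\Delta\lambda_{-1}^\top v(X,W))^2\big]=\|\Delta\rho\|_{L^2(P)}^2+\Delta\lambda_{-1}^\top M\,\Delta\lambda_{-1}+2\,\E\big[\Delta\rho(X)\,\Delta\lambda_{-1}^\top v(X,W)\big],
\end{equation*}
after which one bounds $\Delta\lambda_{-1}^\top M\,\Delta\lambda_{-1}\ge\alpha\|\Delta\lambda_{-1}\|_2^2$ by the first assertion, bounds the cross term below by $-2\,\E\big|\Delta\rho(X)\,\Delta\lambda_{-1}^\top v(X,W)\big|$ (which contributes $-R_n$ once multiplied by $\tfrac{1}{4\sigma^2}$), and uses $\alpha\le 1$ on $\|\Delta\rho\|_{L^2(P)}^2$; adding back $\tfrac{(\lambda_1-\lambda_{n,1}^*)^2}{4nB^2}$ gives the claim. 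The argument is essentially routine convex analysis, so there is no deep obstacle; the points that need care are (i) recognizing that $\Theta(\ell)$ is genuinely convex despite the $\lambda_1$-dependent Lipschitz constraint that couples $\rho$ and $\lambda$ — this is precisely what licenses the clean variational-inequality step — and (ii) the positive-definiteness in the first assertion, where the assumption in \cref{thm:validity} that $X$ has positive density around $c$ is exactly what supplies enough mass on both sides of the cutoff for the polynomial non-degeneracy argument.
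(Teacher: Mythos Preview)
Your proof is correct. The first assertion (positive definiteness of $M$) follows the same polynomial non-degeneracy argument as the paper, with the same choice $\alpha=\min\{\lambda_{\min}(M),1\}$.

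For the second assertion you take a genuinely different route. The paper reparametrizes via $\tilde\rho=\rho/\lambda_1$ to decouple the function from the Lipschitz-constant coordinate, then computes three separate first-order conditions (in $\lambda_{-1}$, in $\lambda_1$, and a variational one in $\tilde\rho$), and finally expands $L_n(\theta)-L_n(\theta_n^*)$ and cancels terms against these conditions through a page of algebra. You instead verify directly that $\Theta(\ell)$ is convex (the key check being that the coupled constraint ``$\rho''$ is $\lambda_1$-Lipschitz'' is preserved under convex combinations), observe that $L_n$ is an \emph{exact} quadratic along any segment $\theta_t=\theta_n^*+t(\theta-\theta_n^*)$, and then read off the inequality from the single variational inequality $D(\theta-\theta_n^*)\ge 0$ at the minimizer. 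This is shorter and more transparent: it packages all three of the paper's first-order conditions into one stroke and avoids the reparametrization entirely. The paper's explicit gradient equations \eqref{grad-lambda-minus}--\eqref{grad-rho} do have a side benefit---\eqref{grad-lambda-minus} is reused in \cref{lemma:one-to-minus-one}---but for the present lemma your argument is the cleaner one. The final step (expanding the square, applying $M\succeq\alpha I$, and absorbing the cross term into $-R_n$) is identical in both approaches.
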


\begin{proof}[Proof of~\cref{lemma:strongly-cvx}]

Define $A:=\E[v(X,W)v(X,W)^\top]$. Since $A$ is positive semidefinite, it has non-negative eigenvalues. We first show that the smallest eigenvalue of $A$, which we denote as $\lambda_{\min}(A)$, is strictly positive. To prove this, assume to the contrary that $\lambda_{\min}(A)=0$. Then, there exists $u\in \R^5$, $u\neq 0$, such that $u^\top A u = 0$. We can conclude from $u^\top A u=\E (v(X,W)^\top u)^2 = 0$ that $v(X,W)^\top u=0$ with probability $1$. This implies that, either $X$ is a solution of $u_1 + u_3 x + u_5 x^2=0$ (when $w=\ind{x>0}=1$), or a solution of $u_2 + u_4 x +u_5 x^2 =0$ (when $w=\ind{x>0}=0$). But, $X$ having a finite support contradicts the assumption that the running variable has a strictly positive density at the cutoff. So we must have $\lambda_{\min}(A)>0$. Therefore, 
$$u^\top \E \left[v(X,W)v(X,W)^\top \right]u\ge \alpha \|u\|_2^2,$$
for any $u\in \R^5$, where $\alpha:=\min\{1,\lambda_{\min}(A)\}$.

For proving the second part, we write $L_n=L_n(\,\cdot\,;\sigma^2,B)$ for notational convenience. First, we simplify $L_n(\theta)-L_n(\theta_n^*)$ using the fact that  gradients of the loss $L_n$ at $\theta_n^*$ are zero. To remove the dependence of the domain on the parameter $\lambda_1$, we write
$$L_n(\rho,\lambda)= \frac{1}{4\sigma^2}\E \left(\lambda_1 \wt{\rho}(X)+\lambda_{-1}^\top v(X,W)\right)^2+\frac{\lambda_1^2}{4nB^2}+\frac{\lambda_2-\lambda_3}{n}=: \wt{L}_n(\wt{\rho}, \lambda),$$
where $\wt{\rho}(\cdot)=\rho(\cdot)/\lambda_1\in\rhos$. Since $\wt{L}_n$ is minimized at $(\wt{\rho}_n^*,\lambda_n^*)$, we can deduce the following.
\begin{enumerate}
    \item $\nabla_{\lambda_{-1}} \wt{L}_n(\wt{\rho}_n^*,\lambda_n^*)=0$. This implies that,
    \begin{equation}\label{grad-lambda-minus}
        \frac{1}{2\sigma^2}\E\left[\left(\lambda_{n,1}^*\wt{\rho}_n^*(X)+\langle\lambda_{n,-1}^*,v(X,W)\rangle\right)v(X,W)\right]+\frac{e_1-e_2}{n}=0.
    \end{equation}
    where $e_1=(1,0,0,0,0)$ and $e_2=(0,1,0,0,0)$. 
    \item $\nabla_{\lambda_{1}} \wt{L}_n(\wt{\rho}_n^*,\lambda_n^*)=0$. This implies that,
    \begin{equation}\label{grad-lambda-one}
        \frac{1}{2\sigma^2}\E\left[\left(\lambda_{n,1}^*\wt{\rho}_n^*(X)+\langle\lambda_{n,-1}^*,v(X,W)\rangle\right)\wt{\rho}_n^*(X)\right]+\frac{2\lambda_{n,1}^*}{4nB^2}=0.
    \end{equation}
    \item $\nabla_{\rho} \wt{L}_n(\wt{\rho}_n^*,\lambda_n^*)=0$. We can say that the convex function $t\mapsto\wt{L}_n(t\wt{\rho}+(1-t)\wt{\rho}_n^*,\lambda_n^*)$ is minimized at $t=0$, which gives us the following.
    \begin{equation}\label{grad-rho}
        \frac{2\lambda_{n,1}^*}{4\sigma^2}\E\left[\left(\wt{\rho}(X)-\wt{\rho}_n^*(X)\right)\left(\lambda_{n,1}^*\wt{\rho}_n^*(X)+\langle\lambda_{n,-1}^*,v(X,W)\rangle\right)\right]\ge 0.
    \end{equation}
\end{enumerate}
Equipped with \cref{grad-lambda-minus,grad-lambda-one,grad-rho}, we are ready to simplify $L_n(\theta)-L_n(\theta_n^*)=\wt{L}_n(\wt{\rho},\lambda) - \wt{L}_n(\wt{\rho}_n^*, \lambda_n^*)$, as follows.
\begin{align*}
   & L_n(\theta)-L_n(\theta_n^*)-\frac{1}{4\sigma^2}\E\left(\rho(X)-\rho_n^*(X)+\langle \lambda_{-1}-\lambda_{n,-1}^*,v(X,W)\rangle\right)^2\\
   &=\frac{2}{4\sigma^2}\E\left[(\rho-\rho_n^*)(X)\left(\rho_n^*(X)+\langle \lambda_{n,-1}^*,v(X,W)\rangle\right)\right]+\frac{\lambda_{1}^2-(\lambda_{n,1}^*)^2}{4nB^2}\\
   &\qquad + \frac{2}{4\sigma^2}\E\left[\langle \lambda_{-1}-\lambda_{n,-1}^*,v(X,W)\rangle\left(\rho_n^*(X)+\langle \lambda_{n,-1}^*,v(X,W)\rangle\right)\right]+\frac{\lambda_2-\lambda_3-\lambda_{n,2}^*+\lambda_{n,3}^*}{n}.
   \end{align*}
Using \eqref{grad-lambda-minus}, we deduce that the second line in the above display is zero. Next, set $\wt{\rho}=\rho/\lambda_1$, $\wt{\rho}_n^*=\rho_n^*/\lambda_{n,1}^*$. Continuing from above,
   \begin{align*}
   &=\frac{2}{4\sigma^2}\E\left[(\lambda_1\wt{\rho}(X)-\lambda_{n,1}^*\wt{\rho}_n^*(X))\left(\rho_n^*(X)+\langle \lambda_{n,-1}^*,v(X,W)\rangle\right)\right]+\frac{\lambda_{1}^2-(\lambda_{n,1}^*)^2}{4nB^2}\\[2mm]
    &=\frac{2\lambda_1}{4\sigma^2}\E\left[(\wt{\rho}(X)-\wt{\rho}_n^*(X))\left(\rho_n^*(X)+\langle \lambda_{n,-1}^*,v(X,W)\rangle\right)\right]\\[2mm]
    &\qquad\qquad+\frac{2}{4\sigma^2}\E\left[(\lambda_1-\lambda_{n,1}^*)\wt{\rho}_n^*(X)\left(\rho_n^*(X)+\langle \lambda_{n,-1}^*,v(X,W)\rangle\right)\right]+\frac{\lambda_{1}^2-(\lambda_{n,1}^*)^2}{4nB^2}\\[2mm]
    &\ge 0-\frac{2\lambda_{n,1}^*}{4nB^2}(\lambda_1-\lambda_{n,1}^*)+\frac{\lambda_{1}^2-(\lambda_{n,1}^*)^2}{4nB^2}\tag{using \eqref{grad-rho} and \eqref{grad-lambda-one}}\\[2mm]
    &\ge\frac{(\lambda_1-\lambda_{n,1}^*)^2}{4nB^2}.
\end{align*}
Therefore,
\begin{align*}
    L_n(\theta)&-L_n(\theta_n^*)\ge\frac{1}{4\sigma^2}\E\left(\rho(X)-\rho_n^*(X)+\langle \lambda_{-1}-\lambda_{n,-1}^*,v(X,W)\rangle\right)^2+\frac{(\lambda_1-\lambda_{n,1}^*)^2}{4nB^2}\\
    &\ge\frac{1}{4\sigma^2}\left[\E(\rho(X)-\rho_n^*(X))^2+(\lambda_{-1}-\lambda_{n,-1}^*)^\top A(\lambda_{-1}-\lambda_{n,-1}^*)\right]-R_n+\frac{(\lambda_1-\lambda_{n,1}^*)^2}{4nB^2}\\
    &\ge\frac{\alpha}{4\sigma^2}\left(\|\rho-\rho_n^*\|^2_{L^2(P)}+\|\lambda_{-1}-\lambda_{n,-1}^*\|_2^2\right)+\frac{(\lambda_1-\lambda_{n,1}^*)^2}{4nB^2}-R_n,
\end{align*}
as desired to show.

\end{proof}

\begin{lemma}\label{lemma:one-to-minus-one}
   The optimizer $\theta_n^*=(\rho_n^*, \lambda_n^*)$ of the dual population problem  \eqref{dual:population} satisfies $$\|\lambda_{n,-1}^*\|_2=O(\lambda_{n,1}^*\ell^3+n^{-1}).$$
   Consequently, \begin{equation*}
\sup_{\theta\,\in\,\Theta_n(\ell,\delta,\eps)} |{L}_n(\theta;\wh{\sigma}^2,\wh{B})-L_n(\theta;\wh{\sigma}^2, B)|\lesssim 
\frac{(\lambda_{n,1}^*)^2}{n}\left|\frac{1}{\wh{B}}-\frac{1}{B}\right|.
\end{equation*}
\end{lemma}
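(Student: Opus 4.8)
The plan is to obtain both statements by combining the structure of the population dual loss $L_n(\theta;\sigma^2,B)=\tfrac1{4\sigma^2}\E[G^2(\theta;X,W)]+\tfrac{\lambda_1^2}{4nB^2}+\tfrac{\lambda_2-\lambda_3}{n}$ with the first-order optimality conditions already recorded in the proof of \cref{lemma:strongly-cvx}. For the bound on $\|\lambda_{n,-1}^*\|_2$, the key point is that $L_n$ is, as a function of the unconstrained block $\lambda_{-1}\in\R^5$, a strongly convex quadratic, so its stationarity condition \eqref{grad-lambda-minus} determines $\lambda_{n,-1}^*$ explicitly in terms of $\rho_n^*$ and of $1/n$; one then bounds the two resulting contributions separately. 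For the ``consequently'' part the point is even simpler: among the three summands of $L_n$ only $\lambda_1^2/(4nB^2)$ involves $B$, so $L_n(\theta;\wh\sigma^2,\wh B)-L_n(\theta;\wh\sigma^2,B)=\tfrac{\lambda_1^2}{4n}(\wh B^{-2}-B^{-2})$ identically, and on $\Theta_n(\ell,\delta,\eps)$ one has $\lambda_1\le(1+\delta)\lambda_{n,1}^*$.

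To carry out the first part I would write $G(\theta_n^*;x,w)=\rho_n^*(x)+v(x,w)^\top\lambda_{n,-1}^*$ with $v(x,w)=(w,1-w,xw,x(1-w),x^2)^\top$ and set $A:=\E[v(X,W)v(X,W)^\top]$; then \eqref{grad-lambda-minus} reads $\tfrac1{2\sigma^2}\bigl(\E[\rho_n^*(X)v(X,W)]+A\lambda_{n,-1}^*\bigr)+\tfrac{e_1-e_2}{n}=0$, whence $\lambda_{n,-1}^*=-A^{-1}\bigl(\E[\rho_n^*(X)v(X,W)]+\tfrac{2\sigma^2}{n}(e_1-e_2)\bigr)$. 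By \cref{lemma:strongly-cvx} there is $\alpha>0$ with $u^\top A u\ge\alpha\|u\|_2^2$ for all $u$, so $A$ is invertible and $\|A^{-1}z\|_2\le\alpha^{-1}\|z\|_2$. For the first term I would use that $\rho_n^*(0)=(\rho_n^*)'(0)=(\rho_n^*)''(0)=0$ with $(\rho_n^*)''$ being $\lambda_{n,1}^*$-Lipschitz, so the Taylor estimate of \eqref{Taylor-expand-rho} gives $\sup_{|x|\le\ell}|\rho_n^*(x)|\le\tfrac12\lambda_{n,1}^*\ell^3$; together with $\|v(x,w)\|_2\le\sqrt{1+\ell+\ell^2}\le2$ this yields $\|\E[\rho_n^*(X)v(X,W)]\|_2\le\lambda_{n,1}^*\ell^3$. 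For the second term, $\|e_1-e_2\|_2=\sqrt2$ while $\sigma^2=\E[(Y-\mu_W(X))^2\mid X]\le C^{2/q}$ is bounded by Jensen's inequality and the $q$-th moment hypothesis of \cref{thm:validity} (and likewise the relevant $\wh\sigma^2\Pto\sigma^2$ stays bounded). Combining, $\|\lambda_{n,-1}^*\|_2\le\alpha^{-1}\bigl(\lambda_{n,1}^*\ell^3+\sqrt2\,\sigma^2 n^{-1}\bigr)=O(\lambda_{n,1}^*\ell^3+n^{-1})$.

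For the second part I would start from the identity above: for any $\theta=(\rho,\lambda)$, $|L_n(\theta;\wh\sigma^2,\wh B)-L_n(\theta;\wh\sigma^2,B)|=\tfrac{\lambda_1^2}{4n}|\wh B^{-2}-B^{-2}|$, and restricting to $\Theta_n(\ell,\delta,\eps)$ with $\lambda_1\le(1+\delta)\lambda_{n,1}^*$ bounds the supremum by $\tfrac{(1+\delta)^2(\lambda_{n,1}^*)^2}{4n}|\wh B^{-2}-B^{-2}|$; since $B=\max\{|\mu'''(c)|,\eps\}\ge\eps>0$ and $\wh B\Pto B$, we have $|\wh B^{-2}-B^{-2}|=(\wh B^{-1}+B^{-1})|\wh B^{-1}-B^{-1}|\lesssim|\wh B^{-1}-B^{-1}|$ on an event of probability tending to one, which gives the stated bound. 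I do not anticipate a genuine obstacle here: the argument is a closed-form solve of a five-dimensional quadratic together with elementary Taylor and moment bounds. The only point requiring care is that the implied constants remain uniform in $n$ and in the (data-independent) realizations of $\wh B$ and $\wh\sigma^2$; that uniformity is supplied precisely by the ellipticity bound $\lambda_{\min}(A)\ge\alpha>0$ of \cref{lemma:strongly-cvx} (which follows from the running variable having strictly positive density at $c$) and by the moment-based upper bound on $\sigma^2$ in \cref{thm:validity}.
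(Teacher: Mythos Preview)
Your proposal is correct and follows essentially the same route as the paper: both rearrange the first-order condition \eqref{grad-lambda-minus} into $A\lambda_{n,-1}^*=-\E[\rho_n^*(X)v(X,W)]-2\sigma^2(e_1-e_2)/n$, invoke the ellipticity of $A$ from \cref{lemma:strongly-cvx} together with the Taylor bound $\|\rho_n^*\|_\infty\le\tfrac12\lambda_{n,1}^*\ell^3$ and $\|v\|_2\le 2$, and for the second part observe that only the $\lambda_1^2/(4nB^2)$ term depends on $B$ while $\lambda_1\le(1+\delta)\lambda_{n,1}^*$ on $\Theta_n(\ell,\delta,\eps)$. Your write-up is in fact more explicit than the paper's (which simply asserts the second conclusion ``follows immediately''), and you correctly note that the first part is not actually needed for the second.
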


\begin{proof}[Proof of \cref{lemma:one-to-minus-one}]
    Recall from \eqref{grad-lambda-minus} in the proof of \cref{lemma:strongly-cvx} that  $$\frac{1}{2\sigma^2}\E\left[\left(\lambda_{n,1}^*\wt{\rho}_n^*(X)+\langle\lambda_{n,-1}^*,v(X,W)\rangle\right)v(X,W)\right]+\frac{e_1-e_2}{n}=0.$$
    Rearranging this equation yields $$A\lambda_{n,-1}^* = -\lambda_{n,1}^*\E\wt\rho_n^*(X)v(X,W) - 2\sigma^2\frac{(e_1-e_2)}{n}$$
    where $A:=\E[v(X,W)v(X,W)^\top]$, as defined in the proof of \cref{lemma:strongly-cvx}. Since $\|v(X,W)\|_2\le 2$ and \eqref{Taylor-expand-rho} yields $\|\wt\rho_n^*\|_\infty\le \ell^3/2$, we deduce that the above right hand side is $O(\lambda_{n,1}^*\ell^3+1/n)$, which finishes the proof of the first part. The second conclusion follows immediately from the definition once we use $\|\lambda_{n,-1}^*\|_2=O(\lambda_{n,1}^*\ell^3+1/n)$.
\end{proof}

\begin{lemma}\label{lemma:emp-dual-close-to-population-dual} 
Consider the neighborhood $\Theta_n(\ell,\delta,\eps)$ of the optimizer $\theta_n^*=\theta_n^*(\wh\sigma^2,B)$ of the population dual problem, as defined in \cref{lemma:uniform-concentration}. Denote by $\wh{\theta}_n=\wh{\theta}_n(\wh\sigma^2,\wh{B})$ the optimizer of the empirical dual problem. Assume that $\wh{B}\Pto B$, and that $\ell=o(n^{-1/12})$. For any $\delta,\eta\in(0,1)$, and $\eps=O(n^{-1/4})$, it holds that
$$\P(\wh{\theta}_n \in \Theta_n(\ell,\delta,\eps))\ge  1-\eta,$$
for all sufficiently large $n$.
\end{lemma}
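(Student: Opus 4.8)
The plan is a convex-localization argument: I would pit strong convexity of the population dual objective around $\theta_n^*$ against the uniform fluctuation bound of \cref{lemma:uniform-concentration}. First I would record the structural facts that make this work. The set $\Theta_n(\ell,\delta,\eps)$ is convex --- it is $\Theta(\ell)$ intersected with the two convex constraints $|\lambda_1-\lambda_{n,1}^*|\le\delta\lambda_{n,1}^*$ and $\|\lambda_{-1}-\lambda_{n,-1}^*\|_2\le\eps\|\lambda_{n,-1}^*\|_2$ --- and it contains $\theta_n^*$ in its relative interior since $\delta,\eps>0$; moreover both $\theta\mapsto\wh L_n(\theta;\wh\sigma^2,\wh B)$ and $\theta\mapsto L_n(\theta;\wh\sigma^2,B)$ are convex. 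Suppose, towards a contradiction, that $\wh\theta_n\notin\Theta_n:=\Theta_n(\ell,\delta,\eps)$, and let $\wt\theta_n=(1-t)\theta_n^*+t\wh\theta_n$, $t\in(0,1]$, be the unique point at which the segment $[\theta_n^*,\wh\theta_n]$ meets $\partial\Theta_n$.

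By convexity of $\wh L_n$ and optimality of $\wh\theta_n$ one has $\wh L_n(\wt\theta_n;\wh\sigma^2,\wh B)\le(1-t)\wh L_n(\theta_n^*;\wh\sigma^2,\wh B)+t\,\wh L_n(\wh\theta_n;\wh\sigma^2,\wh B)\le\wh L_n(\theta_n^*;\wh\sigma^2,\wh B)$, and therefore, after adding and subtracting,
\[
L_n(\wt\theta_n;\wh\sigma^2,B)-L_n(\theta_n^*;\wh\sigma^2,B)\le 2\sup_{\theta\in\overline{\Theta_n}}\big|\wh L_n(\theta;\wh\sigma^2,\wh B)-L_n(\theta;\wh\sigma^2,B)\big|.
\]
To control the right-hand side I would split $|\wh L_n(\theta;\wh\sigma^2,\wh B)-L_n(\theta;\wh\sigma^2,B)|\le|\wh L_n(\theta;\wh\sigma^2,\wh B)-L_n(\theta;\wh\sigma^2,\wh B)|+|L_n(\theta;\wh\sigma^2,\wh B)-L_n(\theta;\wh\sigma^2,B)|$: the first piece is bounded, after Markov's inequality, by \cref{lemma:uniform-concentration}, and the second by the second display of \cref{lemma:one-to-minus-one} together with $\wh B\Pto B$. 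Substituting $\eps\le1$, $\|\lambda_{n,-1}^*\|_2=O(\lambda_{n,1}^*\ell^3+n^{-1})$ (\cref{lemma:one-to-minus-one}), $\lambda_{n,1}^*=\O(n^{-3/7})$ (\cref{lemma:minimax-bias-rate}, which is in fact of exact order $n^{-3/7}$, the minimax bias rate) and $\ell=o(n^{-1/12})$, each term becomes $o_p\big((\lambda_{n,1}^*)^2/n\big)$; hence the supremum above is $o_p((\lambda_{n,1}^*)^2/n)$.

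Next I would lower-bound the left-hand side via \cref{lemma:strongly-cvx}, which yields $L_n(\wt\theta_n;\wh\sigma^2,B)-L_n(\theta_n^*;\wh\sigma^2,B)\ge\frac1{4\wh\sigma^2}\E\big[(G(\wt\theta_n;X,W)-G(\theta_n^*;X,W))^2\big]+(\wt\lambda_{n,1}-\lambda_{n,1}^*)^2/(4nB^2)$. Since $\wt\theta_n\in\partial\Theta_n$, at least one of the two defining constraints is active. If $|\wt\lambda_{n,1}-\lambda_{n,1}^*|=\delta\lambda_{n,1}^*$, then discarding the nonnegative first term leaves a lower bound $\delta^2(\lambda_{n,1}^*)^2/(4nB^2)$, of exact order $(\lambda_{n,1}^*)^2/n$, which eventually exceeds the fluctuation bound from the previous paragraph --- a contradiction. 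If instead $\|\wt\lambda_{n,-1}-\lambda_{n,-1}^*\|_2=\eps\|\lambda_{n,-1}^*\|_2$, I would lower-bound $\E[(G(\wt\theta_n)-G(\theta_n^*))^2]$ by projecting, in $L^2(P)$, onto the linear span of the five coordinate functions of $v(\cdot,\cdot)$ and invoking the positive-definiteness $u^\top Au\ge\alpha\|u\|_2^2$, $A=\E[vv^\top]$, from \cref{lemma:strongly-cvx}, while absorbing the contribution of the curvature component through the Taylor estimate $\|\wt\rho_n-\rho_n^*\|_\infty\lesssim\lambda_{n,1}^*\ell^3$ of \eqref{Taylor-expand-rho} --- again contradicting the fluctuation bound. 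In either case $\P(\wh\theta_n\notin\Theta_n)\to0$, which is the assertion.

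I expect the main obstacle to be the $\lambda_{-1}$-active case. There the perturbation of $\wt\lambda_{n,-1}$ only moves $G$ by $O(\eps\|\lambda_{n,-1}^*\|_2)$ in the span of the coordinate functions of $v$, and this signal must be disentangled from the curvature part $\wt\rho_n-\rho_n^*$, whose $L^2(P)$-size is controlled a priori only by $\lambda_{n,1}^*\ell^3$; the separation is subtle precisely when $\|\lambda_{n,-1}^*\|_2$ is small relative to $\lambda_{n,1}^*\ell^3$. Pushing it through will require extracting more from the dual first-order conditions --- for instance a lower bound on $\|\lambda_{n,-1}^*\|_2$ complementary to \cref{lemma:one-to-minus-one}, or a preliminary crude-consistency step that first pins $\wt\lambda_{n,1}$ near $\lambda_{n,1}^*$ and thereby sharpens the control on $\|\wt\rho_n-\rho_n^*\|$ --- and this is where the bulk of the effort goes; the convexity reduction and the fluctuation bookkeeping of the first two paragraphs are routine.
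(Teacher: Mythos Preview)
Your outline—convex localization pitting strong convexity of $L_n$ against the uniform fluctuation bound—is exactly the paper's route, and your Case~1 (the $\lambda_1$-constraint active on $\partial\Theta_n$) goes through as you write it, using the first inequality of \cref{lemma:strongly-cvx} to drop the nonnegative $\E[(G-G^*)^2]$ term and retain the signal $\delta^2(\lambda_{n,1}^*)^2/(4nB^2)$, which beats the $o_p((\lambda_{n,1}^*)^2/n)$ fluctuation you correctly assemble from \cref{lemma:uniform-concentration}, \cref{lemma:one-to-minus-one}, \cref{lemma:minimax-bias-rate}, and $\ell=o(n^{-1/12})$.

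You are also right that Case~2 is the crux, and your diagnosis is accurate: with only $\|\wt\lambda_{n,-1}-\lambda_{n,-1}^*\|_2=\eps\|\lambda_{n,-1}^*\|_2$ active, the curvature piece $\wt\rho_n-\rho_n^*$ is controlled in $\Theta_n$ only through $|\wt\lambda_{n,1}|\le(1+\delta)\lambda_{n,1}^*$, so $\|\wt\rho_n-\rho_n^*\|_\infty\lesssim\lambda_{n,1}^*\ell^3$, and this is too coarse to separate it from the $(\wt\lambda_{n,-1}-\lambda_{n,-1}^*)^\top v$ signal of size $\eps\|\lambda_{n,-1}^*\|_2$. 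The paper's resolution, however, is neither of the two you propose (a lower bound on $\|\lambda_{n,-1}^*\|_2$, or a two-pass consistency argument). Instead, the paper runs the convex-localization argument on a \emph{smaller} auxiliary neighborhood
\[
\wt\Theta_n=\Bigl\{(\rho,\lambda)\in\Theta(\ell):\|\rho-\rho_n^*\|_{L^2(P)}^2+\|\lambda_{-1}-\lambda_{n,-1}^*\|_2^2\le\eps^2\|\lambda_{n,-1}^*\|_2^2,\ |\lambda_1-\lambda_{n,1}^*|\le\delta\lambda_{n,1}^*\Bigr\}\subseteq\Theta_n(\ell,\delta,\eps),
\]
in which the $\lambda_{-1}$-ball is replaced by a joint $L^2$-ball on $(\rho,\lambda_{-1})$. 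The maximal inequality of \cref{lemma:uniform-concentration} is stated over the larger $\Theta_n$ and therefore applies verbatim on $\wt\Theta_n$; the gain is that on $\wt\Theta_n$ the cross term $R_n$ of \cref{lemma:strongly-cvx} is now controlled by Cauchy--Schwarz as $R_n\le\frac{\beta}{4\wh\sigma^2}\eps^2\|\lambda_{n,-1}^*\|_2^2$ with $\beta=\max\{1,\lambda_{\max}(A)\}$, rather than by the cruder $\lambda_{n,1}^*\ell^3\cdot\eps\|\lambda_{n,-1}^*\|_2$ you obtain. Plugging in $\eps=O(n^{-1/4})$, $\|\lambda_{n,-1}^*\|_2=O(\lambda_{n,1}^*\ell^3+n^{-1})$ and $\ell=o(n^{-1/12})$ makes this $R_n$-bound $o((\lambda_{n,1}^*)^2/n)$, so it is absorbed by the $\lambda_1$-signal in the final $\Delta_n<0$ verification. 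Once $\wh\theta_n\in\wt\Theta_n$ is established, $\wt\Theta_n\subseteq\Theta_n(\ell,\delta,\eps)$ gives the claim. In short: the missing ingredient in your plan is not more information about $\lambda_{n,-1}^*$ or a bootstrap on $\lambda_1$, but shrinking the localization set so that $\rho$ is constrained in $L^2(P)$ at the same scale $\eps\|\lambda_{n,-1}^*\|_2$ as $\lambda_{-1}$.
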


\begin{proof}[Proof of~\cref{lemma:emp-dual-close-to-population-dual}]
It follows from \cref{lemma:uniform-concentration} that for any $n\ge 1$,
\begin{equation*}
\E\sup_{\theta\,\in\,\Theta_n(\ell,\delta,\eps)} |\wh{L}_n(\theta;\wh\sigma^2,\wh{B})-L_n(\theta;\wh\sigma^2,\wh{B})|\lesssim  (\lambda_{n,1}^*)^2\frac{\ell^6}{\sqrt{n}}+\|\lambda_{n,-1}^*\|_2^2\frac{\sqrt{\eps}}{\sqrt{n}}+\lambda_{n,1}^*\|\lambda_{n,-1}^*\|_2\frac{\ell^3}{\sqrt{n}},
\end{equation*}
 where  $\Theta_n(\ell,\delta,\eps):=\{ (\rho,\lambda)\in \Theta(\ell)\text{ such that } |\lambda_1-\lambda_{n,1}^*|\le \delta \lambda_{n,1}^* \text{ and } \|\lambda_{-1}-\lambda^*_{n,-1}\|_2\le \eps \|\lambda_{n,-1}^*\|_2\}$.
Invoking \cref{lemma:one-to-minus-one}, we  reduce the above to the following.
\begin{equation*}
\E\sup_{\theta\,\in\,\Theta_n(\ell,\delta,\eps)} |\wh{L}_n(\theta;\wh\sigma^2,\wh{B})-L_n(\theta;\wh\sigma^2,\wh{B})|\lesssim  (\lambda_{n,1}^*)^2\frac{\ell^6}{\sqrt{n}}+\lambda_{n,1}^*\frac{\ell^3}{n^{3/2}}+\frac{\sqrt{\eps}}{n^{5/2}}.
\end{equation*}
Fix $\eta\in (0,1)$. It follows from above that for some absolute constant $C=C(\eta)>0$, we have
\begin{equation}\label{recall-maximal-ineq}
\P\left(\sup_{\theta\,\in\,\Theta_n(\ell,\delta,\eps)} |\wh{L}_n(\theta;\wh\sigma^2,\wh{B})-L_n(\theta;\wh\sigma^2,\wh{B})|\le C \left((\lambda_{n,1}^*)^2\frac{\ell^6}{\sqrt{n}}+\lambda_{n,1}^*\frac{\ell^3}{n^{3/2}}+\frac{\sqrt{\eps}}{n^{5/2}}\right)\right)\ge 1-\frac{\eta}{2},
\end{equation}
On the other hand, \cref{lemma:one-to-minus-one} also tells us that
$$\sup_{\theta\,\in\,\Theta_n(\ell,\delta,\eps)} |{L}_n(\theta;\wh{\sigma}^2,\wh{B})-L_n(\theta;\wh\sigma^2, B)|\le C\frac{(\lambda_{n,1}^*)^2}{n}\left|\frac{1}{\wh{B}}-\frac{1}{B}\right|.
$$ Consequently,\begin{equation*}
\P\left(\sup_{\theta\,\in\,\Theta_n(\ell,\delta,\eps)} |L_n(\theta;\wh\sigma^2,\wh{B})-L_n(\theta;\wh\sigma^2,B)|\le \delta^2 \frac{(\lambda_{n,1}^*)^2}{16nB^2}\right)\ge 1-\frac{\eta}{2},
\end{equation*}
for all sufficiently large $n$. This, combined with \eqref{recall-maximal-ineq} implies that,
\begin{equation}\label{actual-maximal-ineq}
\P\left(\sup_{\theta\,\in\,\Theta_n(\ell,\delta,\eps)} |\wh{L}_n(\theta;\wh\sigma^2,\wh{B})-L_n(\theta;\wh\sigma^2,B)|\le C\left(  \frac{(\lambda_{n,1}^*)^2\ell^6}{\sqrt{n}}+\frac{\lambda_{n,1}^*\ell^3}{n^{3/2}}+\frac{\sqrt{\eps}}{n^{5/2}}\right)+ \frac{\delta^2(\lambda_{n,1}^*)^2}{16nB^2}\right)\ge 1-\eta,
\end{equation} for all sufficiently large $n$.
Next, we recall from \cref{lemma:strongly-cvx} that 
\begin{equation}\label{recall-strong-convexity}
     L_n(\theta;\wh\sigma^2,B)-L_n(\theta_n^*;\wh\sigma^2,B)
    \ge \frac{\alpha}{4\wh\sigma^2}\left(\|\rho-\rho_n^*\|^2_{L^2(P)}+\|\lambda_{-1}-\lambda_{n,-1}^*\|_2^2\right)+\frac{(\lambda_1-\lambda_{n,1}^*)^2}{4nB^2}-R_n,
\end{equation}
for some $\alpha\in (0,1)$, where $$R_n=\frac{2}{4\wh\sigma^2}\E|(\rho(X)-\rho_n^*(X))\langle \lambda_{-1}-\lambda_{n,-1}^*,v(X,W)\rangle|.$$ 
We define another neighborhood of $\theta_n^*=(\rho_n^*,\lambda_n^*)$, as follows: $$\wt{\Theta}_n:=\{(\rho,\lambda)\in\Theta(\ell):\|\rho-\rho_n^*\|_{L^2(P)}^2+\|\lambda_{-1}-\lambda_{n,-1}^*\|_2^2\le \eps^2\|\lambda_{n,-1}^*\|_2^2\text{ and }|\lambda_1-\lambda_{n,1}^*|\le  \delta\lambda_{n,1}^*\}.$$ For $(\rho,\lambda)\in \wt{\Theta}_n$, it follows from the Cauchy-Schwarz inequality that
\begin{equation}\label{bound-on-remainder}
    R_n\le \frac{1}{4\wh\sigma^2}\left(\|\rho-\rho_n^*\|_{L^2(P)}^2+\|\langle\lambda_{n,-1}-\lambda_{n,-1}^*,v(X,W)\rangle\|_2^2\right) \le \frac{\beta}{4\wh\sigma^2}\eps^2\|\lambda_{n,-1}^*\|_2^2,
\end{equation}where $\beta=\max\{1,\lambda_{\max}\}$, where $\lambda_{\max}$ is the largest eigenvalue of $A=\E[v(X,W)v(X,W)^\top]$.
It is immediate from definition that $\wt{\Theta}_n\subseteq \Theta_n(\ell,\delta,\eps)$, and that for every $\theta$ on the boundary $\del \wt{\Theta}_n$,  $$\|\rho-\rho_n^*\|_{L^2(P)}^2+\|\lambda_{-1}-\lambda_{n,-1}^*\|_2^2= \eps^2\|\lambda_{n,-1}^*\|_2^2,\quad \text{and}\quad |\lambda_1-\lambda_{n,1}^*|= \delta\lambda_{n,1}^*.$$ We can now combine \eqref{actual-maximal-ineq}, \eqref{recall-strong-convexity} and \eqref{bound-on-remainder} to conclude that, for any $\theta\in\del\wt{\Theta}_n$, 
\begin{align*}
      &\frac{\alpha}{4\wh\sigma^2}\eps^2\|\lambda_{n,-1}^*\|_2^2+\frac{\delta^2(\lambda_{n,1}^*)^2}{4nB^2}-\frac{\beta}{4\wh\sigma^2}\eps^2\|\lambda_{n,-1}^*\|_2^2\\
      &\le L_n(\theta;\wh\sigma^2,B)-L_n(\theta_n^*;\wh\sigma^2,B)\tag{using \eqref{recall-strong-convexity} and \eqref{bound-on-remainder}}\\
      &\le \wh{L}_n(\theta;\wh\sigma^2,\wh{B})-\wh{L}_n(\theta_n^*;\wh\sigma^2,\wh{B}) + 2\sup_{\theta\in \Theta_n(\ell,\delta,\eps)}|\wh{L}_n(\theta;\wh\sigma^2,\wh{B})-L_n(\theta;\wh\sigma^2,B)|\label{master-calculation}\numberthis\\
      &\le \wh{L}_n(\theta;\wh\sigma^2,\wh{B})-\wh{L}_n(\theta_n^*;\wh\sigma^2,\wh{B}) +C \left((\lambda_{n,1}^*)^2\frac{\ell^6}{\sqrt{n}}+\lambda_{n,1}^*\frac{\ell^3}{n^{3/2}}+\frac{\sqrt{\eps}}{n^{5/2}}\right)+\delta^2 \frac{(\lambda_{n,1}^*)^2}{8nB^2},\tag{using~\eqref{actual-maximal-ineq}}
\end{align*}
with probability at least $1-\eta$, for any $n\ge 1$. We can rewrite the above inequality as $\wh{L}_n(\theta_n^*;\wh\sigma^2,\wh{B})\le \wh{L}_n(\theta;\wh\sigma^2,\wh{B})+\Delta_n$ and we show below that $\Delta_n<0$ for all sufficiently large $n$. This implies, by taking infimum over $\theta\in\del\wt{\Theta}_n$, that $$\P\left(\wh{L}_n(\theta_n^*;\wh\sigma^2,\wh{B})\le \inf_{\theta\in \partial \wt{\Theta}_n}\wh{L}_n(\theta;\wh\sigma^2,\wh{B})\right)\ge 1-\eta,$$
for all sufficiently large $n$. Finally, note that since the empirical loss function $\wh{L}_n(\theta;\wh\sigma^2,\wh{B})$ is convex in $\theta$, the above event implies that $\wh{\theta}_n=\argmin_\theta \wh{L}_n(\theta;\wh\sigma^2,\wh{B})$ must be inside $\wt{\Theta}_n$. Thus, $$\P\left(\widehat{\theta}_n\in \wt{\Theta}_n\right)\ge 1-\eta,$$
for all sufficiently large $n$, as desired to show.  

To complete the proof, we need to show that $\Delta_n<0$ for all sufficiently large $n$.
Observe that, 
\begin{align*}
   \Delta_n&=C \left((\lambda_{n,1}^*)^2\frac{\ell^6}{\sqrt{n}}+\lambda_{n,1}^*\frac{\ell^3}{n^{3/2}}+\frac{\sqrt{\eps}}{n^{5/2}}\right)+\delta^2 \frac{(\lambda_{n,1}^*)^2}{8nB^2}-\left(\frac{\delta^2(\lambda_{n,1}^*)^2}{4nB^2}-\frac{\beta-\alpha}{4\wh\sigma^2}\eps^2\|\lambda_{n,-1}^*\|_2^2\right) \\
   &\le C \left((\lambda_{n,1}^*)^2\frac{\ell^6}{\sqrt{n}}+\lambda_{n,1}^*\frac{\ell^3}{n^{3/2}}+\frac{\sqrt{\eps}}{n^{5/2}}\right)-\delta^2 \frac{(\lambda_{n,1}^*)^2}{8nB^2}+c_0\eps^2 \left(\lambda_{n,1}^*\ell^3+\frac{1}{n}\right)^2 \\
   &= \frac{(\lambda_{n,1}^*)^2}{n}\left(C\ell^6\sqrt{n} + c_0 \eps^2 \ell^6 n -\frac{\delta^2}{8B^2}+\frac{C\ell^3}{\lambda_{n,1}^*\sqrt{n}} + 2c_0 \eps^2 \frac{\ell^3}{\lambda_{n,1}^*} +\frac{C\sqrt{\eps}}{n^{3/2}(\lambda_{n,1}^*)^2}+\frac{c_0\eps^2}{n(\lambda_{n,1}^*)^2}\right).
\end{align*}
We show in \cref{lemma:minimax-bias-rate} that $\lambda_{n,1}^*=O(n^{-3/7})$. In light of this, we can write
$$\limsup_{n\to\infty} \frac{\Delta_n}{(\lambda_{n,1}^*)^2/n}\le \lim_{n\to\infty} \left(C\ell^6\sqrt{n} + c_0 \eps^2 \ell^6 n -\frac{\delta^2}{8B^2}\right)\le - \frac{\delta^2}{8B^2}<0,$$
since $\ell^6\sqrt{n}=o(1)$ and $\eps^2 \ell^6n = \eps^2 \sqrt{n} \cdot \ell^6\sqrt{n} =o(1)$ as well. This completes the proof.\end{proof}

\begin{lemma}\label{lemma:minimax-bias-rate}
   Under the conditions of \cref{thm:validity}, $$\wh\lambda_{n,1}(\wh\sigma^2, \wh{B})=\O(n^{-3/7}),\quad  \lambda_{n,1}^*(\wh\sigma^2, {B})=\O(n^{-3/7}),\quad\text{and}\quad \sum_{i=1}^n\wh\g_i^2=\O(n^{-6/7})$$
\end{lemma}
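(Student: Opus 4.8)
The plan is to bound the two primal quantities attached to the minimax optimizer---the worst‑case (scaled) bias $\wh{t}_n$ and the variance weight $\|\wh\g_n\|_2^2$---by exhibiting an explicit feasible competitor in the primal problem \eqref{eqn:main-cvx-prob}, and then to read off all three claims from the primal--dual recovery formulas $\wh{t}_n=\wh{\lambda}_{n,1}/(2\wh{B}^2)$ and $\wh\g_{n,i}=-G(\wh\theta_n;X_i-c,W_i)/(2\wh\sigma_i^2)$. Concretely, I would fix the bandwidth $h=h_n:=n^{-1/7}$; by \eqref{eq:ell_decay} we have $h_n=o(\ell)$, so every observation with $|X_i-c|\le h_n$ survives step~1 of \cref{algo:PLRD}. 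Let $S_n^{\pm}:=\tfrac1{nh}\sum_{j:\,W_j=\frac{1\pm1}2,\ |X_j-c|\le h} r\!\left(\tfrac{X_j-c}{h}\right)r\!\left(\tfrac{X_j-c}{h}\right)^{\!\top}$ with $r(u)=(1,u,u^2)^{\top}$, and let $a_i^{\pm}:=\tfrac1{nh}e_1^{\top}(S_n^{\pm})^{-1}r\!\left(\tfrac{X_i-c}{h}\right)$ be the degree‑$2$ local‑polynomial intercept weights on the treated / control side. Setting $\g^0_i:=W_i a_i^+-(1-W_i)a_i^-$ and $t^0:=\sup_{\rho\in\rhos}\sum_i\g^0_i\rho(X_i-c)$, the pair $(\g^0,t^0)$ is feasible for \eqref{eqn:main-cvx-prob}: the five linear equality constraints are exactly the moment identities $\sum a_i^{\pm}=1$, $\sum a_i^{\pm}(X_i-c)=0$, $\sum a_i^{\pm}(X_i-c)^2=0$ characterising these weights (which also force $\sum\g^0_iW_i(X_i-c)=0$).

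Next I would bound the PLRD objective at $(\g^0,t^0)$. Since the running variable has a density bounded away from $0$ near $c$ and there are $\asymp nh$ observations within $h$ on each side, $S_n^{\pm}\Pto S_\star\succ0$, so $\max_i|\g^0_i|=\O(1/(nh))$, $\sum_i|\g^0_i|=\O(1)$ and $\|\g^0\|_2^2=\O(1/(nh))=\O(n^{-6/7})$. For the bias term, every $\rho\in\rhos$ satisfies the pointwise bound $|\rho(x)|=\big|\int_0^x(x-s)\rho''(s)\,ds\big|\le|x|^3/6$ (using $\rho(0)=\rho'(0)=\rho''(0)=0$ and that $\rho''$ is $1$‑Lipschitz), hence $t^0\le\tfrac16\sum_i|\g^0_i|\,|X_i-c|^3\le\tfrac{h^3}{6}\sum_i|\g^0_i|=\O(h^3)=\O(n^{-3/7})$. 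Using $\wh B\Pto B$ and $\wh\sigma^2\Pto\sigma^2\in(0,\infty)$, the feasible objective value is therefore $\wh{B}^2(t^0)^2+\wh\sigma^2\|\g^0\|_2^2=\O(n^{-6/7})$.

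Now I would conclude. Because $(\wh\g_n,\wh t_n)$ minimises $\wh{B}^2 t^2+\wh\sigma^2\|\g\|_2^2$ over feasible pairs, $\wh{B}^2\wh t_n^2+\wh\sigma^2\|\wh\g_n\|_2^2\le\wh{B}^2(t^0)^2+\wh\sigma^2\|\g^0\|_2^2=\O(n^{-6/7})$; since both summands are non‑negative, $\wh B\ge\eps$ and $\wh\sigma^2$ is bounded below in probability, this yields $\wh t_n=\O(n^{-3/7})$ and $\|\wh\g_n\|_2^2=\O(n^{-6/7})$, whence $\wh\lambda_{n,1}=2\wh{B}^2\wh t_n=\O(n^{-3/7})$. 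For the cross‑fitted weights of \cref{algo:PLRD} we have $\wh\g_i=\wh\g_{(k),i}/2$ for $i\in I_k$, so $\sum_i\wh\g_i^2=\tfrac14(\|\wh\g_{(1)}\|_2^2+\|\wh\g_{(2)}\|_2^2)$, and the same argument applied within each fold (of size $\asymp n/2$, with $\wh B_{(k')},\wh\sigma^2_{(k')}$ independent of the fold on which the optimisation is run) gives $\sum_i\wh\g_i^2=\O(n^{-6/7})$. Finally, $\lambda_{n,1}^*=\O(n^{-3/7})$ follows identically: the population primal $\min\{B^2 t^2+n\wh\sigma^2\,\E[\g(X,W)^2]\}$ over feasible $(\g,t)$ admits the competitor $\g^0(x,w)=\tfrac1{nh}\big(w\,\kappa^+(\tfrac{x-c}{h})-(1-w)\kappa^-(\tfrac{x-c}{h})\big)$ built from the boundary equivalent kernels $\kappa^{\pm}$, for which $n\wh\sigma^2\E[(\g^0)^2]=\O(1/(nh))$ and $\sup_{\rho}n\E[\g^0(X,W)\rho(X)]=\O(h^3)$, and $\lambda_{n,1}^*=2B^2t^*$ by the same recovery formula together with strong duality (Slater's condition holds by taking $t$ large).

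The main obstacle is the uniform control of the local‑polynomial weights, i.e.\ $\|\g^0\|_2^2=\O(1/(nh))$ and $\sum_i|\g^0_i|=\O(1)$, which reduces to the concentration and invertibility of the boundary Gram matrices $S_n^{\pm}$---standard local‑polynomial theory (cf.\ \citet[Theorem~3.1]{fan1996framework} or a direct uniform law of large numbers) but one that must be invoked carefully, together with the observation that $h_n=n^{-1/7}$ lies inside the retained window $|X_i-c|\le\ell$ by \eqref{eq:ell_decay}. A secondary point needing care is the precise formulation of the ``population'' primal (which, as flagged in the text, carries an explicit factor of $n$) and the legitimacy of strong duality there; if one prefers, this can be bypassed by deducing $\lambda_{n,1}^*=\O(n^{-3/7})$ directly from the stationarity conditions \eqref{grad-lambda-minus}--\eqref{grad-rho} compared against the same feasible construction.
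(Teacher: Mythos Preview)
Your proposal is correct and follows essentially the same route as the paper: exhibit an explicit local-quadratic competitor with bandwidth $h_n\asymp n^{-1/7}$, bound its worst-case bias via $|\rho(x)|\le|x|^3/6$ and its variance via $\|\g\|_2^2=\O(1/(nh))$, and then invoke optimality of $(\wh\g_n,\wh t_n)$ to transfer the $\O(n^{-6/7})$ objective bound to each summand. The paper's own proof is terser---it writes the competitor as a single five-parameter regression $(\alpha,\tau,\beta_0 X^{-},\beta_1 X^{+},\beta_2 X^2)$ with a shared quadratic term (so exactly matching the five PLRD constraints) and defers the bias/variance bounds to \citet[Proposition~8.1]{wager2024causal}---whereas you run a separate degree-$2$ local polynomial on each side (which over-satisfies the $\sum\g_i(X_i-c)^2=0$ constraint but is still feasible) and spell out the Gram-matrix argument, the $h_n=o(\ell)$ check, and the cross-fitting bookkeeping. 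These are cosmetic differences; the underlying idea is the same.
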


\begin{proof}[Proof of \cref{lemma:minimax-bias-rate}]
    Recall that $\wh\lambda_{n,1}(\wh\sigma^2, \wh{B}) = 2\wh{B}^2 \wh{t}_n$, where $\wh{B}^2\wh{t}_n^{\,2}$ is bounded above by the worst-case MSE of any linear estimator of the form $\sum_{i=1}^n \g_i Y_i$. In particular, there exists a kernel regression estimator that has worst-case MSE (supremum taken over our function class $\mathcal{M}_{\wh{B}}$) of the order of $\O(n^{-6/7})$. To make it precise, consider the kernel regression estimator
    $$\left(\wh\tau(h_n), \wh\alpha, \wh\beta \right)=\argmin_{\tau, \alpha, \beta_1,\beta_0 \in \mathbb{R}}\ \sum_{i=1}^n K\left(\frac{X_i }{h_n}\right)\left(Y_i-\alpha-\tau W_i-\beta_0 X_i^{-}-\beta_1 X_i^{+}-\beta_2 X_i^2 \right)^2.$$ For simplicity, we use the triangular kernel $K(x)=1\{|x|\le 1\}$.
    It follows along the lines of the proof of~\citet[Proposition 8.1]{wager2024causal} that $\wh\tau(h_n)$ has worst-case squared curvature bias bounded above by $\wh{B}h_n^6$, and sampling noise of the order of $1/nh_n$. With $h_n=O(n^{-1/7})$, this estimator has MSE of the order of $n^{-6/7}$. Consequently, $\wh\lambda_{n,1}=\O(n^{-3/7})$ and $\|\wh\g_n\|_2^2=\O(n^{-6/7})$.
    The proof for $\lambda_{n,1}^*$ is analogous. 
\end{proof}

\begin{lemma}\label{lemma:bias-ok}
Under the conditions of \cref{thm:main-CLT}, it holds that $\wh{b}/{s}(\wh\g)=\O(1)$ as $n\to\infty$, where ${s}^2(\wh\g)=\sum_{i=1}^n\wh\g_i^2\sigma_i^2$, with $\sigma_i^2=\var(Y_i\mid X_i)$.
\end{lemma}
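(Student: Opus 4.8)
The plan is to establish that, at the solution of the quadratic program \eqref{eq:point_est}, the worst-case squared bias and the variance proxy are of the same order; since $s^2(\wh\g)=\sum_i\wh\g_i^2\sigma_i^2$ lies between $\sigma_{\min}^2\norm{\wh\g}_2^2$ and $\bar\sigma^2\norm{\wh\g}_2^2$, where $\bar\sigma^2:=C^{2/q}$ bounds each $\sigma_i^2$ by Jensen and the $q$-th moment assumption, this yields $\wh{b}/s(\wh\g)=\O(1)$. First I would dispose of cross-fitting. Writing $s_{(k)}^2:=\sum_{i\in I_k}\wh\g_{(k),i}^2\sigma_i^2$ and recalling $\wh\g_i=\wh\g_{(k),i}/2$ on $I_k$, we have $s^2(\wh\g)=\tfrac14(s_{(1)}^2+s_{(2)}^2)$ and $\wh{b}=\tfrac12(\wh{B}_{(2)}\wh{t}_{(1)}+\wh{B}_{(1)}\wh{t}_{(2)})$, so it is enough to prove $\wh{B}_{(2)}\wh{t}_{(1)}=\O(s_{(1)})$ and, by symmetry, $\wh{B}_{(1)}\wh{t}_{(2)}=\O(s_{(2)})$; then $\wh{b}/s(\wh\g)\le\sqrt2\,\max\{\wh{B}_{(2)}\wh{t}_{(1)}/s_{(1)},\ \wh{B}_{(1)}\wh{t}_{(2)}/s_{(2)}\}=\O(1)$.

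Fix a fold and write $B=\wh{B}_{(2)}$, $\gamma=\wh\g_{(1)}$, $t=\wh{t}_{(1)}$; by \cref{prop:B} and the construction in \cref{algo:PLRD}, $B$ is bounded and bounded below by $\eps>0$, and likewise for $\wh\sigma^2_{(2)}$, so since $s_{(1)}^2\ge\sigma_{\min}^2\norm{\gamma}_2^2$ the goal reduces to $B^2t^2=\O(\norm{\gamma}_2^2)$. The \emph{upper} half is \cref{lemma:minimax-bias-rate}: comparison against a kernel estimator of bandwidth $\asymp n^{-1/7}$ gives $\norm{\gamma}_2^2=\O(n^{-6/7})$ and $Bt=\O(n^{-3/7})$. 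The crux is the matching \emph{lower} bound $\norm{\gamma}_2^2\gtrsim n^{-6/7}$, which I would obtain by pitting the nonparametric minimax lower bound against this controlled bias. Conditioning on the running variables (and on $B,\wh\sigma^2_{(2)}$), so that $\gamma$ is a fixed weight vector, view $\wh\tau^{(1)}=\sum_{i\in I_1}\gamma_iY_i$ as a linear estimator over the subclass $\cal{M}_{b_0}$ of pairs $(\mu_0,\mu_1)$ with $\tau$ linear and $\mu_0''$ globally $b_0$-Lipschitz, for a small fixed $b_0>0$. Its worst-case conditional mean squared error over $\cal{M}_{b_0}$ is $b_0^2\,t(\gamma)^2+\bar\sigma^2\norm{\gamma}_2^2$, and this is at least the minimax risk over $\cal{M}_{b_0}$; a two-point argument that perturbs $\mu_1$ near $c$ by $\pm\tfrac{b_0}{6}h^3\phi((x-c)/h)$ with a fixed bump $\phi$ and $h\asymp(nb_0^2)^{-1/7}$ (this keeps $\tau$ linear and separates $\tau_c$ by order $b_0^{1/7}n^{-3/7}$) shows that this minimax risk is $\ge c_\ast b_0^{2/7}n^{-6/7}$ for an absolute $c_\ast>0$. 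Since $t(\gamma)\le t$ and $Bt=\O(n^{-3/7})$ with $B$ bounded below, the bias term is $\O(b_0^2n^{-6/7})$; choosing $b_0$ small enough that it is at most $\tfrac12c_\ast b_0^{2/7}n^{-6/7}$ — possible because the implicit constant multiplying $b_0^2n^{-6/7}$ is fixed in $n$ — forces $\norm{\gamma}_2^2\gtrsim b_0^{2/7}n^{-6/7}$ with probability tending to one. Combining the two halves gives $B^2t^2=\O(\norm{\gamma}_2^2)$, and hence the lemma.

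The hard part is precisely this lower bound on $\norm{\wh\g}_2^2$ (equivalently on $s(\wh\g)$): it cannot come from the optimization alone, since a badly tuned linear estimator may have tiny variance and large bias, so one genuinely needs a nonparametric minimax lower bound, and the argument must be arranged so that (i) the two hypotheses respect \cref{assu:part_lin}, which is why the perturbation is placed in $\mu_1$ near the cutoff rather than in $\tau$; (ii) the hard instances — the perturbed $\mu_1$ together with a fixed mean-zero noise of variance $\sigma_{\min}^2$ whose $q$-th moment is $\le C$, e.g. a smoothed Rademacher variable — lie inside the model of \cref{thm:validity}; and (iii) $b_0$ is fixed only after the $\O$-bound on $B\wh{t}_{(1)}$ has been secured, so that the minimax lower bound strictly dominates the residual bias. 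The cross-fitting bookkeeping, the boundedness of $B$, $\wh\sigma^2_{(2)}$ and the $\sigma_i^2$, and the single-fold reduction are all routine.
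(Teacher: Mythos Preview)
Your overall strategy is sound and is a legitimate alternative to the paper's argument. Both proofs hinge on the minimax lower bound of order $n^{-6/7}$ for $\tau_c$ and then separate the bias contribution from it so that the variance term must carry the lower bound. The paper does this by first passing to the \emph{population} dual (via consistency $\wh{t}_n/t_n^*\to_p 1$, $\wh{V}_n/V_n^*\to_p 1$) and then rescaling the \emph{sample size}: assuming for contradiction that $V_n^*\le_p c\,n^{-6/7}$ for arbitrarily small $c$, it evaluates $R_m(\g_n^*)=\wh{B}^2(t_n^*)^2+kV_n^*$ at $m=n/k$, chooses $k$ large and then $c$ small to force $R_m(\g_m^*)\le R_m(\g_n^*)<a_-m^{-6/7}$, contradicting the minimax bound at sample size $m$. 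You instead stay with the empirical weights conditional on the design and rescale the \emph{smoothness constant} $b_0$: since the worst-case squared bias over $\cal{M}_{b_0}$ is $b_0^2t^2=\O(b_0^2n^{-6/7})$ while the conditional minimax lower bound scales like $b_0^{2/7}n^{-6/7}$, small $b_0$ makes the bias negligible and forces $\norm{\gamma}_2^2\gtrsim_p n^{-6/7}$. Your route avoids the population-level reduction, at the cost of having to verify the conditional-on-design lower bound yourself.

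There is, however, a genuine gap in your two-point construction. Perturbing only $\mu_1$ by a bump $\pm\tfrac{b_0}{6}h^3\phi((x-c)/h)$ does \emph{not} keep $\tau=\mu_1-\mu_0$ linear: with $\mu_0$ held fixed, $\tau$ inherits the bump and the perturbed pair leaves the partially linear class, so your parenthetical ``this keeps $\tau$ linear'' is false as stated. A construction that does respect \cref{assu:part_lin} is to perturb $\mu_0$ by a function $g$ with $g\equiv 0$ on $(-\infty,c-h]$, $g(c)=\delta\asymp b_0h^3$, $g''(c)=0$, $g''$ $b_0$-Lipschitz on $[c-h,c]$, and $g$ extended \emph{linearly} to the right of $c$; then set $\tau_c'=\tau_c-\delta$ and $\beta'=\beta-g'(c)$. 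Under this alternative the observable means on $(c,\infty)$ are unchanged (the linear part of $g$ is absorbed into $\tau'$), the perturbation is felt only through $\mu_0$ on $[c-h,c]$, and both hypotheses lie in $\cal{M}_{b_0}$ with $|\tau_c-\tau_c'|\asymp b_0^{1/7}n^{-3/7}$ when $h\asymp(nb_0^2)^{-1/7}$. With this correction (and a routine high-probability control of $\#\{i:X_i\in[c-h,c]\}$ to make the conditional KL bound go through), your argument is complete.
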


\begin{proof}[Proof of \cref{lemma:bias-ok}] Assume $c=0$ without loss of generality. First, we can replace ${s}^2(\wh\g)=\sum_{i=1}^n \wh\g_i^2\sigma_i^2$ with $\wh{V}_n:=\wh\sigma^2\sum_{i=1}^n \wh\g_i^2$, using the fact that $0<\sigma_{\min}\le \sigma_i\le \sigma_{\max}<\infty$. Recall that, ignoring cross-fitting notation as mentioned in the beginning of this section, we define
$$(\wh{\g}_n,\wh{t}_n):=\argmin_{\g, \, t} \left(\wh{B}^2 t^2 +\wh{\sigma}^2\sum_{i=1}^n \wh\g_i^2\right)\quad\text{subject to}\quad \sup_{\rho\in \rhos}\sum_{i=1}^n \g_i\rho(X_i)\le t,$$ where $\wh{B}$ and $\wh{\sigma}^2$ are fit externally, independent of the data $\{X_i,Y_i\}_{i=1}^n$. The worst-case bias is given by
$\wh{b}=\wh{B}\wh{t}_n$, and thus the goal here is to show that $\wh{t}_n^{\,2}/\wh{V}_n  = \O(1)$. Recall the population-level problem with externally fit nuisance parameters: $$(\g^*_n, t_n^*):=\argmin_{\g,t} \left\{\wh{B}^2 t^2 + \frac{1}{n}\wh{\sigma}^2\E[\g^2(X)]\right\} \quad\text{subject to}\quad\sup_{\rho\in\rhos}\E [\g(X)\rho(X)]\le t.$$
It suffices to prove that $t_n^*/\sqrt{V_n^*}=\O(1)$, where $V_n^* = \wh{\sigma}^2 \E [\g_n^{*,2}(X)]/n$, since we have consistency (i.e., that $\wh{t}_n/t^*_n\Pto 1$ and $\wh{V}_n/V_n^*\Pto 1$).
It follows from \cref{lemma:minimax-bias-rate} that  
$\wh{B}^2t_n^{\,*,\,2}+V_n^*\le_p a_+ n^{-6/7}$ for some $a_+\in (0,\infty)$, for all sufficiently large $n$. On the other hand, it follows from the minimax lower bound (see, {\it e.g.},~\citet{Cheng-Fan-Marron}) that $\wh{B}^2 t_n^{\,*,\,2}+V_n^*\ge_p a_{-} n^{-6/7}$ for some $0<a_{-}\le a_{+}$ for all sufficiently large $n$. Therefore, to claim that $t_n^*/\sqrt{V_n^*}=\O(1)$, it suffices to show that there exists some $c>0$ such that $V_n^*\ge_p c n^{-6/7}$ for all sufficiently large $n$.

To prove this by contradiction, assume that for any $c>0$, we have $V_n^*\le_p c n^{-6/7}$ infinitely often. We  examine the MSE of $\g_n^*$ for the objective $$R_m(\g):=B^2\left(\sup_{\rho\in\rhos}\E \g(X)\rho(X)\right)^2+\frac{1}{m}\E[\g^2(X)],$$ where $m=n/k$ (assume without loss of generality that $n$ is divisible by $k$). Note that\begin{align*}
R_m(\g_n^*)&=\wh{B}^2\left(\sup_{\rho\in\rhos}\E \g_n^*(X)\rho(X)\right)^2+\frac{1}{m}\E[\g^{*,2}_n(X)]\\
&= \wh{B}^2(t_n^*)^2+k V_n^*\le_p (a_+ + k c)k^{-6/7} m^{-6/7}.
\end{align*}
Choose $k$ large enough so that $a_+k^{-6/7}\le a_{-}/2$, and then choose $c>0$ small enough such that $ck^{1/7}<a_{-}/2$. This, along with the above display, implies that $R_m(\g_m^*)\le R_m(\g_n^*)\le_p a_{-} m^{-6/7}$, which contradicts the minimax lower bound (we choose $n$ is large enough so that $m=n/k$ satisfies the
minimax lower bound $\wh{B}^2 t_m^{\,*,\,2}+V_m^*\ge_p a_{-} m^{-6/7}$). This contradiction completes the proof.
\end{proof}

\begin{lemma}\label{lemma:Gaussian-CDF}
    Suppose that $Z\sim N(0,1)$. Then, for any $s$, $h$, $t$ and $\delta>0$, we have $$\inf_{|b|\,\le\, t+\delta s}\P(|b+s Z|\le h) \ge \inf_{|b|\,\le\, t}\P(|b+s Z|\le h) -\delta.$$
\end{lemma}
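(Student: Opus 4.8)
To prove \cref{lemma:Gaussian-CDF}, the plan is to establish that $b \mapsto g(b) := \P(|b + sZ| \le h)$ is Lipschitz in $b$ with a constant small enough that enlarging the ball $\{|b| \le t\}$ to $\{|b| \le t + \delta s\}$ changes the infimum of $g$ by at most $\delta$. First I would clear away trivialities: if $s = 0$, then $\inf_{|b| \le t + \delta s} g(b) = \inf_{|b| \le t} g(b)$ and the inequality is immediate; since $b + sZ$ and $b + |s|Z$ have the same law for each fixed $b$, I may assume $s > 0$; and if $h < 0$ then $g \equiv 0$, so there is nothing to prove and I may assume $h \ge 0$.

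The core estimate is the Lipschitz bound. Since $b + sZ \sim \normal(b, s^2)$, its density is at most $1/(s\sqrt{2\pi})$, so the distribution function $F$ of $sZ$ obeys $F(y_2) - F(y_1) \le (y_2 - y_1)/(s\sqrt{2\pi})$ whenever $y_1 \le y_2$. Writing $g(b) = \P(-h - b \le sZ \le h - b) = F(h - b) - F(-h - b)$ and bounding the two increments separately via the triangle inequality, I obtain $|g(b) - g(b')| \le 2|b - b'|/(s\sqrt{2\pi})$ for all $b, b' \in \R$.

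To conclude, fix any $b$ with $|b| \le t + \delta s$ and let $b^\circ$ denote its projection onto $[-t, t]$, so that $|b - b^\circ| = \max\{|b| - t, 0\} \le \delta s$. The Lipschitz bound then gives $g(b) \ge g(b^\circ) - 2\delta s/(s\sqrt{2\pi}) = g(b^\circ) - \delta\sqrt{2/\pi} \ge \inf_{|b'| \le t} g(b') - \delta$, where the last step uses $b^\circ \in [-t, t]$ together with $\sqrt{2/\pi} < 1$. Taking the infimum over all $b$ with $|b| \le t + \delta s$ then yields the claimed inequality.

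I do not expect a genuine obstacle here; the one point worth a second look is the numerical constant. The crude argument, which estimates the two distribution-function increments separately, produces the factor $\sqrt{2/\pi} \approx 0.80$, which sits just below $1$ and is therefore exactly what is needed. If it did not, one could instead observe that $g$ is symmetric and non-increasing in $|b|$ (so the infimum over each ball is attained at the boundary) and bound a single increment, yielding the sharper constant $1/\sqrt{2\pi}$; but this refinement is unnecessary.
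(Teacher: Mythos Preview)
Your proof is correct and follows essentially the same route as the paper: both arguments show that $g(b)=\P(|b+sZ|\le h)$ is $(1/s)$-Lipschitz (the paper by bounding $\|g'\|_\infty\le 2\phi(0)/s$, you by bounding CDF increments via the density bound $1/(s\sqrt{2\pi})$, which yields the identical constant $\sqrt{2/\pi}/s<1/s$), and then project any $b$ with $|b|\le t+\delta s$ onto $[-t,t]$ to conclude. Your handling of the degenerate cases $s=0$ and $h<0$ is a small extra bit of care that the paper omits.
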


\begin{proof}[Proof of \cref{lemma:Gaussian-CDF}]
    Fix any $s$, $h$, $t$ and $\delta>0$. Denote $$g(b)=\P(|b+s Z|\le h)=\Phi\left(\frac{h-b}{s}\right)-\Phi\left(\frac{-h-b}{s}\right).$$
    Note that, $$g'(b)=-\frac{1}{s}\left(\phi\left(\frac{h-b}{s}\right)-\phi\left(\frac{-h-b}{s}\right)\right)\implies \|g'\|_\infty \le \frac{2\phi(0)}{s}\le \frac{1}{s}.$$
    Next, fix any $b_1$ with $|b_1|\le t+\delta s$. There exists $b_0$ such that $|b_0|\le t$ and $|b_1-b_0|\le \delta s$. Consequently, $|g(b_1)-g(b_0)|\le \|g'\|_\infty |b_1-b_0|\le \delta$. This implies that $g(b_1)\ge g(b_0)-\delta\ge \inf_{|b|\le t} g(b) -\delta$. Taking infinimum  over $b_1$, we get the desired conclusion.
\end{proof}

\begin{lemma}\label{lemma:ehw}
    Under the conditions of \cref{thm:main-CLT}, it holds that $\wh{s}(\wh\g)/{s}(\wh\g)\Pto 1$ as $n\to\infty$, where ${s}^2(\wh\g)=\sum_{i=1}^n\wh\g_i^2\sigma_i^2$ with $\sigma_i^2=\var(Y_i\mid X_i)$, and $\wh{s}^2(\wh\g)=\sum_{i=1}^n \wh\g_i^2\wh\sigma_i^2$ with $\wh\sigma_i^2=(Y_i - \wh\mu_{W_i}(X_i))^2$ as defined in \cref{algo:PLRD}.
\end{lemma}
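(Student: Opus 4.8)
The plan is to split the estimated errors into a ``true error'' piece and a ``regression error'' piece and control each. Write $\eps_i:=Y_i-\mu_{W_i}(X_i)$ for the true regression errors, so that $\E[\eps_i\mid X_i]=0$, $\var(\eps_i\mid X_i)=\sigma_i^2$ and $\E[|\eps_i|^q\mid X_i]\le C$ with $q>2$; in particular $\sigma_{\min}\le\sigma_i\le\sigma_{\max}:=C^{1/q}$ by Jensen. Let $\Delta_i:=\mu_{W_i}(X_i)-\wh\mu_{W_i}(X_i)$ be the error of the fold-wise linear fit defining the residuals $r_i=\eps_i+\Delta_i$ in step 5 of \cref{algo:PLRD}, so that $\wh\sigma_i^2=r_i^2=\eps_i^2+2\eps_i\Delta_i+\Delta_i^2$. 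Then
\[
\wh s^2(\wh\g)-s^2(\wh\g)=\Big(\textstyle\sum_i\wh\g_i^2\eps_i^2-s^2(\wh\g)\Big)+2\sum_i\wh\g_i^2\eps_i\Delta_i+\sum_i\wh\g_i^2\Delta_i^2 .
\]
Since $\sigma_{\min}^2\|\wh\g\|_2^2\le s^2(\wh\g)\le\sigma_{\max}^2\|\wh\g\|_2^2$, it suffices to show each of the three terms is $o_p(\|\wh\g\|_2^2)$; and by Cauchy--Schwarz the cross term is at most $2(\sum_i\wh\g_i^2\eps_i^2)^{1/2}(\sum_i\wh\g_i^2\Delta_i^2)^{1/2}$, so it follows once the first and third are handled. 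Then $\wh s^2(\wh\g)/s^2(\wh\g)\Pto1$, and the continuous mapping theorem gives $\wh s(\wh\g)/s(\wh\g)\Pto1$.

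For $\sum_i\wh\g_i^2\eps_i^2$ I would run a weighted weak law of large numbers, \emph{fold by fold}, so the data-dependent weights stay independent of the errors they multiply. Condition on $\mathcal F_k:=\sigma(\{X_i\}_{i=1}^n)\vee\sigma(\{Y_j\}_{j\in I_{3-k}})$: by the cross-fitting structure of \cref{algo:PLRD} the weights $\{\wh\g_i\}_{i\in I_k}$ are $\mathcal F_k$-measurable (they solve the fold-$k$ program with externally supplied $\wh B_{(3-k)},\wh\sigma^2_{(3-k)}$), while $\{\eps_i\}_{i\in I_k}$ are still conditionally independent with the stated moments. Put $w_i:=\wh\g_i^2\sigma_i^2/\sum_{j\in I_k}\wh\g_j^2\sigma_j^2$; then $\sum_{i\in I_k}w_i=1$ and $\max_{i\in I_k}w_i\lesssim\|\wh\g_{(k)}\|_\infty^2/\|\wh\g_{(k)}\|_2^2\Pto0$ by \cref{coro:lindeberg}. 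Because only $q>2$ moments are available I would truncate $\xi_i:=\eps_i^2/\sigma_i^2$ at level $M$: the centered truncated part has conditional variance $\le M\max_{i\in I_k}w_i\Pto0$ (conditional Chebyshev), while both the truncation bias and the tail part are $\le C'M^{-(q/2-1)}$ by conditional Markov, using $\xi_i\ind{\xi_i>M}\le M^{1-q/2}\xi_i^{q/2}$ and $\E[\xi_i^{q/2}\mid X_i]\le C\sigma_{\min}^{-q}$. Letting $n\to\infty$ and then $M\to\infty$ gives $\sum_{i\in I_k}\wh\g_i^2\eps_i^2=(1+o_p(1))\sum_{i\in I_k}\wh\g_i^2\sigma_i^2$, and summing the two non-negative folds yields $\sum_i\wh\g_i^2\eps_i^2=(1+o_p(1))\,s^2(\wh\g)$.

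For $\sum_i\wh\g_i^2\Delta_i^2$ I would invoke standard local-polynomial analysis of the pilot-window regression (as for \cref{prop:B}). Writing $H^{(k)}$ for the hat matrix of the fold-$k$ window fit (a projection of fixed rank $p$) and $\mu^{(k)}:=(\mu_{W_i}(X_i))_{i\in I_k}$, one has exactly $\Delta_{I_k}=(I-H^{(k)})\mu^{(k)}-H^{(k)}\eps_{I_k}$; since $\mu_w\in\mathcal C^3$ and $X_i$ has a density bounded away from $0$ near $c$, under \eqref{eq:ell_decay} (in particular $\ell\to0$, $n\ell\to\infty$) the bias term obeys $\|(I-H^{(k)})\mu^{(k)}\|_\infty=O(\ell^2)$ (the window equivalent-kernel weights have bounded $\ell_1$ norm). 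Hence $\sum_i\wh\g_i^2\Delta_i^2\le 2\,O(\ell^4)\|\wh\g\|_2^2+2\|\wh\g\|_\infty^2\sum_k\eps_{I_k}^\top H^{(k)}\eps_{I_k}$; the first term is $o_p(\|\wh\g\|_2^2)$ since $\ell\to0$, and $\eps_{I_k}^\top H^{(k)}\eps_{I_k}=O_p(1)$ because $\E[\eps_{I_k}^\top H^{(k)}\eps_{I_k}\mid\{X_i\}]=\tr\!\big(H^{(k)}\diag(\sigma_i^2)\big)\le p\,\sigma_{\max}^2$, so by \cref{coro:lindeberg} the second term is $o_p(\|\wh\g\|_2^2)$ too. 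With the first and third terms controlled, Cauchy--Schwarz disposes of the cross term and the lemma follows.

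The main obstacle is the interaction of the weak moment hypothesis with the data dependence of the weights: with only $q>2$ moments the error-squared term cannot be bounded via a second moment, so a truncation argument is unavoidable, and $\wh\g_i$ is in general not independent of $\eps_i$ — it is precisely the cross-fitting in \cref{algo:PLRD} that restores, fold by fold, the conditional-mean-zero and conditional-independence structure needed for the weighted weak law. By comparison, the control of $\Delta_i$ is routine local-polynomial bookkeeping and parallels \cref{prop:B}.
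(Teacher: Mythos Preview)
Your proof is correct and takes a genuinely different route from the paper's. The paper decomposes the fitted residuals $\wh\eps_i=Y_i-\wt X_i\wh\beta$ around the \emph{best linear projection} $\eps_i^*=Y_i-\wt X_i\beta^*$, so that the regression-error piece $\wt X_i^\top(\beta^*-\wh\beta)$ is killed immediately via $\wh\beta\Pto\beta^*$ (White), and the work goes into a truncation-based weighted law of large numbers for $\sum_i\wh\g_i^2\eps_i^{*2}/\sum_i\wh\g_i^2\sigma_i^2\Pto1$ --- essentially the same truncation argument you give. You instead decompose around the \emph{true} regression function, $r_i=\eps_i+\Delta_i$ with $\eps_i=Y_i-\mu_{W_i}(X_i)$, which buys a cleaner main term ($\E[\eps_i^2\mid X_i]=\sigma_i^2$ exactly) at the price of having to control $\Delta_i=\mu_{W_i}(X_i)-\wh\mu_{W_i}(X_i)$ explicitly; you do this via the hat-matrix split $(I-H)\mu-H\eps$ and standard local-polynomial facts (bounded-$\ell_1$ equivalent kernels for the $O(\ell^2)$ bias, fixed rank for the $O_p(1)$ noise quadratic form). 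Your route leans more explicitly on the cross-fitting structure to justify the conditioning in the weighted WLLN; the paper's route is more compact but, strictly speaking, must still absorb the window approximation error $\mu(X_i)-\wt X_i\beta^*=O(\ell^2)$ to reconcile $\E[\eps_i^{*2}\mid X_i]$ with $\sigma_i^2$, a step your decomposition sidesteps cleanly.
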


\begin{proof}[Proof of \cref{lemma:ehw}]
   In \cref{algo:PLRD}, we use sample-splitting and fit regression on either side of the threshold; here we assume for simplicity in notation that we fit the regression of $Y_i$ on $X_i$ globally (our argument applied on either side of the threshold will then complete the proof). Also assume without loss of generality that $c=0$. Write $\wt{X}_i=(1,X_i)$, $\wh\eps_i=Y_i-\wh\mu(X_i)=Y_i-\wt{X}_i\wh\beta$ and $\eps_i^* = Y_i - \mu(X_i)=Y_i-\wt{X}_i\beta^*$ where $\wt{X}\beta^*$ is the best linear projection of $\mu(X)=\E[Y\mid X]$ onto $\text{span}\{1,X\}$. It follows from \citet{white1980heteroskedasticity} that $\wh\beta\Pto \beta^*$.
   Next, write
   \begin{equation}\label{eqn:sum-of-sq}
       \sum_{i=1}^n\wh\g_i^2\wh\eps_i^2 = \sum_{i=1}^n\wh\g_i^2\eps_i^{*2} + 2\sum_{i=1}^n \wh\g_i^2 \eps_i^{*}\wt{X}_i^\top(\beta^*-\wh\beta)+\sum_{i=1}^n \wh\g_i^2(\wt{X}_i^\top(\beta^*-\wh\beta))^2.
   \end{equation}
   On the other hand, note that $\sum_{i=1}^n \wh\g_i^2\sigma_i^2\ge \sigma^2_{\min}\sum_{i=1}^n \wh\g_i^2$. Therefore,
   $$\frac{\sum_{i=1}^n \wh\g_i^2(\wt{X}_i^\top(\beta^*-\wh\beta))^2}{\sum_{i=1}^n \wh\g_i^2\sigma_i^2}\lesssim \left\|(\wh\beta-\beta^*)^\top\frac{ |\wh\g_i| \wt{X}_i}{\|\wh\g_i\|_2}\right\|_2^2=\o(1),$$
   since $\wh\g_i=0$ for $|X_i|>\ell$. For the cross-term, we apply Cauchy-Schwarz inequality to write
   $$\frac{\left|\sum_{i=1}^n \wh\g_i^2 \eps_i^{*}\wt{X}_i^\top(\beta^*-\wh\beta)\right|}{\sum_{i=1}^n \wh\g_i^2\sigma_i^2}\le \left(\frac{\sum_{i=1}^n\wh\g_i^2\eps_i^{*2}}{\sum_{i=1}^n\wh\g_i^2\sigma_i^2}\right)^{1/2}\left\|(\wh\beta-\beta^*)^\top\frac{ |\wh\g_i| \wt{X}_i}{\|\wh\g_i\|_2}\right\|_2.$$
   We can handle the second term exactly as above. For the first part, we show below that
   \begin{equation}\label{eqn:weighted-LLN}
       \sum_{i=1}^n \wh\g_i^2 \eps_i^{*2}/\sum_{i=1}^n \wh\g_i^2\sigma_i^2\Pto 1.
   \end{equation}
   As a consequence, we can continue from \eqref{eqn:sum-of-sq} to write
   $$\frac{\sum_{i=1}^n\wh\g_i^2\wh\eps_i^2}{\sum_{i=1}^n\wh\g_i^2\sigma_i^2}=\frac{\sum_{i=1}^n\wh\g_i^2\eps_i^{*2}}{\sum_{i=1}^n\wh\g_i^2\sigma_i^2}+\o(1)+\o(1),$$
   which completes the proof.
   To prove \eqref{eqn:weighted-LLN}, fix any $\eps>0$ and $\eta>0$. Define $a_{n,i}=\wh\g_i^2/\sum_{i=1}^n \wh\g_i^2$, $1\le i\le n$, and write $\Delta_i=\eps_i^{*2}-\sigma_i^2$. It follows from the assumption $\sup_x \E[|Y_i-\mu(X_i)|^q\mid X_i=x]<\infty$ and Jensen's inequality that $\E[|\Delta_i|^{q/2} | X_i=x]\le C$ for some universal constant $C\in (0,\infty)$. Then,
   \begin{align*}
      &\E\left[\left| \sum_{i=1}^n a_{n,i}\Delta_i\ind{|\Delta_i|>M}\right|\mid X_1,\dots,X_n\right]\le \sum_{i=1}^n a_{n,i}\E\left[\left| \Delta_i\right|\ind{|\Delta_i|>M}\mid X_1,\dots,X_n\right]\\
     \tag{Hölder's inequality} &\le \sum_{i=1}^n a_{n,i}\left(\E\left[ \left|\Delta_i\right|^{q/2}\mid X_i\right]\right)^{2/q}\P\left(|\Delta_i|>M \mid X_i\right)^{1-2/q}
      \\
   \tag{Markov's inequality}   &\le \sum_{i=1}^n a_{n,i}\frac{1}{M^{q/2-1}}\E\left[ \left|\Delta_i\right|^{q/2}\mid X_i\right]\\
 &    \le \frac{C}{M^{q/2-1}}.
   \end{align*}
   Since $q>2$, we can pick $M$ large enough so that  $$\P\left(\left| \sum_{i=1}^n a_{n,i}\Delta_i\ind{|\Delta_i|>M}\right|>\eps/2\right)<\eta/2,$$
   and $|\E[\Delta_i\ind{|\Delta_i|\le M}|\le\eps/4$ (the latter is possible since $\lim_{M\to\infty}\E[\Delta_i\ind{|\Delta_i|\le M}]= \E[\Delta_i]=0$ by Dominated Convergence Theorem).
   Consequently, by Chebyshev's inequality, 
   $$\P\left(\left| \sum_{i=1}^n a_{n,i}\Delta_i\ind{|\Delta_i|\le M}\right|>\eps/2\,\bigg|\, \{X_i\}\right)\lesssim \sum_{i=1}^n a_{n,i}^2 M^2 \le M^2\max_{1\le i\le n}a_{n,i}<\eta/2,$$
   for all sufficiently large $n$, since we have $\max_{i\le n}a_{n,i}=\o(1)$ from \cref{coro:lindeberg}. Putting these together, we obtain $\sum_{i=1}^n \wh\g_i^2(\eps_i^{*2}-\sigma_i^2)/\sum_{i=1}^n \wh\g_i^2=\o(1)$. Since $0\le \sigma^2_{\min}\le \sigma_i^2$, this finishes the proof of  \eqref{eqn:weighted-LLN}.
\end{proof}

\end{document}